\newtheorem{lemma}{Lemma}
\newtheorem{thm}{Theorem}
\newtheorem{assum}{Assumption}
\newtheorem{remark}{Remark}
\DeclareMathOperator*{\argmin}{argmin}
\DeclareMathOperator*{\sgn}{sgn}
\DeclareMathOperator*{\diag}{diag}
\DeclareMathOperator*{\cov}{cov}
\DeclareMathOperator*{\var}{var}
\newcommand{\bm}[1]{\mbox{\boldmath{$#1$}}}
\numberwithin{equation}{section}
\title{Quasi-maximum Likelihood Inference for Linear Double Autoregressive Models}
\author{Hua Liu, Songhua Tan and Qianqian Zhu\\
\emph{School of Statistics and Management}\\
\emph{Shanghai University of Finance and Economics}
}
\date{}
\begin{document}
\maketitle
\begin{abstract}
This paper investigates the quasi-maximum likelihood inference including estimation, model selection and diagnostic checking for linear double autoregressive (DAR) models, where all asymptotic properties are established under only fractional moment of the observed process. 
We propose a Gaussian quasi-maximum likelihood estimator (G-QMLE) and an exponential quasi-maximum likelihood estimator (E-QMLE) for the linear DAR model, and establish the consistency and asymptotic normality for both estimators. 
Based on the G-QMLE and E-QMLE, two Bayesian information criteria are proposed for model selection, and two mixed portmanteau tests are constructed to check the adequacy of fitted models. 
Moreover, we compare the proposed G-QMLE and E-QMLE with the existing doubly weighted quantile regression estimator in terms of the asymptotic efficiency and numerical performance.  
Simulation studies illustrate the finite-sample performance of the proposed inference tools, and a real example on the Bitcoin return series shows the usefulness of the proposed inference tools. 
\end{abstract}

{\it Keywords:} Double autoregressive models; Model selection; Portmanteau test; Quasi-maximum likelihood estimation. 

\newpage

\section{Introduction}\label{Introduction}

Many conditional heteroscedastic models have been proposed to capture the time-varying volatility of financial and economic time series, with the autoregressive conditional heteroscedastic (ARCH) and generalized autoregressive conditional heteroscedastic (GARCH) models proving popular \citep{Engle1982,Bollerslev1986}. 
However, the conditional mean and conditional heteroscedasticity usually appear simultaneously in time series, and ignoring the conditional mean can lead to an inaccurate inference for the volatility \citep*{Li_Ling_McAleer2002}. 
Therefore, it is of vital importance to jointly model the conditional mean and volatility. 
The linear double autoregressive (DAR) model was proposed by \cite*{Zhu2018_LDAR} for this purpose, having the form
\begin{equation}\label{LDAR}
	y_{t}=\sum_{i=1}^{p} \alpha_{i} y_{t-i}+\eta_{t}\left(\omega+\sum_{i=1}^{p}\beta_{i} |y_{t-i}|\right),
\end{equation}
where $\omega>0$, $\beta_i\geq 0$ for $1\leq i\leq p$, and $\{\eta_t\}$ are independent and identically distributed ($i.i.d.$) random variables. Model \eqref{LDAR} assumes a linear structure for the conditional standard deviation, which makes it less sensitive to extreme values, thus leading to a more robust inference than those form models with a linear structure for the conditional variance; see also \cite{Taylor2008} and \cite{Xiao_Koenker2009}. 
Model \eqref{LDAR} has a novel property that it enjoys a larger parameter space than those of the AR and AR-ARCH models \citep*{Zhu2018_LDAR}. For example, with $p=1$, it can be stationary, even if $|\alpha_1|>1$, which is impossible for causal AR and AR-ARCH models. 

A doubly weighted quantile regression estimator (DWQRE) was introduced by \cite*{Zhu2018_LDAR} for model \eqref{LDAR}, with $\omega=1$ for identification. The DWQRE linearly combines the self-weighted quantile regression estimators at multiple quantile levels using weighting matrices, and requires only a fractional moment on the observed process $\{y_t\}$ to establish its asymptotic properties. 
This novel property leads to robust inferences using model \eqref{LDAR}, and thus can be used for heavy-tailed data. 
\cite*{Zhu2018_LDAR} showed that as the total number of quantile levels goes to infinity, the optimal DWQRE can approach the efficiency of the maximum likelihood estimator (MLE) under certain conditions, defined in \eqref{condition of DWQRE} in Section \ref{section2.3-comparison}. However, combining the self-weighted quantile regression estimators at infinite quantile levels is infeasible in practice, and, more importantly, condition \eqref{condition of DWQRE} implies that the DWQRE is, in general, less efficient than the MLE if the conditional mean structure exists. Furthermore, the two-step estimation procedure of the DWQRE is more complex than the one-step MLE or quasi-maximum likelihood estimation (QMLE) methods. 
To the best of our knowledge, no studies have examined the QMLE for model \eqref{LDAR}. This study addresses this gap in the literature. 

Model \eqref{LDAR} was originally motivated by the DAR model proposed by \cite{Ling2004,Ling2007}, which is defined as
\begin{equation}\label{DAR}
	y_{t}=\sum_{i=1}^{p} \alpha_{i} y_{t-i}+\varepsilon_{t}\sqrt{\omega+\sum_{i=1}^{p}\beta_{i} y_{t-i}^2},
\end{equation}
where $\omega>0$, $\beta_i\geq 0$ for $1\leq i\leq p$, and $\{\varepsilon_t\}$ are $i.i.d.$ random variables. Model \eqref{DAR} is a special case of the ARMA-ARCH models in \cite{weiss1986asymptotic}, and has been extended by several studies, including the threshold DAR \citep*{Li_Ling_Zakoian2015MTDAR,Li2016_TDAR}, mixture DAR \citep{Li_Zhu_Liu_Li2017}, linear DAR \citep*{Zhu2018_LDAR}, augmented DAR \citep*{Jiang2020}, and asymmetric linear DAR \citep{Tan_Zhu2022} models.  
In contrast to the linear DAR model \eqref{LDAR}, model \eqref{DAR} assumes a linear structure for the conditional variance, which may make it sensitive to extreme values. 
However, similarly to model \eqref{LDAR}, it enjoys a larger parameter space than those of the causal AR and AR-ARCH models. 
For the estimation of model \eqref{DAR}, \cite{Ling2007} proposed a Gaussian quasi-maximum likelihood estimator (G-QMLE), and established its asymptotic normality under a fractional moment of $y_t$ and $E(\varepsilon_t^4) <\infty$. 
To reduce the moment condition of $\varepsilon_t$ for a more robust estimation, \cite{zhu2013quasi-maximum} proposed an exponential quasi-maximum likelihood estimator (E-QMLE) for model \eqref{DAR}, where its asymptotic normality requires only that $E(\varepsilon_t^2)<\infty$.  
Note that we need only a finite fractional moment of the process $\{y_t\}$ in order to establish the asymptotic normality of the G-QMLE or E-QMLE, yielding a robust estimation for model \eqref{DAR}. 

The G-QMLE and E-QMLE essentially represent the least square estimation and the least absolute deviation estimation, respectively, and are popular for investigating the G-QMLE or E-QMLE for various time series models. For example, \cite{Aue2011} proposed the G-QMLE for random coefficient AR models, \cite{Francq_Zakoian2004} and \cite{Francq2019} studied the G-QMLE for GARCH and ARMA-GARCH models, \cite{Ling2007Self} proposed a self-weighted G-QMLE for ARMA-GARCH models, and \cite{Zhu_Ling2011} investigated the E-QMLE for ARMA-GARCH models. 
Note that the G-QMLE and E-QMLE for the GARCH or ARMA-GARCH models usually require a second or fourth moment condition on the observed process in order to establish the asymptotic normality, whereas we need only a finite fractional moment for the DAR model \eqref{DAR}. 
Therefore, we examine whether the robust property of QMLEs for model \eqref{DAR} can be preserved for the linear DAR model \eqref{LDAR}. Hopefully, the QMLEs for model \eqref{LDAR} are robust and more convenient in terms of computation than the DWQRE. 
Model selection and diagnosis are another two key elements of the classical Box--Jenkins procedure, and hence need to be investigated based on QMLEs for model \eqref{LDAR}. 
This study contributes to the literature in three ways.

First, we propose an E-QMLE for model \eqref{LDAR} in Section \ref{section2.2-E-QMLE}, and establish its asymptotic normality under $E(|y_t|^{\kappa})<\infty$ and $E(\eta_t^2)<\infty$. In particular, to derive the asymptotic normality of the E-QMLE, we adopt the bracketing method of \cite{pollard1985new} to overcome the difficulty of the nondifferentiable and nonconvex objective function; see also \cite{Zhu_Ling2011,zhu2013quasi-maximum}. 
For comparison with the E-QMLE, we introduce the G-QMLE for model \eqref{LDAR} in Section \ref{section2.1-G-QMLE}, and obtain its asymptotic normality under $E(|y_t|^{\kappa})<\infty$, for some $\kappa>0$ and $E(\eta_t^4)<\infty$. 
We also compare the asymptotic efficiency of the E-QMLE with that of the DWQRE in Section \ref{section2.3-comparison}. 
Simulation studies indicate that no estimator dominates in terms of asymptotic efficiency, but the E-QMLE is much more efficient than the G-QMLE, but slightly less efficient than the DWQRE when the data are more heavy tailed.  
Although all three estimators can be used to fit heavy-tailed data with $E(|y_t|^{\kappa})<\infty$, in practice, we suggest choosing the most suitable option according to the fat tailedness of the fitted residuals and the computational complexity. 
Because the proposed E-QMLE offers a good trade-off between robustness and computational complexity, it is preferred for most financial and economic time series with heavy tails.   
The real-data application in Section \ref{section5-realdata} further illustrates this point. 

Second, based on the E-QMLE, we propose a Bayesian information criterion (BIC) for model selection in Section \ref{section2.4-bic}, and show that it enjoys selection consistency without any moment condition on the process $\{y_t\}$. 
As the first stage of the Box--Jenkins procedure, order selection is crucial for fitting time series in practice, and the BIC is widely used to select orders for time series models \citep{Poskitt_Tremayne1983BICinTS,Cryer_Chan2008TSA}.
\cite{Schwarz1978} introduced the BIC for likelihood functions in the exponential family, \cite{machado1990model} extended the BIC to a wide class of likelihood functions, and \cite{machado1993robust} studied the BIC based on objective functions that define M-estimators. 
Note that the E-QMLE belongs to the class of M-estimators, motivating us to propose a BIC that is robust, in line with the estimation procedure. We further show that the proposed BIC is asymptotically consistent in terms of estimating the true order. 
Simulation studies indicate that, even if the sample size is not very large, the proposed BIC exhibits satisfactory performance.

Our third contribution is to construct a robust portmanteau test, in Section \ref{section3-diagnosis}, for checking the adequacy of the fitted models, with no moment conditions imposed on the process $\{y_t\}$. 
It is well known that diagnostic checking is important for time series modeling, and portmanteau tests are commonly used for this purpose; see \cite{Box_Jenkins_Reinsel2008}. For pure conditional mean models, the portmanteau test is usually constructed using the sample autocorrelation functions (ACFs) of the residuals \citep{Ljung_Box1978}, whereas we use the ACFs of the squared or absolute residuals for pure volatility models \citep{Li1994squared,Li2004,Li_Li2005}. 
\cite{Wong2005mixed_portmanteau} proposed a mixed portmanteau statistic for testing the adequacy of fitted time series models using residuals and squared residuals. 
To ensure the robustness of the test based on the E-QMLE, in the sense that its asymptotic properties require only $E(\eta_t^2)<\infty$,
we adopt the ACFs of the absolute rather than the squared residuals to check the adequacy of the volatility part. As a result, we propose a mixed portmanteau test based on the ACFs of the residuals and the absolute residuals to simultaneously detect misspecifications of the conditional mean and volatility in the fitted model. The asymptotic properties of the test are established in Section \ref{section3-diagnosis}.

Section \ref{section4-simulation} evaluates the finite-sample performance of the proposed inference tools by means of simulation, and  a real-data example is given in Section \ref{section5-realdata}. Section \ref{section6-conclusion} concludes the paper. All technical details and additional simulation results are relegated to the Supplementary Material. 
Throughout the paper, $\rightarrow_p$ ($\rightarrow_{\mathcal{L}}$) denotes convergence in probability (distribution), and $o_p(1)$ denotes a sequence of random variables converging to zero in probability.
The data from Section \ref{section5-realdata} and the programs used to analyze them are available from https://github.com/Tansonghua-sufe/Linear-double-autoregression. 

\section{Model estimation}\label{section2}
In this section, we propose an E-QMLE for model \eqref{LDAR} in Section \ref{section2.2-E-QMLE}. Then, we compare the E-QMLE with the G-QMLE and the DWQRE of \cite*{Zhu2018_LDAR} in Sections \ref{section2.1-G-QMLE}--\ref{section2.3-comparison}, respectively.   
Finally, order selection is discussed in Section \ref{section2.4-bic}.

\subsection{Exponential quasi-maximum likelihood estimation}\label{section2.2-E-QMLE}

Let $\bm \theta=(\bm\alpha^\prime,\bm \delta^\prime)^\prime$ be the unknown parameter vector of model \eqref{LDAR}, and let $\bm \theta_{0}=(\bm\alpha^\prime_{0},\bm \delta^\prime_{0})^{\prime}$ be the true parameter vector, where $\bm\alpha=(\alpha_1,\alpha_2,\ldots,\alpha_p)^\prime$ and $\bm \delta=(\omega,\beta_1,\beta_2,\ldots,\beta_p)^\prime$. Denote the parameter space by $\Theta=\Theta_{\alpha}\times\Theta_{\delta}$, where $\Theta_{\alpha} \subset \mathbb{R}^{p}$ and $\Theta_{\delta} \subset \mathbb{R}_{+}^{p+1}$, with $\mathbb{R}_{+} = (0,+\infty)$.
Assume that $\{y_1,\ldots, y_n\}$ are generated by model \eqref{LDAR}, with $\eta_t$ having a zero median and satisfying $E(|\eta_t|)=1$. When $\eta_t$ follows the standard double exponential distribution, the negative conditional log-likelihood function (ignoring a constant) can be written as 
\begin{equation}\label{qmlelog_likelihood}
	L_n^E(\bm \theta) =\dfrac{1}{n-p}\sum_{t=p+1}^{n}\ell_t^E(\bm \theta)\quad \text{and} \quad \ell_t^E(\bm\theta) = \ln h_t(\bm\delta)+\frac{|\varepsilon_t(\bm\alpha)|}{h_t(\bm\delta)},
\end{equation}
where 
\begin{equation}\label{epst_ht}
	\varepsilon_t(\bm\alpha) = y_t-\sum_{i=1}^{p}\alpha_{i}y_{t-i}\quad \text{and} \quad h_t(\bm\delta) = \omega  +\sum_{i=1}^{p}\beta_{i}|y_{t-i}|.
\end{equation} 
Let $\widehat{\bm \theta}_n=\argmin_{\bm \theta \in \Theta} L_n^E(\bm \theta)$. 
Because we do not assume that $\eta_t$ follows the standard double exponential distribution (i.e., the Laplace distribution), $\widehat{\bm \theta}_n$ is called the E-QMLE of $\bm \theta_0$; see also \cite{Zhu_Ling2011,zhu2013quasi-maximum}. 

\begin{assum}\label{assum1-compactness}
	$\Theta$ is compact, with $\underline{\omega} \leq \omega \leq \overline{\omega}$ and $\underline{\beta} \leq \beta_i \leq \overline{\beta}$, for $i=1,\ldots,p$, where $\underline{\omega},\overline{\omega},\underline{\beta},$ and $\overline{\beta}$ are some positive constants, and $\bm \theta_0$ is an interior point in $\Theta$.
\end{assum}
\begin{assum}\label{assum2-stationarity}
	$\{y_t:t=1,2,...\}$ is strictly stationary and ergodic, with $E(|y_t|^{\kappa})< \infty$, for some $0< \kappa \leq 1$. 
\end{assum} 
\begin{assum}\label{assum3EQMLE-moment and density}
	(i) $\eta_t$ has a zero median with $E(|\eta_t|)=1$; 
	(ii) the density function of $\eta_t$ is continuous and positive everywhere on $\mathbb{R}$ satisfying $\sup_{x\in \mathbb{R}} f(x)< \infty$; (iii) $E(\eta_t^2)<\infty$. 
\end{assum}

Assumption \ref{assum1-compactness} is standard in the literature on quasi-maximum likelihood estimation; see also \cite{Ling2007} and \cite{zhu2013quasi-maximum}. 
For general distributions of $\eta_t$, it is difficult to derive a necessary and sufficient condition for the strict stationarity of $y_t$, owing to the nonlinearity of model \eqref{LDAR}. A sufficient condition for Assumption \ref{assum2-stationarity} is given in Theorem 2 of \cite*{Zhu2018_LDAR}, that is, $\sum_{i=1}^{p}\max\{E(|\alpha_i-\beta_i\eta_t|^{\kappa}),E(|\alpha_i+\beta_i\eta_t|^{\kappa})\}<1$, for some $0< \kappa \leq 1$. 
Assumption \ref{assum3EQMLE-moment and density} is a general setup for establishing the consistency and asymptotic normality of the E-QMLE; see also \cite{zhu2013quasi-maximum}. 
\begin{thm}\label{thm1EQMLE-consistency}
	If Assumptions \ref{assum1-compactness}, \ref{assum2-stationarity}, and \ref{assum3EQMLE-moment and density}(i) hold, then $\widehat{\bm\theta}_n \to \bm\theta_0$ almost surely as $n \to \infty$. 	
\end{thm}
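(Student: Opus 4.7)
The plan is to follow the classical two-step route for the consistency of an M-estimator on a compact parameter space. First I would establish a uniform strong law of large numbers for $L_n^E(\bm\theta)$ over $\Theta$, and then show that $\bm\theta_0$ is the unique minimizer of the limit $L^E(\bm\theta):=E[\ell_t^E(\bm\theta)]$. Combined with the compactness of $\Theta$ in Assumption \ref{assum1-compactness}, these two ingredients yield $\widehat{\bm\theta}_n\to\bm\theta_0$ almost surely via the standard M-estimator consistency argument.

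For the uniform convergence, the ergodic theorem (Assumption \ref{assum2-stationarity}) plus a compactness argument reduces the task to verifying the integrable envelope $E[\sup_{\bm\theta\in\Theta}|\ell_t^E(\bm\theta)|]<\infty$. Under Assumption \ref{assum1-compactness}, the uniform bounds $\underline{\omega}\leq h_t(\bm\delta)\leq \overline{\omega}+\overline{\beta}\sum_{i=1}^{p}|y_{t-i}|$ give $|\ln h_t(\bm\delta)|\leq C_1+C_2\ln(1+\sum_{i=1}^{p}|y_{t-i}|)$, which is integrable because $\ln(1+x)=O(x^\kappa)$ at infinity and $E|y_t|^\kappa<\infty$ by Assumption \ref{assum2-stationarity}. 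For the loss piece, writing $\varepsilon_t(\bm\alpha)=\eta_t h_t(\bm\delta_0)+\sum_{i=1}^{p}(\alpha_{0i}-\alpha_i)y_{t-i}$ and using the uniform bounds $h_t(\bm\delta_0)/h_t(\bm\delta)\leq \overline{\omega}/\underline{\omega}+\overline{\beta}/\underline{\beta}$ together with $|y_{t-i}|/h_t(\bm\delta)\leq 1/\underline{\beta}$ dominates $\sup_{\Theta}|\varepsilon_t(\bm\alpha)|/h_t(\bm\delta)$ by $C|\eta_t|+C'$, which is integrable since $E|\eta_t|=1$ from Assumption \ref{assum3EQMLE-moment and density}(i).

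For the identification step, I would condition on $\mathcal{F}_{t-1}:=\sigma(y_{t-1},y_{t-2},\ldots)$. Setting $a_t:=h_t(\bm\delta_0)/h_t(\bm\delta)>0$ and $u_t:=\sum_{i=1}^{p}(\alpha_{0i}-\alpha_i)y_{t-i}/h_t(\bm\delta_0)$, both $\mathcal{F}_{t-1}$-measurable, a direct computation yields
\[
E[\ell_t^E(\bm\theta)-\ell_t^E(\bm\theta_0)\mid \mathcal{F}_{t-1}]=-\ln a_t+a_t\phi(u_t)-1,
\]
where $\phi(u):=E|\eta_t+u|$. The zero-median condition in Assumption \ref{assum3EQMLE-moment and density}(i) implies $\phi(u)\geq \phi(0)=1$, and the elementary inequality $a-\ln a\geq 1$ (equality iff $a=1$) combines to show the right-hand side is non-negative, with equality only if $a_t=1$ and $\phi(u_t)=1$ almost surely. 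The identity $a_t=1$ translates to $h_t(\bm\delta)=h_t(\bm\delta_0)$ a.s., which by the linear independence of $(1,|y_{t-1}|,\ldots,|y_{t-p}|)^\prime$ forces $\bm\delta=\bm\delta_0$. Substituting back, $\phi(u_t)=1$ reduces $u_t$ to a minimizer of $\phi$, hence $u_t=0$ a.s., and a parallel linear-independence argument on $(y_{t-1},\ldots,y_{t-p})^\prime$ yields $\bm\alpha=\bm\alpha_0$.

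The main obstacle is the identification step rather than the stochastic one: the key conditional inequality follows easily from the convexity of $\phi$ and the strict convexity of $a\mapsto a-\ln a$, but transferring the almost-sure equalities $a_t=1$ and $\phi(u_t)=1$ into the parameter identity $\bm\theta=\bm\theta_0$ rests on the non-degeneracy of the regressor vectors, which in turn relies on $\{\eta_t\}$ being non-degenerate (a fact implied by combining $E|\eta_t|=1$ with the zero-median condition). Once this identification is established, the standard consistency result for M-estimators on a compact parameter space, fed with the uniform a.s. convergence of $L_n^E$ to the continuous limit $L^E$ and its unique minimum at $\bm\theta_0$, delivers $\widehat{\bm\theta}_n\to\bm\theta_0$ almost surely.
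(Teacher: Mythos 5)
Your proposal is correct and follows essentially the same route as the paper: an integrable envelope for $\sup_{\bm\theta\in\Theta}|\ell_t^E(\bm\theta)|$ from the fractional moment of $y_t$ and $E|\eta_t|=1$, identification of $\bm\theta_0$ as the unique minimizer of $E[\ell_t^E(\bm\theta)]$ via the zero-median property and the minimization of $x\mapsto\ln x+a/x$, and then the standard compactness argument (the paper packages the last step as a Huber--Wald covering argument, which also requires the continuity-modulus condition $E[\sup_{\bm\theta\in B_\eta(\bm\theta^*)}|\ell_t^E(\bm\theta)-\ell_t^E(\bm\theta^*)|]\to 0$ that your uniform SLLN implicitly needs as well). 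Your single conditional-expectation identity $-\ln a_t+a_t\phi(u_t)-1$ is just a compact rewriting of the paper's two-stage minimization over $\bm\alpha$ and $\bm\delta$, so the two arguments are mathematically the same.
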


Let $\kappa_1=E(\eta_t)$ and $\kappa_2=E(\eta_t^2)-1$. Denote $\bm Y_{1t}=h_t^{-1}(\bm \delta_0)(y_{t-1},\ldots,y_{t-p})^\prime$ and $\bm Y_{2t}=h_t^{-1}(\bm \delta_0)(1,|y_{t-1}|,\ldots,|y_{t-p}|)^\prime$. 
Define the $(2p+1)\times(2p+1)$ matrices 
\begin{equation*}
	\Sigma = \diag \left \{f(0)E(\bm Y_{1t}\bm Y_{1t}^\prime),\frac{1}{2}E(\bm Y_{2t}\bm Y_{2t}^\prime)\right \} \; \text{and} \;
	\Omega = \begin{pmatrix} E(\bm Y_{1t}\bm Y_{1t}^\prime) & \kappa_1 E(\bm Y_{1t}\bm Y_{2t}^\prime) \\ \kappa_1 E(\bm Y_{2t}\bm Y_{1t}^\prime) & \kappa_2 E(\bm Y_{2t}\bm Y_{2t}^\prime) \end{pmatrix}.
\end{equation*}

\begin{thm}\label{thm2EQMLE-normality}
	If Assumptions \ref{assum1-compactness}--\ref{assum3EQMLE-moment and density} hold, then 
	\begin{itemize}
		\item[(i)] $\sqrt{n}(\widehat{\bm\theta}_n-\bm\theta_0)=O_p(1)$; 
		\item[(ii)] $\sqrt{n}(\widehat{\bm\theta}_n-\bm\theta_0) \rightarrow_{\mathcal{L}} N (\bm{0}, \Xi) \; \text{as}\; n\to \infty$, where $\Xi=\Sigma^{-1}\Omega\Sigma^{-1}/4$.
	\end{itemize}
\end{thm}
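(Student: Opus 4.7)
The plan is to establish (i) the $\sqrt{n}$-rate and (ii) asymptotic normality of $\widehat{\bm\theta}_n$ by means of a uniform quadratic expansion of $Z_n(\bm u)=\sum_{t=p+1}^{n}\bigl[\ell_t^E(\bm\theta_0+\bm u/\sqrt{n})-\ell_t^E(\bm\theta_0)\bigr]$, combined with the almost-sure consistency already delivered by Theorem \ref{thm1EQMLE-consistency}. Because $\ell_t^E$ is non-differentiable in $\bm\alpha$ (through $|\varepsilon_t(\bm\alpha)|$) and non-convex, the standard Taylor-plus-convexity toolkit does not apply; instead I would follow the bracketing device of \cite{pollard1985new}, adapted to heteroscedastic time-series models in \cite{Zhu_Ling2011} and \cite{zhu2013quasi-maximum}. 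The minimiser of $Z_n(\cdot)$ is $\widehat{\bm u}_n=\sqrt{n}(\widehat{\bm\theta}_n-\bm\theta_0)$, and the goal is to pin down its limit distribution.

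A direct (sub)differentiation of $\ell_t^E$ at $\bm\theta_0$, using $\varepsilon_t=\eta_t h_t(\bm\delta_0)$, yields the quasi-score $\bm s_t=(-\sgn(\eta_t)\bm Y_{1t}^{\prime},\,(1-|\eta_t|)\bm Y_{2t}^{\prime})^{\prime}$. By Assumption \ref{assum3EQMLE-moment and density}(i) and the independence of $\eta_t$ from $\mathcal{F}_{t-1}$, $\{\bm s_t,\mathcal{F}_t\}$ is a stationary ergodic martingale difference sequence whose covariance computes to $\Omega_2$; the off-diagonal block is $\kappa_1 E(\bm Y_{1t}\bm Y_{2t}^{\prime})$ because $E[\sgn(\eta_t)|\eta_t|]=E(\eta_t)=\kappa_1$. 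A parallel (generalised) second-derivative calculation produces expected Hessian blocks $2f(0)E(\bm Y_{1t}\bm Y_{1t}^{\prime})$ in $\bm\alpha$ and $E(\bm Y_{2t}\bm Y_{2t}^{\prime})$ in $\bm\delta$, so that $2\Sigma_2$ plays the role of the information matrix---this is the origin of the factor $1/4$ appearing in $\Xi_2$.

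The technical core is to show that, uniformly for $\bm u$ in any compact $K\subset\mathbb{R}^{2p+1}$,
\[
Z_n(\bm u)=\bm u^{\prime}\bm T_n+\bm u^{\prime}\Sigma_2\bm u+o_p(1),\qquad \bm T_n:=\frac{1}{\sqrt{n}}\sum_{t=p+1}^{n}\bm s_t.
\]
The $\ln h_t$ and $1/h_t$ pieces are smooth in $\bm\delta$ and handled by a second-order Taylor expansion, with remainders controlled through $E(|y_t|^{\kappa})<\infty$ together with the lower bound $\underline{\omega}>0$ in Assumption \ref{assum1-compactness}. The non-smooth piece $|\varepsilon_t(\bm\alpha)|/h_t(\bm\delta)$ is expanded via Knight's identity
\[
|x-y|-|x|=-y\sgn(x)+2\int_{0}^{y}\bigl[I(x\le s)-I(x\le 0)\bigr]\,ds,
\]
applied with $x=\varepsilon_t(\bm\alpha_0)$ and $y$ linear in $\bm u/\sqrt{n}$: the first term reproduces the $\bm\alpha$-block of $\bm u^{\prime}\bm T_n$, while the indicator integral, upon taking conditional expectations under the bounded continuous density of Assumption \ref{assum3EQMLE-moment and density}(ii), contributes the $\bm\alpha$-block of $\bm u^{\prime}\Sigma_2\bm u$ through $E[I(\eta_t\le s)-I(\eta_t\le 0)\mid\mathcal{F}_{t-1}]=sf(0)+o(s)$. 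The hard part is to upgrade these pointwise statements to uniform-in-$\bm u$ ones: I would cover $K$ by finitely many brackets of radius $\varepsilon/\sqrt{n}$ and control a maximal inequality on the centred indicator process with envelopes whose $L^{2}$-norm is bounded using only $E(\eta_t^2)<\infty$ and $E(|y_t|^{\kappa})<\infty$. Balancing those bounds under such a slim moment budget---no second moment of $y_t$, only fractional---is the principal obstacle.

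Once the expansion is in hand, part (i) follows by localisation: Theorem \ref{thm1EQMLE-consistency} places $\widehat{\bm\theta}_n$ in an arbitrarily small neighbourhood of $\bm\theta_0$ eventually, and on the sphere $\|\bm u\|=M$ inside that neighbourhood the expansion yields $Z_n(\bm u)\ge M^{2}\lambda_{\min}(\Sigma_2)-M\|\bm T_n\|+o_p(1)$, which for $M$ large exceeds $Z_n(\bm 0)=0$ with probability at least $1-\varepsilon$, forcing $\widehat{\bm u}_n=O_p(1)$. With this rate, the Bahadur representation $\widehat{\bm u}_n=-\tfrac{1}{2}\Sigma_2^{-1}\bm T_n+o_p(1)$ drops out. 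A final application of Billingsley's martingale CLT to $\{\bm s_t,\mathcal{F}_t\}$ gives $\bm T_n\rightarrow_{\mathcal{L}}N(\bm 0,\Omega_2)$, and Slutsky combines everything into $\sqrt{n}(\widehat{\bm\theta}_n-\bm\theta_0)\rightarrow_{\mathcal{L}}N(\bm 0,\tfrac{1}{4}\Sigma_2^{-1}\Omega_2\Sigma_2^{-1})=N(\bm 0,\Xi_2)$, as claimed.
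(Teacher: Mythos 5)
Your proposal is correct and follows essentially the same route as the paper: a quadratic expansion of the reparameterised objective via Knight's identity for the non-smooth term and Taylor expansion for the smooth terms, with uniformity obtained through Pollard's bracketing/chaining argument (the paper's Lemmas \ref{lemma2}--\ref{lemma3}), followed by the $H_n(\widehat{\bm u}_n)\le 0$ localisation for the $\sqrt{n}$-rate, the Bahadur representation $\widehat{\bm u}_n=-\tfrac{1}{2}\Sigma_2^{-1}\bm T_n+o_p(1)$, and the martingale CLT. Your identification of the score, of $\Omega_2$, and of $2\Sigma_2$ as the effective information matrix all match the paper's computations.
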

Theorem \ref{thm2EQMLE-normality} shows that the asymptotic normality of the proposed E-QMLE is established under a fractional moment of $y_t$, with $E(\eta_t^2)<\infty$, for model \eqref{LDAR}. Therefore, the E-QMLE is robust to heavy-tailed data, and can be used for $E(\eta_t^4)=\infty$ and $E(\eta_t^2)<\infty$, where the G-QMLE introduced in Section \ref{section2.1-G-QMLE} is no longer applicable. 
In addition, if $\kappa_1=0$, then the asymptotic covariance in Theorem \ref{thm2EQMLE-normality} reduces to the block diagonal matrix $\Gamma_0=\diag\{4f^2(0)E(\bm Y_{1t}\bm Y_{1t}^\prime), \kappa_2^{-1}E(\bm Y_{2t}\bm Y_{2t}^\prime)\}^{-1}$. 

To estimate the asymptotic covariance of $\widehat{\bm\theta}_n$, define the residuals fitted by the E-QMLE as $\widehat{\eta}_t=\varepsilon_t(\widehat{\bm\alpha}_n)/h_t(\widehat{\bm \delta}_n)$, and then $\widehat{\kappa}_1=(n-p)^{-1}\sum_{t=p+1}^{n}\widehat{\eta}_t$ and $\widehat{\kappa}_2=(n-p)^{-1}\sum_{t=p+1}^{n}\widehat{\eta}_t^2-1$. We can estimate the density function $f(x)$ using the kernel density estimator $\widehat{f}(x)=(nb_n)^{-1}\sum_{t=p+1}^{n}K((x-\widehat{\eta}_{t})/b_n)$, 
where $K(\cdot)$ is the kernel function and $b_n$ is the bandwidth. Finally, we use sample averages to replace the expectations in $\Sigma$ and $\Omega$, $\widehat{\bm\delta}_n$ to replace $\bm \delta_0$, $\widehat{\kappa}_i$ to replace $\kappa_i$, for $i=1,2$, and $\widehat{f}(0)$ to replace $f(0)$. Then, we can obtain estimates of $\Sigma$ and $\Omega$, denoted by $\widehat{\Sigma}_{n}$ and $\widehat{\Omega}_{n}$, respectively. 
Remark \ref{remark1} provides basic conditions on the kernel function $K(\cdot)$ and the bandwidth $b_n$ such that $\widehat{\Sigma}_{n}$ and $\widehat{\Omega}_{n}$ are consistent estimators of $\Sigma$ and $\Omega$, respectively; see the Supplementary Material for the proof. 

\begin{remark}\label{remark1}
	Suppose the conditions in Theorem \ref{thm2EQMLE-normality} hold and $\sup_x|f^\prime(x)|<\infty$. If there exists a positive number $L$ such that $|K(x)-K(y)|\leq L|x-y|$, for any $x,y$, and $b_n\rightarrow 0$ and $nb_n^{4}\rightarrow\infty$ as $n\rightarrow\infty$, then $\widehat{\Sigma}_{n}\rightarrow_p\Sigma$ and $\widehat{\Omega}_{n}\rightarrow_p\Omega$ as $n\rightarrow\infty$; see also Corollary 1 in \cite{zhu2013quasi-maximum}.
\end{remark}
Numerous choices for the kernel function $K(\cdot)$ and bandwidth $b_n$ satisfy the conditions in Remark \ref{remark1}. 
The kernel density estimator is robust to the selection of the kernel functions, but sensitive to that of the bandwidths. Thus, we suggest using the optimal bandwidth related to the selected kernel function in practice. 
Here, we use the Gaussian kernel function and its rule-of-thumb bandwidth $b_n=\text{0.9}n^{-1/5}\min\{s,\widehat{R}/\text{1.34}\}$ for our numerical studies in Sections \ref{section4-simulation}--\ref{section5-realdata}, where $s$ and $\widehat{R}$ are the sample standard deviation and interquartile of the residuals $\{\widehat{\eta}_{t}\}$, respectively; see also \cite*{Zhu2018_LDAR}.

\subsection{Comparison with the Gaussian quasi-maximum likelihood estimation}\label{section2.1-G-QMLE}

A Gaussian quasi-maximum likelihood estimation provides a popular QMLE called the G-QMLE. 
Assuming that $\{y_1,\ldots, y_n\}$ are generated by model \eqref{LDAR}, with $E(\eta_t)=0$ and $\var(\eta_t)=1$, 
the G-QMLE of $\bm \theta_0$ is defined as $\widetilde{\bm \theta}_n=(\widetilde{\bm\alpha}_n^\prime,\widetilde{\bm\delta}_n^{\prime})^\prime=\argmin_{\bm \theta \in \Theta} L_n^G(\bm \theta)$, where 
\[
L_n^G(\bm \theta) =\dfrac{1}{n-p}\sum_{t=p+1}^{n}\ell_t^G(\bm \theta)\quad \text{and} \quad \ell_t^G(\bm\theta) = \ln h_t(\bm\delta)+\frac{\varepsilon_t^2(\bm\alpha)}{2h_t^2(\bm\delta)},
\]
and $\varepsilon_t(\bm\alpha)$ and $h_t(\bm\delta)$ are defined as in \eqref{epst_ht}; see also \cite{Ling2007} and \cite{Tan_Zhu2022}. 
Instead of Assumption \ref{assum3EQMLE-moment and density} for the E-QMLE, we require the following assumption to establish the asymptotic properties of the G-QMLE. 
\begin{assum}\label{assum3G-QMLE-moment and density} 
	(i) $\eta_t$ has a zero mean and unit variance; 
	(ii) $E(\eta_t^{4})<\infty$. 
\end{assum}

Let $\kappa_3=E(\eta_t^{3})$ and $\kappa_4=E(\eta_t^{4})-1$. 
Define the $(2p+1)\times(2p+1)$ matrices 
\[
\Sigma_1 = \diag \left \{E(\bm Y_{1t}\bm Y_{1t}^{\prime}),2E(\bm Y_{2t}\bm Y_{2t}^{\prime})\right \} \; \text{and} \;
\Omega_1 = \begin{pmatrix} E(\bm Y_{1t}\bm Y_{1t}^{\prime}) & \kappa_3 E(\bm Y_{1t}\bm Y_{2t}^{\prime}) \\ \kappa_3 E(\bm Y_{2t}\bm Y_{1t}^{\prime}) & \kappa_4 E(\bm Y_{2t}\bm Y_{2t}^{\prime}) \end{pmatrix}.
\]
Similarly to Theorem 2 of \cite{Tan_Zhu2022}, we can show the consistency and asymptotic normality of the G-QMLE $\widetilde{\bm \theta}_n$ under fractional moments of $y_t$ for model \eqref{LDAR}. 

\begin{thm}\label{thm1G-QMLE-asymptotics}
	Suppose that Assumptions \ref{assum1-compactness}, \ref{assum2-stationarity}, and \ref{assum3G-QMLE-moment and density}(i) hold. Then, 
	\begin{itemize}
		\item[(i)] $\widetilde{\bm \theta}_{n} \rightarrow_{p} \bm \theta_{0}$ as $n\to \infty$; 
		\item[(ii)] furthermore, if Assumption \ref{assum3G-QMLE-moment and density}(ii) holds and the matrix $D=\left(\begin{matrix}
			1 & \kappa_3\\
			\kappa_3& \kappa_4
		\end{matrix}
		\right)$ is positive definite, then $\sqrt{n}(\widetilde{\bm\theta}_{n}-\bm\theta_{0}) \rightarrow_{\mathcal{L}} N(\bm 0, \Xi_1)$ as $n\to \infty$, where $\Xi_1=\Sigma_1^{-1}\Omega_1\Sigma_1^{-1}$.
	\end{itemize}
\end{thm}

We next compare the E-QMLE with the G-QMLE. Because the two estimators need different conditions on the innovation, we assume $\kappa_1=E(\eta_t)=0$, and reparametrize model \eqref{LDAR} under Assumption \ref{assum3EQMLE-moment and density} to ensure that the innovation term has a zero mean and unit variance. Specifically, we consider the following reparametrized model: 
\begin{equation}\label{LDAR-GQMLE}
	y_{t}=\sum_{i=1}^{p} \alpha_{i} y_{t-i}+\eta_{t}^{*}\left(\omega^{*}+\sum_{i=1}^{p}\beta_{i}^{*} |y_{t-i}|\right),
\end{equation}
where $\eta_t^{*}=\eta_t/\sqrt{E(\eta_t^{2})}$ and $\bm\delta^{*}=(\omega^{*},\beta_1^{*},\beta_2^{*},\ldots,\beta_p^{*})^\prime=\bm\delta\sqrt{E(\eta_t^{2})}$. Denote $\bm \theta^{*}=(\bm\alpha^\prime,\bm\delta^{*\prime})^\prime$ as the unknown parameter vector of model \eqref{LDAR-GQMLE} and $\bm \theta_{0}^{*}=(\bm\alpha^\prime_{0},\bm \delta_{0}^{*\prime})^{\prime}$ as the true value. Let $\widetilde{\bm\theta}_n^{*}$ be the G-QMLE of $\bm\theta_0^{*}$, $\kappa_3^*=E(\eta_t^{*3})$, $\kappa_4^*=E(\eta_t^{*4})-1$, and $\bm Y_{it}^*=\bm Y_{it}/\sqrt{E(\eta_t^{2})}$, for $i=1$ and 2. 
If $E(\eta_t^{*4})<\infty$ and Assumptions \ref{assum1-compactness}--\ref{assum2-stationarity} hold, then by Theorem \ref{thm1G-QMLE-asymptotics}, we have $\sqrt{n}(\widetilde{\bm\theta}_n^{*}-\bm\theta_0^{*})\to_{\mathcal{L}} N(\bm 0,\Xi_1^*)$ as $n\to\infty$, where $\Xi_1^*=(\Sigma_1^*)^{-1}\Omega_1^*(\Sigma_1^*)^{-1}$, with
\[
\Sigma_1^* = \diag \left \{E(\bm Y_{1t}^*\bm Y_{1t}^{*\prime}),2E(\bm Y_{2t}^*\bm Y_{2t}^{*\prime})\right \} \; \text{and} \;
\Omega_1^* = \begin{pmatrix} E(\bm Y_{1t}^*\bm Y_{1t}^{*\prime}) & \kappa_3^* E(\bm Y_{1t}^*\bm Y_{2t}^{*\prime}) \\ \kappa_3^* E(\bm Y_{2t}^*\bm Y_{1t}^{*\prime}) & \kappa_4^* E(\bm Y_{2t}^*\bm Y_{2t}^{*\prime}) \end{pmatrix}.
\] 
Note that $\bm\theta_0=R\bm\theta_0^{*}$, where $R=\diag\{I_p, [E(\eta_t^{2})]^{-1/2}I_{p+1}\}$, with $I_m$ being the $m\times m$ identity matrix. Then, we have $\sqrt{n}(\widetilde{\bm\theta}_n-\bm\theta_0)\to_{\mathcal{L}} N(\bm 0,R\Xi_1^*R^{\prime})$ as $n\to\infty$, where $\widetilde{\bm\theta}_n=R\widetilde{\bm\theta}_n^{*}$ is the G-QMLE of $\bm\theta_0$ for model \eqref{LDAR} under Assumption \ref{assum3EQMLE-moment and density}. As a result, it is sufficient to compare the asymptotic covariance $\Xi$ with $R\Xi_1^*R^{\prime}$ for some specific cases: 
\begin{itemize}
	\item[(i)] If $\eta_t$ follows the standard Laplace distribution, such that $\kappa_1=0, \kappa_2=1$, and $f(0)=1/2$, then the E-QMLE reduces to the MLE, with the asymptotic covariance $\Xi=\diag\{[E(\bm Y_{1t}\bm Y_{1t}^\prime)]^{-1}, [E(\bm Y_{2t}\bm Y_{2t}^\prime)]^{-1}\}$ attaining the Cram\'{e}r--Rao lower bound, indicating that the E-QMLE is more efficient than the G-QMLE.
	\item[(ii)] If $\eta_{t}$ is standard normal, such that $\kappa_3=0, \kappa_4=2$, and $\Omega_1=\Sigma_1$, then the G-QMLE reduces to the MLE with the asymptotic covariance $R\Xi_1^*R^{\prime}=\Sigma_1^{-1}$, attaining the Cram\'{e}r--Rao lower bound, and thus the G-QMLE is more efficient than the E-QMLE. 
	\item[(iii)] For other situations of $\eta_{t}$, it is difficult to compare the asymptotic relative efficiency (ARE) of $\widehat{\bm\theta}_n$ to $\widetilde{\bm\theta}_n$. However, given the true parameter vector $\bm\theta_0$ and the density function $f(\cdot)$, we can obtain a theoretical value of $f(0)$, and estimate the matrices in $\Xi$ and $R\Xi_1^*R^{\prime}$ using sample averages and empirical values of $\kappa_i$ or $\kappa_i^*$, based on a large generated sequence. Then, the ARE of $\widehat{\bm\theta}_n$ to $\widetilde{\bm\theta}_n$ can be calculated by $\text{ARE}(\widehat{\bm\theta}_n,\widetilde{\bm\theta}_n)=(|R\Xi_1^*R^{\prime}|/|\Xi|)^{1/(2p+1)}$, where $|\cdot|$ is the determinant of a matrix; see \cite{Serfling2009}.
	The simulation results in Section \ref{section4-simulation} indicate that the E-QMLE is asymptotically more efficient than the G-QMLE when the data are more heavy tailed. 
\end{itemize}

\subsection{Comparison with the DWQRE}\label{section2.3-comparison}

Consider the DWQRE of \cite*{Zhu2018_LDAR}. With the identification condition that $\omega=1$ in model \eqref{LDAR}, they consider the following reparametrized model:
\begin{equation}\label{reparameter}
	y_{t}=\sum_{i=1}^{p} \alpha_{i} y_{t-i}+\varepsilon_{t}\left(1+\sum_{i=1}^{p}\beta^{\star}_{i} |y_{t-i}|\right),
\end{equation}
where $\varepsilon_{t}=\omega\eta_t$ and $\beta_i^{\star}=\beta_i/\omega$. 
Denote $\bm\theta^{\star}=(\bm\alpha^\prime,\bm\beta^{\star\prime})^\prime$ as the unknown parameter vector of model \eqref{reparameter} and $\bm\theta^{\star}_0 = (\bm\alpha_0^{\prime},\bm\beta_0^{\star\prime})^\prime$ as the true value, where $\bm\beta^{\star} = (\beta_{1}^{\star},\ldots,\beta_{p}^{\star})^\prime$. 
The DWQRE of $\bm\theta_0^{\star}$ is defined as $\check{\bm\theta}^{\star}_n=\sum_{k=1}^K \pi_k \check{\bm\theta}_{\tau_k n}^{\star}$, which combines the self-weighted quantile regression estimators $\check{\bm\theta}_{\tau_k n}^{\star}$ at $K$ quantile levels using weighting matrices $\pi_k$, the optimal choices of which $\pi_k^{opt}$ are defined in Theorem 4 of \cite*{Zhu2018_LDAR}. 
The asymptotic properties of the DWQRE are also established under the finite fractional moment of $\{y_t\}$, which makes it possible to handle heavy-tailed data. 
Specifically, \cite*{Zhu2018_LDAR} show that the optimal DWQRE $\check{\bm\theta}^{\star opt}_n=\sum_{k=1}^K \check{\pi}_k^{opt} \check{\bm\theta}_{\tau_k n}^{\star}$ satisfies $\sqrt{n}(\check{\bm\theta}^{\star opt}_n-\bm\theta^{\star}_0)\to_{\mathcal{L}} N(\bm 0,\Xi_2)$ as $n\to\infty$, where $\Xi_2$ is defined in their Theorem 4, and $\check{\pi}_k^{opt}$ are the estimated optimal weighting matrices.  
Moreover, they verify that $\check{\bm\theta}^{\star opt}_n$ approaches the efficiency of the MLE if $K\to\infty$ and the following condition holds:
\begin{equation}\label{condition of DWQRE}
	E(h_t^{-1}\bm{Y}_{1t})=0 \quad\text{and}\quad E(\bm{Y}_{at}^{\prime}\bm{Y}_{1t})=0,
\end{equation}
where $\bm{Y}_{at}=h_t^{-1}(|y_{t-1}|,\ldots,|y_{t-p}|)^\prime$, with $h_t=1  +\sum_{i=1}^{p}\beta_{i0}^{\star}|y_{t-i}|$.   
Condition \eqref{condition of DWQRE} implies that the parameters in the conditional mean (i.e., $\bm\alpha$) and conditional scale (i.e., $\bm\beta^{\star}$) can be estimated separately, without loss of efficiency, which is satisfied when $\eta_t$ is symmetrically distributed about zero and all $\alpha_i$ in model \eqref{reparameter} are zero. 
However, for other general cases, the asymptotic covariance $\Xi_2$ cannot attain the Cram\'{e}r--Rao lower bound.  

Next, we compare the E-QMLE with the DWQRE. 
Note that if $\eta_t$ follows the standard Laplace distribution, then the E-QMLE reduces to the MLE, such that the asymptotic covariance $\Xi$ attains the Cram\'{e}r--Rao lower bound. Therefore, the E-QMLE is asymptotically more efficient than the DWQRE if $\eta_t$ follows the standard Laplace distribution. 
In contrast, the DWQRE is asymptotically more efficient than the E-QMLE only if Condition \eqref{condition of DWQRE} holds, with infinite quantile levels used for estimation, and $\eta_t$ does not follow the standard Laplace distribution. 
For general comparisons, denote the E-QMLE of model \eqref{reparameter} as $\widehat{\bm\theta}_n^{\star}=g(\widehat{\bm\theta}_n)$, where $g: \mathbb{R}^p\times\mathbb{R}_{+}^{p+1}\to \mathbb{R}^p\times\mathbb{R}_{+}^{p}$ is a measurable transformation, such that $g(\bm\theta)=\bm\theta^{\star}$. Then, using the delta method, we have $\sqrt{n}(\widehat{\bm\theta}_n^{\star}-\bm\theta_0^{\star})\to_{\mathcal{L}} N(\bm 0,\dot{g}(\bm\theta_0)\Xi\dot{g}^{\prime}(\bm\theta_0))$ as $n\to\infty$, where $\dot{g}(\bm\theta_0)$ is the first derivative of $g(\bm\theta_0)$, defined as the following $2p\times (2p+1)$ matrix: 
\[\dot{g}(\bm\theta_0)=\begin{pmatrix}
	I_p & 0_{p\times 1} & 0_{p\times p} \\
	0_{p\times p} & -\omega_0^{-2}\bm\beta_0 & \omega_0^{-1}I_p
\end{pmatrix},\]
where $0_{m\times n}$ is an $m\times n$ zero matrix. 
Therefore, we can use the ARE of $\widehat{\bm\theta}_n^{\star}$ to $\check{\bm\theta}^{\star opt}_n$, defined by $\text{ARE}(\widehat{\bm\theta}_n^{\star},\check{\bm\theta}^{\star opt}_n)=[|\Xi_2|/|\dot{g}(\bm\theta_0)\Xi\dot{g}^{\prime}(\bm\theta_0)|]^{1/(2p)}$, to compare the E-QMLE with the optimal DWQRE. Here, $\text{ARE}(\widehat{\bm\theta}_n^{\star},\check{\bm\theta}^{\star opt}_n)$ can be computed by simulation, as for $\text{ARE}(\widehat{\bm\theta}_n,\widetilde{\bm\theta}_n)$ in Section \ref{section2.1-G-QMLE}.  
The simulation results in Section \ref{section4-simulation} indicate that neither model dominates for general situations.  

In addition, the two-step estimation procedure makes the DWQRE more complex in terms of computation than QMLEs obtained using a one-step estimation.  
In particular, given the same bounded maximum number of iterations in the optimization, $O(n)$ operations are required to obtain the E-QMLE, while $O(nK)+O(K^2)$ operations are needed for the DWQRE.  
Clearly, the computational load of the DWQRE can become much larger than that of the E-QMLE as $K\to\infty$, which makes using infinite quantile levels to construct the DWQRE infeasible in practice.    
As a result, the E-QMLE is preferred, because it can be more efficient for moderately heavy-tailed data and the computation is simpler than that of the DWQRE.

\subsection{Model selection}\label{section2.4-bic}

For model \eqref{LDAR} fitted using the E-QMLE, we introduce the following BIC for selection the order $p$:
\begin{equation}\label{BIC-E-QMLE}
	\text{BIC}(p) = 2(n-p_{\max})L_n^E(\widehat{\bm\theta}_{n}^{p}) + (2p+1)\ln(n-p_{\max}),
\end{equation}
where $\widehat{\bm\theta}_{n}^{p}$ is the E-QMLE with the order set to $p$, $L_n^E(\widehat{\bm\theta}_{n}^{p}) =(n-p_{\max})^{-1}\sum_{t=p_{\max}+1}^{n}\ell_t^E(\widehat{\bm\theta}_{n}^{p})$, and $p_{\max}$ is a predetermined positive integer; see also \cite{machado1993robust} and \cite*{Zhu2018_LDAR}. 
Let $\widehat{p}_{n} = \arg\min_{1\leq p\leq p_{\max}}\text{BIC}(p)$. The selection consistency of the BIC is given in the following theorem.
\begin{thm}\label{thm3BIC} 
	Let $p_0$ be the true order and $p_{max}$ be a predetermined positive integer. Under the conditions of Theorem \ref{thm2EQMLE-normality}, if $p_{max}\geq p_0$, then $P(\widehat{p}_{n} = p_0)\rightarrow 1$ as $n\to \infty$.  
\end{thm}

Theorem \ref{thm3BIC} shows that the BIC in \eqref{BIC-E-QMLE} is robust in a similar way to the E-QMLE, that is, its selection consistency requires only $E(|y_t|^{\kappa})<\infty$, for any $\kappa>0$. The results of our simulation studies in Section \ref{section4-simulation} indicate that the BIC performs well in finite samples.

\begin{remark}\label{remark-BIC-of-GQMLE}
	The BIC can also be defined for model \eqref{LDAR} fitted using the G-QMLE. In this case, the BIC is $\text{BIC}^{G}(p) = 2(n-p_{\max})L_n^G(\widetilde{\bm\theta}_{n}^{p}) + (2p+1)\ln(n-p_{\max})$, where $\widetilde{\bm\theta}_{n}^{p}$ is the G-QMLE with the order set to $p$, and $L_n^G(\widetilde{\bm\theta}_{n}^{p}) =(n-p_{\max})^{-1}\sum_{t=p_{\max}+1}^{n}\ell_t^G(\widetilde{\bm\theta}_{n}^{p})$; see also \cite{Tan_Zhu2022}. Let $\widehat{p}_{n}^G = \arg\min_{1\leq p\leq p_{\max}}\text{BIC}^{G}(p)$. 	
	Similarly to Theorem 3 of \cite{Tan_Zhu2022}, we can prove the selection consistency of $\text{BIC}^{G}(p)$ under the conditions of Theorem \ref{thm1G-QMLE-asymptotics}. 
\end{remark}

\section{Model checking}\label{section3-diagnosis}

To check the adequacy of the linear DAR models at \eqref{LDAR} fitted using the E-QMLE, we construct a mixed portmanteau test to jointly detect possible misspecifications in the conditional mean and the conditional standard deviation; see also \cite{Wong2005mixed_portmanteau}. 
We can conduct a diagnostic test of the conditional mean by checking the significance of the sample ACFs of the residuals \citep{Ljung_Box1978}; a similar test of the conditional standard deviation can be done by checking the significance of the sample ACFs of the absolute residuals for robustness \citep{Li_Li2005}. 

The ACFs of $\{\eta_t\}$ and $\{|\eta_t|\}$ at lag $k$ are defined by $\rho_k=\cov(\eta_t, \eta_{t-k})/\var(\eta_t)$ and $\gamma_k=\cov(|\eta_t|, |\eta_{t-k}|)/\var(|\eta_t|)$, respectively.
If the data-generating process is specified correctly by model \eqref{LDAR}, then $\{\eta_t\}$ and $\{|\eta_t|\}$ are independent and identically distributed ($i.i.d.$), such that $\rho_k=0$ and $\gamma_k=0$ hold, for any $k\geq 1$. 
Define the error function as $\eta_{t}(\bm\theta)=\varepsilon_t(\bm \alpha)/h(\bm\delta)$. 
For model \eqref{LDAR} fitted using the E-QMLE, the corresponding residuals are computed as $\widehat{\eta}_{t}=\varepsilon_t(\widehat{\bm \alpha}_n)/h_t(\widehat{\bm \delta}_n)$. 
Accordingly, the residual ACF and absolute residual ACF at lag $k$ are defined as
\[
\widehat{\rho}_{k}=\dfrac{\sum_{t=p+k+1}^{n}(\widehat{\eta}_{t}-\widehat{\eta}_1)(\widehat{\eta}_{t-k}-\widehat{\eta}_1)}{\sum_{t=p+1}^{n}(\widehat{\eta}_{t}-\widehat{\eta}_1)^{2}} \;\text{and}\; \widehat{\gamma}_{k}=\dfrac{\sum_{t=p+k+1}^{n}(|\widehat{\eta}_{t}|-\widehat{\eta}_2)(|\widehat{\eta}_{t-k}|-\widehat{\eta}_2)}{\sum_{t=p+1}^{n}(|\widehat{\eta}_{t}|-\widehat{\eta}_2)^{2}},
\]
respectively, where $\widehat{\eta}_1=(n-p)^{-1}\sum_{t=p+1}^{n}\widehat{\eta}_t$ and $\widehat{\eta}_2=(n-p)^{-1}\sum_{t=p+1}^{n}|\widehat{\eta}_t|$.
Note that $\widehat{\rho}_{k}$ is the sample version of ${\rho}_{k}$, whereas $\widehat{\gamma}_{k}$ is the sample version of ${\gamma}_{k}$. 
If the value of $\widehat{\rho}_{k}$ (or $\widehat{\gamma}_{k}$) deviates far from zero, then the conditional mean (or standard deviation) structure in model \eqref{LDAR} may be misspecified. 

For a prespecified positive integer $M$, denote $\widehat{\bm \rho}=(\widehat{\rho}_{1},\ldots,\widehat{\rho}_{M})^{\prime}$ and $\widehat{\bm \gamma}=(\widehat{\gamma}_{1},\ldots,\widehat{\gamma}_{M})^{\prime}$. 
Let $\sigma_1^{2}=\var(\eta_t)$ and $\sigma_2^2=\var(|\eta_t|)$. 
Define the $M\times(2p+1)$ matrices $U_{\rho}=(\bm U_{\rho 1}^{\prime},\ldots,\bm U^{\prime}_{\rho M})^\prime$ and $U_{\gamma}=(\bm U^{\prime}_{\gamma 1},\ldots,\bm U^{\prime}_{\gamma M})^{\prime}$, where 
$\bm U_{\rho k}=-\left(E\left[ (\eta_{t-k}-\kappa_1) \bm {Y}_{1t}^{\prime}\right], \kappa_1 E\left[(\eta_{t-k}-\kappa_1)\bm Y_{2t}^\prime\right]\right)$ and $\bm U_{\gamma k}=-\left(\bm 0_{p}^{\prime}, E\left[(|\eta_{t-k}|-1) \bm{Y}_{2t}^{\prime}\right]\right)$, for $1\leq k\leq M$, with $\bm 0_{p}$ being a $p$-dimensional zero vector. 
Denote the $2M\times(2M+2p+1)$ matrix
\[
V=\left(\begin{array}{ccc}
	I_{M} & 0_{M\times M} & U_{\rho}/{\sigma}_1^2 \\
	0_{M\times M} & I_{M} & U_{\gamma}/{\sigma}_2^{2}
\end{array}\right).
\] 
Let $\bm G_{t}=(\bm Y_{1t}^\prime[I(\eta_t<0)-I(\eta_t>0)],\bm Y_{2t}^\prime(1-|\eta_t|))^\prime$ and $G=E(\bm v_{t} \bm v_{t}^{\prime})$, where  
\begin{align*}
	\bm v_{t}=&\left[(\eta_t-\kappa_1)(\eta_{t-1}-\kappa_1)/\sigma_1^2,\ldots,(\eta_{t}-\kappa_1)(\eta_{t-M}-\kappa_1)/\sigma_1^2,\right.\\
	&\left.(|\eta_{t}|-1)(|\eta_{t-1}|-1)/{\sigma}_2^{2},\ldots,(|\eta_{t}|-1)(|\eta_{t-M}|-1)/{\sigma}_2^{2},
	-\bm G_{t}^\prime\Sigma_{2}^{-1}/2\right]^\prime.
\end{align*}

\begin{thm}\label{thmACF} 
	Suppose model \eqref{LDAR} is specified correctly. Under the conditions of Theorem \ref{thm2EQMLE-normality}, then  $\sqrt{n}(\widehat{\bm\rho}^{\prime},\widehat{\bm\gamma}^{\prime})^{\prime}\rightarrow_{\mathcal{L}} N\left(\bm{0}, V G V^{\prime}\right)$ as $n\to \infty$. 
\end{thm}

Theorem \ref{thmACF} can be used to check the significance of $\widehat{\rho}_{k}$ or $\widehat{\gamma}_{k}$ individually. 
Consistent estimators of $V$ and $G$, denoted by $\widehat{V}$ and $\widehat{G}$, respectively, can be constructed by replacing the expectations with the sample averages and $\eta_t$ with $\widehat{\eta}_t$. 
Then, we can estimate the asymptotic covariance in Theorem \ref{thmACF}, and construct confidence intervals for $\widehat{\rho}_{k}$ and $\widehat{\gamma}_{k}$ accordingly. 

To check the first $M$ lags jointly, we construct the following portmanteau test statistic:
\begin{equation*}
	Q(M)=n\left(\begin{array}{l}
		\widehat{\bm \rho} \\
		\widehat{\bm \gamma}
	\end{array}\right)^{\prime}\left(\widehat{V} \widehat{G} \widehat{V}^\prime\right)^{-1}\left(\begin{array}{l}
		\widehat{\bm \rho} \\
		\widehat{\bm \gamma}
	\end{array}\right). 
\end{equation*}
Theorem \ref{thmACF} and the continuous mapping theorem imply that $Q(M)\to_{\mathcal{L}}\chi^2_{2M}$ as $n\to\infty$, where $\chi^2_{2M}$ is the chi-squared distribution with $2M$ degrees of freedom. 
Therefore, if $Q(M)$ exceeds the $(1-\tau)$th quantile of the $\chi^2_{2M}$ distribution, we can reject the null hypothesis that $\rho_k$ and $\gamma_k$ ($1 \leq k\leq M$) are jointly nonsignificant at level $\tau$.

\begin{remark}\label{remark-PT-of-GQMLE}	
	The diagnostic tools can also be derived for model \eqref{LDAR}, fitted using the G-QMLE. In this case, the residual ACF and absolute residual ACF at lag $k$ are defined as
	\[
	\widetilde{\rho}_{k}=\dfrac{\sum_{t=p+k+1}^{n}(\widetilde{\eta}_{t}-\widetilde{\eta}_1)(\widetilde{\eta}_{t-k}-\widetilde{\eta}_1)}{\sum_{t=p+1}^{n}(\widetilde{\eta}_{t}-\widetilde{\eta}_1)^{2}} \;\text{and}\; \widetilde{\gamma}_{k}=\dfrac{\sum_{t=p+k+1}^{n}(|\widetilde{\eta}_{t}|-\widetilde{\eta}_2)(|\widetilde{\eta}_{t-k}|-\widetilde{\eta}_2)}{\sum_{t=p+1}^{n}(|\widetilde{\eta}_{t}|-\widetilde{\eta}_2)^{2}},
	\]
	respectively, where $\widetilde{\eta}_{t}=\varepsilon_t(\widetilde{\bm \alpha}_n)/h_t(\widetilde{\bm \delta}_n)$, $\widetilde{\eta}_1=(n-p)^{-1}\sum_{t=p+1}^{n}\widetilde{\eta}_t$, and $\widetilde{\eta}_2=(n-p)^{-1}\sum_{t=p+1}^{n}|\widetilde{\eta}_t|$.
	For a given positive integer $M$, denote $\widetilde{\bm \rho}=(\widetilde{\rho}_{1},\ldots,\widetilde{\rho}_{M})^{\prime}$ and $\widetilde{\bm \gamma}=(\widetilde{\gamma}_{1},\ldots,\widetilde{\gamma}_{M})^{\prime}$.
	Let $\tau_1=E[\sgn(\eta_t)]$ and $\tau_2=E(|\eta_t|)$.  
	Define the $M\times(2p+1)$ matrices $U_{\rho}^G=(\bm U_{\rho 1,G}^{G\prime},\ldots,\bm U^{G\prime}_{\rho M})^\prime$ and $U_{\gamma}^G=(\bm U^{G\prime}_{\gamma 1},\ldots,\bm U^{G\prime}_{\gamma M})^{\prime}$, where $\bm U_{\rho k}^G=-\left(E(\eta_{t-k} \bm {Y}_{1t}^{\prime}), \bm 0_{p+1}^{\prime}\right)$ and $\bm U_{\gamma k}^G=-(\tau_1E[(|\eta_{t-k}|-\tau_2) \bm{Y}_{1t}^{\prime}], \tau_2E[(|\eta_{t-k}|-\tau_2) \bm{Y}_{2t}^{\prime}])$, for $1\leq k\leq M$.	
	Let $\bm G_{1t}=(-\bm Y_{1t}^\prime\eta_t,\bm Y_{2t}^\prime(1-\eta_t^2))^\prime$ and $G_1=E(\bm v_{1t} \bm v_{1t}^{\prime})$, where $\bm v_{1t}=(\eta_t\eta_{t-1},\ldots,\eta_{t}\eta_{t-M},(|\eta_{t}|-\tau_2)(|\eta_{t-1}|-\tau_2)/{\sigma}_2^{2},\ldots,(|\eta_{t}|-\tau_2)(|\eta_{t-M}|-\tau_2)/{\sigma}_2^{2},
	-\bm G_{1t}^\prime\Sigma_1^{-1})^\prime$.
	Similarly to Theorem 7 of \cite{Tan_Zhu2022},  $\sqrt{n}(\widetilde{\bm\rho}^{\prime},\widetilde{\bm\gamma}^{\prime})^{\prime}\rightarrow_{\mathcal{L}} N\left(\bm{0}, V_1 G_1 V_1^{\prime}\right)$ as $n\to \infty$ if model \eqref{LDAR} is specified correctly and the conditions of Theorem \ref{thm1G-QMLE-asymptotics} hold, where $V_1$ is defined as $V$, with $U_{\rho}$ and $U_{\gamma}$ replaced with $U_{\rho}^G$ and $U_{\gamma}^G$, respectively. Then, we can check the significance of $\widetilde{\rho}_{k}$ or $\widetilde{\gamma}_{k}$ individually. 
	In addition, we can use the following portmanteau test statistic to check the first $M$ lags jointly:
	\[
	Q^G(M)=n\left(\begin{array}{l}
		\widetilde{\bm \rho} \\
		\widetilde{\bm \gamma}
	\end{array}\right)^{\prime}\left(\widehat{V}_1 \widehat{G}_1 \widehat{V}_1^\prime\right)^{-1}\left(\begin{array}{l}
		\widetilde{\bm \rho} \\
		\widetilde{\bm \gamma}
	\end{array}\right),  
	\]
	where $\widehat{V}_1$ and $\widehat{G}_1$ are consistent estimators of $V_1$ and $G_1$, respectively.
	It can be shown that $Q^G(M)\to_{\mathcal{L}}\chi^2_{2M}$ as $n\to\infty$ under the null hypothesis that $\rho_k$ and $\gamma_k$ are jointly nonsignificant for $1 \leq k\leq M$ at level $\tau$.
\end{remark}	

\begin{remark}\label{remark-selection-of-M}
	In practice, the choice of $M$ may affect the performance of the portmanteau tests $Q(M)$ and $Q^G(M)$. For other portmanteau tests, the selection of $M$ remains an open issue. 
	Some general rules have been provided to choose $M$ for Box--Pierce and Ljung--Box tests. For example, \cite{box2015_TSA} recommend taking values of $M$ between 10 and 20, and \cite{tsay2005AoFTS} suggests using several choices of $M$ and a general rule of $M\approx \ln(n)$, owing to its satisfactory power performance in simulation studies.
	
	Motivated by \cite{tsay2005AoFTS}, we conduct simulation studies to evaluate the size and power of $Q(M)$ and $Q^G(M)$ with respect to the sample size $n$ and the lag order $M$. We find that the size of a test is insensitive to $n$, and that the power is linearly increasing with respect to $n$. Moreover, the choice of $M>20$ usually makes $Q(M)$ and $Q^G(M)$ undersize, and the logarithmic power is linearly decreasing with respect to $M$; see Section S1 of the Supplementary Material for more details. 
	In general, $M$ should be large enough to capture possible correlations among residuals and absolute residuals, but  not be too large because of the resulting power loss.
	Thus, we suggest using multiple choices of $M$, such as $(M_1,M_2,\ldots,M_{J})=(\lfloor{\ln(n)}\rfloor,2\lfloor{\ln(n)}\rfloor,\ldots,J\lfloor{\ln(n)}\rfloor )$, where $\lfloor{x}\rfloor$ denotes the largest integer not greater than $x$, and $J$ is the maximum value of $j$, such that $M_{j}=j\lfloor{\ln(n)}\rfloor\leq 20$.
\end{remark}	

\section{Simulation experiments}\label{section4-simulation}

\subsection{E-QMLE and G-QMLE}\label{sim-QMLE}
The first experiment examines the finite-sample performance of the E-QMLE $\widehat{\bm \theta}_n$ and the G-QMLE $\widetilde{\bm \theta}_n$ in Sections \ref{section2.2-E-QMLE}--\ref{section2.1-G-QMLE}, respectively, for which the data-generating process is
\[
y_t = 0.5 y_{t-1}+\eta_t(1+0.4|y_{t-1}|),
\]
where $\{\eta_t\}$ are $i.i.d.$ normal, Laplace or Student’s $t_3$ distributed random variables. Here, $\{\eta_t\}$ are standardized with median zero and $E(|\eta_t|)=1$ to evaluate the E-QMLE $\widehat{\bm \theta}_n$, and $\{\eta_t\}$ are standardized with mean zero and $\var(\eta_t)=1$ to evaluate the G-QMLE $\widetilde{\bm \theta}_n$. 
The sample size is set to $n = 500$ or 1000, with 1000 replications for each sample size. 

Table \ref{tableE-QMLE} reports the biases, empirical standard deviations (ESDs), and asymptotic standard deviations (ASDs) of $\widehat{\bm \theta}_n$ and $\widetilde{\bm \theta}_n$ for different innovation distributions and sample sizes. 
We find that as the sample size increases, most of the biases, ESDs, and ASDs of both estimators $\widehat{\bm \theta}_n$ and $\widetilde{\bm \theta}_n$ become smaller, and the ESDs become closer to the corresponding ASDs. 
For the E-QMLE $\widehat{\bm \theta}_n$, the ESDs and ASDs of the scale-type estimators $\widehat{\omega}_n$ and $\widehat{\beta}_n$ increase as the distribution of $\eta_t$ becomes more heavy tailed, while those of the location-type estimator $\widehat{\alpha}_n$ are smallest for the Laplace distribution. 
The mixed performance of the E-QMLE is probably because the heavier tail of $\{\eta_t\}$ makes the E-QMLE less efficient, but it becomes more efficient as it reduces to the MLE, if $\eta_t$ follows the Laplace distribution. For the G-QMLE $\widetilde{\bm \theta}_n$, the ESDs and ASDs increase as the distribution of $\eta_t$ becomes more heavy tailed. This is expected, because the G-QMLE reduces to the MLE if $\eta_t$ follows the normal distribution, and the G-QMLE becomes less efficient as the tail of $\{\eta_t\}$ becomes heavier. 
Note that if $\eta_t$ follows the Student's $t_3$ distribution, then $E(\eta_t^4)=\infty$ and the G-QMLE is not applicable, which results in the inferior performance of the G-QMLE in this case. 
Similar observations can be found for other innovation distributions in the Supplementary Material.

\subsection{Asymptotic efficiency comparison}\label{sim-ARE}

The second experiment compares the ARE of the E-QMLE with those of the G-QMLE and DWQRE. We generate a sequence of sample size $n=10000$ from the following model:
$$
y_t=0.1y_{t-1}+\varepsilon_t(1+0.2|y_{t-1}|),
$$
where $\{\varepsilon_t\}$ are $i.i.d.$ random variables with the mixture distribution and the probability density function (pdf)
$$
f(x)=(1-\delta) \phi(x)+ \delta m(x),
$$
where $\delta\in [0,1]$ is a constant, $\phi(x)$ is the pdf of $N(0,1)$, and $m(x)$ is the pdf of $N(0,6)$, standard Laplace, or $t_3$ distribution. 

Figure \ref{fig_efficiency} plots the $\text{ARE}(\widehat{\bm\theta}_n,\widetilde{\bm\theta}_n)$ and $\text{ARE}(\widehat{\bm\theta}_n^{\star},\check{\bm\theta}^{\star opt}_n)$ defined in Sections \ref{section2.1-G-QMLE} and \ref{section2.3-comparison} for $\delta = k/20$, with $k=0,1,\ldots,20$, and different settings of $m(x)$, where the optimal DWQRE $\check{\bm\theta}^{\star opt}_n$ is obtained using $K=9$ quantile levels.
We have the following findings: 
(1) $\text{ARE}(\widehat{\bm\theta}_n,\widetilde{\bm\theta}_n)$ can be either larger or smaller than one, which suggests that neither the E-QMLE nor the G-QMLE dominate; furthermore the G-QMLE is more efficient than the E-QMLE when $\varepsilon_t$ is closer to normal, whereas the E-QMLE becomes much more efficient as $\varepsilon_t$ becomes more heavy tailed; 
(2) $\text{ARE}(\widehat{\bm\theta}_n^{\star},\check{\bm\theta}^{\star opt}_n)$ can be either greater or less than one, which implies that neither the E-QMLE nor the DWQRE dominate; 
furthermore, the E-QMLE is more efficient than the DWQRE when $\varepsilon_t$ approaches the Laplace distribution, but becomes less efficient than the DWQRE when $\varepsilon_t$ becomes more heavy tailed; 
(3) when $\delta=1$ and $m(x)$ is the pdf of a standard Laplace distribution, then $\text{ARE}(\widehat{\bm\theta}_n,\widetilde{\bm\theta}_n)>1$ and  $\text{ARE}(\widehat{\bm\theta}_n^{\star},\check{\bm\theta}^{\star opt}_n)>1$, indicating that the E-QMLE is the most efficient. This is because the E-QMLE reduces to the MLE when $\varepsilon_t$ follows a standard Laplace distribution, and thus its asymptotic covariance attains the Cram\'{e}r--Rao lower bound;  
and (4) when $\delta=0$, then $\text{ARE}(\widehat{\bm\theta}_n,\widetilde{\bm\theta}_n)<1$, implying that the G-QMLE is the most efficient. This is because the G-QMLE reduces to the MLE when $\varepsilon_t\sim N(0,1)$, such that its asymptotic covariance attains the Cram\'{e}r--Rao lower bound.

\subsection{Model selection}\label{sim-BIC}

In the third experiment, we evaluate the performance of the proposed model selection methods in Section \ref{section2.4-bic}, where the data-generating process is
\[
y_t = 0.1 y_{t-1}+0.2y_{t-2}+\eta_t(1+0.1|y_{t-1}|+ 0.2|y_{t-2}|),
\]
and the innovations $\{\eta_t\}$ are defined as in the first experiment. 
We consider three sample sizes, $n= 300,500$, and 1000, and generate 1000 replications for each sample size. 
The BIC in \eqref{BIC-E-QMLE} and BIC$^G$ in Remark \ref{remark-BIC-of-GQMLE} are employed to select the order $p$ with $p_{\max}=5$. As a result, the underfitted, correctly selected, and overfitted models by BIC (or BIC$^G$) correspond to $\widehat{p}_{n}$ (or $\widehat{p}_{n}^G$) being 1, 2, and greater than 2, respectively. 

Table \ref{tableBIC} provides the percentages of underfitting, correct selection, and overfitting cases by the BIC and BIC$^G$. Both BICs select the correct model in most of the replications when the sample size is as small as $n = 300$, and their performance improves as the sample size increases. 
Moreover, in terms of different distributions for the innovation $\eta_t$, the BIC in \eqref{BIC-E-QMLE} performs best when $\eta_t$ follows the Laplace distribution, whereas the BIC$^G$ performs best when $\eta_t$ follows the normal distribution, especially for small sample sizes. This is expected, because the E-QMLE reduces to the MLE if $\eta_t$ follows the Laplace distribution, whereas the G-QMLE reduces to the MLE if $\eta_t$ is normally distributed. Thus, no model misspecifications appear when deriving the BIC and BIC$^G$, respectively, for these two cases.   
In addition, owing to the inferior performance of the G-QMLE when $\eta_t$ follows the Student's $t_3$ distribution, BIC$^G$ performs worst in this situation. 
We also consider other innovation distributions for both BICs in Section S1 of the Supplementary Material; the findings support those reported above.

\subsection{Portmanteau tests}\label{sim-Portmanteau tests}

In the fourth experiment, we study the proposed mixed portmanteau tests $Q(M)$ and $Q^G(M)$. The data are generated from
\[
y_t = 0.1y_{t-1}+c_1y_{t-2}+\eta_t(1+0.2|y_{t-1}|+c_2|y_{t-2}|),
\]
where all other settings are the same as those in the first experiment. 
We fit a linear DAR model with $p=1$ using the exponential or Gaussian quasi-maximum likelihood estimation. Here, the case of $c_1=c_2=0$ corresponds to the size of the test, $c_1 \neq 0$ corresponds to misspecifications in the conditional mean, and $c_2>0$ corresponds to misspecifications in the conditional standard deviation. Two departure levels, 0.1 and 0.3, are considered for both $c_1$ and $c_2$. 

The rejection rates of $Q(6)$ and $Q^G(6)$ at the $5\%$ significance level are summarized in Table \ref{tablechecking}. We have the following findings: 
First, all sizes are close to the nominal rate, except for those of $Q^G(6)$ in the $t_3$ case, and the power increases as the sample size $n$ or the departure level increases. This is expected, because the G-QMLE is not applicable if $\eta_t$ follows the $t_3$ distribution with $E(\eta_t^4)=\infty$, which makes the size inaccurate.
Second, for the same level of departures, $Q(6)$ and $Q^G(6)$ are more powerful in terms of detecting the misspecification in the conditional mean ($c_1\neq 0, c_2=0$) than they are in doing so in the conditional standard deviation ($c_1=0$, $c_2>0$).
Third, as the innovation distribution becomes more heavy tailed, $Q(6)$ and $Q^G(6)$ perform worse in terms of detecting misspecifications in the conditional standard deviation. 
This seems consistent with the results in the first experiment, where the estimation performance of the scale parameters $\omega_0$ and $\beta_0$ worsens as the innovation distribution becomes more heavy tailed. 
In addition, we also evaluate $Q(M)$ and $Q^G(M)$ with $M=12$ and 18 in Section S1 of the Supplementary Material; once again, the findings are similar.

\section{An empirical example}\label{section5-realdata}

In this section, we apply the proposed inference tools to the weekly closing prices of Bitcoin (BTC) from July 18, 2010, to August 16, 2020, with 527 observations in total. We focus on their log returns (after mean adjustment), denoted by $\{y_t\}$. The time plot of $\{y_t\}$ in Figure \ref{pred} suggests evidence of volatility clustering, and the kurtosis of $\{y_t\}$ is $9.3$, indicating that the tail of $\{y_t\}$ is much heavier than that of a normal distribution. 
Moreover, we can determine the autocorrelation from the sample partial autocorrelation functions (PACFs) of both $\{y_t\}$ and $\{|y_t|\}$. 
Therefore, we fit the data set $\{y_t\}$ using a linear DAR model. 

We first employ the exponential quasi-maximum likelihood estimation method in Section \ref{section2.2-E-QMLE} to fit $\{y_t\}$. Based on $p_{\max}=10$, the proposed BIC in \eqref{BIC-E-QMLE} selects the order $p=3$ for the linear DAR model, and the fitted model is 
\begin{align}\label{E-QMLEestimation}
	&y_t=0.0815_{0.0504}y_{t-1}+0.1401_{0.0487}y_{t-2}+0.0693_{0.0471}y_{t-3}+\widehat{\eta}_t \widehat{\sigma}_t,\nonumber\\
	&\widehat{\sigma}_t=0.0435_{0.0065}+ 0.2192_{0.0664}|y_{t-1}|+0.1895_{0.0645}|y_{t-2}|+0.1616_{0.0624}|y_{t-3}|,
\end{align}
where the subscripts are the standard errors of the estimated coefficients, and $\{\widehat{\sigma}_t\}$ and $\{\widehat{\eta}_t\}$ are the fitted volatilities and residuals, respectively. 
The QQ plots of the fitted residuals $\{\widehat{\eta}_t\}$ against Students' $t_2$, $t_3$, and $t_4$ distributions are shown in Figure \ref{QQplots-E-QMLE}. Here, the residuals are approximately $t_3$ distributed and the tail is heavier than $t_4$, but much lighter than $t_2$, possibly indicating that $E(\eta_t^2)<\infty$ and $E(\eta_t^4)=\infty$. 
We further employ the Kernelized Stein discrepancy (KSD) test proposed by \cite{Luo_Zhu2021} to check whether $\eta_t$ follows the standard Laplace or normal distribution. To calculate the KSD test statistic defined by (2.7) of \cite{Luo_Zhu2021}, we choose $n_0=n=526$ and use the Gaussian kernel $k(x,y)=\exp\{-\|x-y\|^{2}/(2\sigma^{2})\}$, with $\sigma$ being the median of the residual distance. The $p$-value is calculated using the parametric bootstrap, and is $0.37$ for the standard Laplace distribution test, and less than 0.01 for the normal distribution test, suggesting that $\eta_t$ follows the standard Laplace distribution and the E-QMLE may reduce to the MLE for the data $\{y_t\}$.
Moreover, we perform mixed portmanteau tests $Q(M)$ for $M=6,12$, and $18$, as in Section \ref{section3-diagnosis}, and their $p$-values are $0.56, 0.71$, and $0.19$, respectively. 
In addition, Figure \ref{acf-E-QMLE} plots the residual ACFs $\widehat{\rho}_k$ and $\widehat{\gamma}_k$ up to lag $18$, all of which fall within their corresponding 95\% pointwise CIs, except for $\widehat{\gamma}_3$, which is slightly beyond its 95\% CI. 
Clearly, almost all residual ACFs are nonsignificant, both individually and jointly, at the 5\% significance level, and hence the fitted model at \eqref{E-QMLEestimation} is adequate.

For comparison, the Gaussian quasi-maximum likelihood estimation method in Section \ref{section2.1-G-QMLE} is also used to fit $\{y_t\}$. Based on $p_{\max}=10$, the BIC$^G$ in Remark \ref{remark-BIC-of-GQMLE} selects the same order $p=3$ for the linear DAR model, and obtains the following fitted model:
\begin{align}\label{G-QMLEestimation}
	&y_t=0.1098_{0.0579}y_{t-1}+0.1268_{0.0547}y_{t-2}+0.1733_{0.0586}y_{t-3}+\widetilde{\eta}_t \widetilde{\sigma}_t,\nonumber\\
	&\widetilde{\sigma}_t=0.0821_{0.0146}+ 0.2348_{0.1324}|y_{t-1}|+0.1674_{0.1260}|y_{t-2}|+0.2519_{0.1348}|y_{t-3}|,
\end{align}
where the subscripts are the standard errors of the estimated coefficients, and $\{\widetilde{\sigma}_t\}$ and $\{\widetilde{\eta}_t\}$ are the fitted volatilities and residuals, respectively.  
The diagnosis from the residuals $\{\widetilde{\eta}_t\}$ in the Supplementary Material indicates that the fitted model at \eqref{G-QMLEestimation} is adequate. 
Moreover, we estimate the linear DAR model of order three using the DWQRE of \cite*{Zhu2018_LDAR} for comparison. 
To facilitate a comparison between the E-QMLE and DWQRE, the volatility coefficients of the fitted model using the DWQRE are reparametrized to ensure $E(|\eta_t|)=1$. As a result, the fitted model using the DWQRE method based on quantile levels $\tau_k=k/10$, for $k=1,\ldots,9$, is given by
\begin{align}\label{DWQREestimation}
	&y_t=0.1311_{0.0447}y_{t-1}+0.0854_{0.0394}y_{t-2}+0.0627_{0.0357}y_{t-3}+\check{\eta}_t \check{\sigma}_t,\nonumber\\
	&\check{\sigma}_t=0.0384+ 0.3329_{0.0972}|y_{t-1}|+0.2020_{0.0762}|y_{t-2}|+0.1209_{0.0630}|y_{t-3}|,
\end{align}
where the subscripts are the standard errors of the estimated coefficients, $\{\check{\sigma}_t\}$ are the fitted volatilities, and $\{\check{\eta}_t\}$ are the standardized residuals, such that $E(|\eta_t|)=1$. 
To compare the efficiency of the three estimation methods, we approximate the $\text{ARE}(\widehat{\bm\theta}_n,\widetilde{\bm\theta}_n)$ and $\text{ARE}(\widehat{\bm\theta}_n^{\star},\check{\bm\theta}^{\star opt}_n)$ defined in Sections \ref{section2.1-G-QMLE} and \ref{section2.3-comparison}, respectively, using parameter estimates and sample averages. We have $\widehat{\text{ARE}}(\widehat{\bm\theta}_n,\widetilde{\bm\theta}_n)\approx 1.6223$ and $\widehat{\text{ARE}}(\widehat{\bm\theta}_n^{\star},\check{\bm\theta}^{\star opt}_n)\approx 0.8967$, implying that the E-QMLE is slightly less efficient than the DWQRE and that both are more efficient than the G-QMLE for fitting $\{y_t\}$.  
In addition, note that the conditional mean structures of the three fitted models \eqref{E-QMLEestimation}--\eqref{DWQREestimation} are significant at the 5\% significance level, which suggests that the BTC market was not efficient during the examined period \citep{Urquhart_2016}.

For financial time series, an important application of linear DAR models is to forecast risk measures, such as the value-at-risk (VaR). The VaR is actually a tail quantile of the loss series' conditional distribution, and thus the $\tau$th conditional quantile of $y_t$, denoted by $Q_{y_{t}}(\tau\mid \mathcal{F}_{t-1})$, is the negative $\tau$th VaR.  
To examine the forecasting performance of the linear DAR model estimated using the E-QMLE and G-QMLE, we conduct one-step-ahead predictions using a rolling forecasting procedure, with a fixed moving window of size $350$. 
Specifically, we estimate the linear DAR model using the E-QMLE (or G-QMLE) for each moving window, and calculate the one-week-ahead forecast of the $\tau$th conditional quantile of $y_{t+1}$ using $\widehat{Q}_{y_{t+1}}(\tau\mid \mathcal{F}_{t})=\widehat{\mu}_{t+1}+\widehat{\sigma}_{t+1}\widehat{b}_{\tau}$ (or $\widetilde{Q}_{y_{t+1}}(\tau\mid \mathcal{F}_{t})=\widetilde{\mu}_{t+1}+\widetilde{\sigma}_{t+1}\widetilde{b}_{\tau}$), where $\widehat{\mu}_{t+1}$ (or $\widetilde{\mu}_{t+1}$) and $\widehat{\sigma}_{t+1}$ (or $\widetilde{\sigma}_{t+1}$) are the predicted conditional mean and standard deviation, respectively, using the E-QMLE (or G-QMLE), and $\widehat{b}_{\tau}$ (or $\widetilde{b}_{\tau}$) is the $\tau$th sample quantile of the residuals $\{\widehat{\eta}_t\}$ (or $\{\widetilde{\eta}_t\}$).  
For example, the rolling one-week-ahead forecasts at $\tau=5\%$ are displayed in Figure \ref{pred}. Here, the negative VaRs based on the E-QMLE and G-QMLE are close to each other and change according to the volatility of the data, and $y_t$ occasionally falls below its one-week negative VaR forecast.

We next compare the forecasting performance of the proposed E-QMLE and G-QMLE with that of the DWQRE. 
We conduct a rolling forecasting procedure with a fixed moving window of size $350$ for the DWQRE approach, and compute the one-week-ahead forecast of the $\tau$th conditional quantile of $y_{t+1}$ using $\check{Q}_{y_{t+1}}(\tau\mid \mathcal{F}_{t})=\check{\mu}_{t+1}+\check{\sigma}_{t+1}\check{b}_{\tau}$, where $\check{\mu}_{t+1}$ and $\check{\sigma}_{t+1}$ are the predicted conditional mean and conditional standard deviation, respectively, using the DWQRE, and $\check{b}_{\tau}$ is the $\tau$th sample quantile of the corresponding residuals. 
To evaluate the forecasting performance of the three estimation methods, we calculate the empirical coverage rate (ECR), and perform VaR backtests for the forecasts at $\tau=5\%$, 10\%, 90\%, and 95\%. 
Specifically, the ECR is calculated as the proportion of observations that fall below the corresponding conditional quantile forecast for the last $176$ data points. 
We use three VaR backtests, namely, the likelihood ratio tests for correct unconditional coverage (UC) in \cite{Kupiec1995}, correct conditional coverage (CC) in \cite{christoffersen1998evaluating}, and dynamic quantile (DQ) test in \cite{Engle2004}. 
Let $H_t=I(y_t<Q_{y_t}(\tau\mid \mathcal{F}_{t-1}))$ be the hit, where $I(\cdot)$ is an indicator function.
The UC test examines the accuracy of the VaR forecasts using the null hypothesis that $E(H_t)=\tau$. 
The null hypothesis of the CC test is that, conditional on $\mathcal{F}_{t-1}$, $\{H_t\}$ are $i.i.d.$ Bernoulli random variables with success probability $\tau$. For the DQ test, we regress $H_t$ on regressors including a constant, three lagged hits $H_{t-i},$ for $i=1,2,3$, and the VaR forecast at the time point $t$. The null hypothesis of the DQ test is that the intercept is equal to $\tau$ and all regression coefficients are zero. 

Table \ref{forcast} reports the ECRs and $p$-values of the three VaR backtests for one-week-ahead forecasts obtained using the three estimation methods at four quantile levels. 
The proposed E-QMLE performs well at all four quantile levels, with $p$-values not less than $0.1$ for the backtests. In addition, the proposed G-QMLE performs well, except for $\tau=95\%$, with the $p$-value of the DQ test slightly smaller than $0.1$, whereas the DWQRE performs less satisfactorily at three quantile levels in terms of the DQ tests.  
For the ECRs, those of the E-QMLE are closest to the nominal quantile level, except for  $\tau=90\%$. 
Overall, the linear DAR model fitted using the proposed E-QMLE method outperforms that of the DWQRE in forecasting VaRs. 
The G-QMLE method performs worse than the E-QMLE, probably because the G-QMLE is not suitable for the data.

In addition, to compare the forecasting performance of the linear DAR model with that of the DAR model \eqref{DAR} and the AR-GARCH model fitted using QMLEs, we apply the DAR model of order three and an AR($3$)-GARCH($1,1$) model to the data using the E-QMLE and G-QMLE. The results based on the same rolling forecasting procedure are reported in Table \ref{forcast}. Here, both DAR models are comparable in the VaR backtests, and the linear DAR model fitted using the E-QMLE outperforms the DAR model in the ECRs for $\tau=10\%, 90\%$, and 95\%. This demonstrates the forecasting superiority of the linear DAR model for heavy-tailed data, which possibly benefits from its linear structure on the conditional standard deviation instead of on the conditional variance for the DAR model.
Furthermore, the linear DAR model is competitive with the AR-GARCH model in both the ECRs and the backtests. 

In summary, the proposed E-QMLE procedure of a linear DAR model seems to be more reasonable and suitable for the considered BTC data set in terms of fitting and forecasting. 

\section{Conclusion}\label{section6-conclusion}

We have proposed two QMLEs, namely, the E-QMLE and the G-QMLE, for linear DAR models, which are simpler to compute than the DWQRE of \cite*{Zhu2018_LDAR}. 
Under only a finite fractional moment of the process $\{y_t\}$, we establish the consistency and asymptotic normality for both QMLEs. 
We compare the E-QMLE with the G-QMLE and DWQRE in terms of asymptotic efficiency, and provide practical suggestions on choosing a suitable estimator. 
Moreover, we propose two BICs for order selection and two mixed portmanteau tests to check the adequacy of the fitted models based on the two QMLEs, and obtain their asymptotic properties without any moment conditions on the observed process. 
A real-data example confirms the usefulness and superiority of the proposed robust inference tools in terms of data fitting and forecasting. 

The robust inference tools presented here can be extended in two directions. 
First, there is a practical need to consider a linear DAR model with different orders for the conditional location and scale components. The proposed robust inference tools can be adapted to such an extension using a self-weighting approach. 
Second, the linear DAR model can be generalized to a vector form to jointly model multivariate time series. In the framework of a vector linear DAR model, it would be interesting to investigate whether the good properties of robust inference can be preserved. 
We leave these topics for future research. 

 

\renewcommand{\thesection}{A}
\setcounter{equation}{0} 
\section*{Appendix}

This appendix includes additional results for the simulation and the empirical analysis, as well as technical details for Theorems \ref{thm1EQMLE-consistency}--\ref{thmACF} and Remarks \ref{remark1}--\ref{remark-PT-of-GQMLE}.  
To show Theorems \ref{thm1EQMLE-consistency}--\ref{thm1G-QMLE-asymptotics}, Theorem \ref{thmACF} and Remark \ref{remark-PT-of-GQMLE}, Lemmas \ref{lemma1}--\ref{lemma-PT} are also introduced. 
Throughout the appendix, the notation $C$ is a generic constant which may take different values from lines to lines. The norm of a matrix or column vector is defined as $\|A\|=\sqrt{\text{tr}(AA^\prime)}=\sqrt{\sum_{i,j}|a_{ij}|^2}$. 

\subsection{Additional simulation results}

This section reports additional simulation results which are useful but not reported in the manuscript. Section \ref{sim1-tail behavior} evaluates the finite-sample performance of all inference tools for additional innovation distributions. Section \ref{sim2-M-suggestion} explores the size and power of two portmanteau tests $Q(M)$ and $Q^G(M)$ with respect to the lag order $M$ and the sample size $n$ by simulation. Finally, the simulation results for $Q(M)$ and $Q^G(M)$ with $M=12$, or 18, are presented in Section \ref{sim3-M-12-18}. 

\subsubsection{Simulation results for additional innovation distributions} \label{sim1-tail behavior}

In the first, third and fourth experiments in Sections \ref{sim-QMLE}, \ref{sim-BIC} and \ref{sim-Portmanteau tests}, owing to space limitations, we only reported the results with $\{\eta_t\}$ being $i.i.d.$ normal, Laplace, or Student’s $t_3$ distributed random variables.
This subsection provides additional results with $\{\eta_t\}$ being $i.i.d.$ Student’s $t_2$ or $t_5$  distributed random variables, or skewed Student’s $t_3$ distributed random variables with the skew parameter being -1.5 (denoted by $st_{3,-1.5}$). 

Table \ref{table-QMLE-supp} presents the biases, ESDs, and ASDs of $\widehat{\bm \theta}_n$ and $\widetilde{\bm \theta}_n$, when the innovations follow the $t_2$, $t_5$, or $st_{3,-1.5}$ distribution. 
The following findings in Section \ref{sim-QMLE} remain unchanged: 
(1) for the E-QMLE $\widehat{\bm \theta}_n$, the ESDs and ASDs of scale-type estimators $\widehat{\omega}_n$ and $\widehat{\beta}_n$ increase as the distribution of $\eta_t$ becomes more heavy tailed, while that of location-type estimator $\widehat{\alpha}_n$ are decreasing;  
(2) for the G-QMLE $\widetilde{\bm\theta}_n$, the ESDs and ASDs increase as the distribution of $\eta_t$ gets more heavy tailed. If $\eta_t\sim t_2$, then $E(\eta_t^2)=\infty$ such that the G-QMLE is not applicable, which results in the inferior performance of the G-QMLE in this case.
Comparing Table \ref{table-QMLE-supp} with Table 1 in the manuscript, we further have the following findings: 
(1) the E-QMLE performs satisfactorily with small biases and ESDs close to ASDs even for $t_2$ distribution with $E(\eta_t^2)=\infty$, indicating that the E-QMLE is more robust for thick tails than G-QMLE; 
(2) nonzero skewness will increase the bias of the E-QMLE, but it cannot affect its ESD and ASD. While the G-QMLE still has small biases but large ESDs and ASDs in fitting skewed data.

Table \ref{tableBIC-supp} provides the percentages of underfitting, correct selection, and overfitting cases by the BIC and BIC$^G$ when the innovations follow the $t_2$, $t_5$, or $st_{3,-1.5}$ distribution. 
It can be seen that the conclusions in Section \ref{sim-BIC} of the manuscript are unchanged. 
Moreover, we find that nonzero skewness could improve the selection accuracy of BIC but worsen that of BIC$^G$. This is probably because nonzero skewness almost has no effects on ESDs and ASDs of the E-QMLE $\widehat{\bm\theta}_n$, but it increases ESDs and ASDs of the G-QMLE $\widetilde{\bm\theta}_n$. 

Table \ref{tablechecking-supp} summarizes the rejection rates of $Q(6)$ and $Q^G(6)$ at the $5\%$ significance level when the innovations follow the $t_2$, $t_5$, or $st_{3,-1.5}$ distribution. The following findings in Section \ref{sim-Portmanteau tests} remain unchanged: 
(1) all sizes are close to the nominal rate except for the $t_2$ case, and the power increases as the sample size $n$ or the departure level increases. This is expected, because the E-QMLE and G-QMLE are not applicable if $\eta_t$ follows $t_2$ distribution such that $E(\eta_t^2)=\infty$, making the size inaccurate;
(2) for the same level of departures, $Q(6)$ and $Q^G(6)$ are more powerful in terms of detecting the misspecification in the conditional mean ($c_1\neq 0, c_2=0$) than they are in doing so in the conditional standard deviation ($c_1=0$, $c_2>0$);
(3) as the innovation distribution becomes more heavy tailed, $Q(6)$ and $Q^G(6)$ perform worse in detecting misspecifications in the conditional standard deviation.

\subsubsection{Portmanteau tests with respect to $n$ and $M$}\label{sim2-M-suggestion}

In this subsection, we conduct simulation studies to examine the size and power of portmanteau tests $Q(M)$ and $Q^G(M)$ with respect to the sample size $n$ and the lag order $M$.

To evaluate the size with respect to $n$ and $M$, we generate 1000 replications of sample size $n=100,200,\ldots,2000$, from the following model:
$$
y_t=0.1y_{t-1}+\eta_{t}(1+0.2|y_{t-1}|),
$$
where $\{\eta_t\}$ are $i.i.d.$ normal or Laplace distributed random variables. 
Here $\{\eta_t\}$ are standardized with median zero and $E(|\eta_t|)=1$ to evaluate $Q(M)$ using the E-QMLE, and $\{\eta_t\}$ are standardized with mean zero and $\text{var}(\eta_t)=1$ to evaluate $Q^G(M)$ using the G-QMLE.
We fit a linear DAR model with $p = 1$ using the Gaussian or exponential quasi-maximum likelihood estimation. 
Figure \ref{fig_size} plots the empirical size of $Q(M)$ and $Q^G(M)$ at the $5\%$ significance level, for different settings of $M=2, 4,\ldots,50$ and $n=100,200,\ldots,2000$. We have the following findings: 
(1) the size with $M> 10$ is decreasing with respect to the lag $M$, and most of the sizes with $M>20$ are less than 0.05; 
(2) the size is insensitive to the sample size $n$ and the distribution of $\eta_t$. 
Therefore, we suggest $M\leq 20$ to avoid undersize for two portmanteau tests, which is consistent with the guideline given by \cite{box2015_TSA}.

To evaluate the power with respect to $n$ and $M$, we generate 1000 replications of sample size $n=100,200,\ldots,2000$, from the following model:
$$
y_t=0.1y_{t-1}+0.1y_{t-2}+\eta_{t}(1+0.2|y_{t-1}|+0.2|y_{t-2}|),
$$
where all other settings are the same as in the above experiment for size evaluation. Figure \ref{fig_power} plots the empirical power with different settings of $n$ and $M$. We have the following findings:  
(1) the power is linearly increasing with respect to the sample size $n$, i.e. $\text{power}=O(n)$;
(2) the $\ln(\text{power})$ is linearly decreasing with respect to the lag $M$, i.e. $\ln(\text{power})=O(M)$, or $\text{power}=O(e^M)$.
As a result, to avoid power loss for tests with a large lag $M$, sample size $n=O(e^M)$ may be used, or the lag order $M=O(\ln(n))$ may be suggested for a fixed sample size, which is consistent with the suggestion by \cite{tsay2005AoFTS}. 
In practice, we can choose several values of $M$, such as $M_1=\lfloor{\ln(n)}\rfloor$, $M_2=2\lfloor{\ln(n)}\rfloor,\ldots$, where $\lfloor{x}\rfloor$ denotes the largest integer not greater than $x$.

In summary, based on the above simulation studies, we may suggest that the rule-of-thumb for multiple choices of $M$ could be
\begin{equation}\label{selection_M}
	(M_1,M_2,\ldots,M_{K})=(\lfloor{\ln(n)}\rfloor,2\lfloor{\ln(n)}\rfloor,\ldots,K\lfloor{\ln(n)}\rfloor ),
\end{equation} 
where $K$ is the maximum value of $k$, such that $M_{k}=k\lfloor{\ln(n)}\rfloor\leq 20$.

\subsubsection{Portmanteau tests with $M=12$ or 18} \label{sim3-M-12-18}

In the fourth experiment in Section \ref{sim-Portmanteau tests}, owing to space limitations, we only reported the results of $Q(M)$ and $Q^G(M)$ with $M=6$. This subsection provides the results for $M=12$, and 18. 
Table \ref{tablechecking-M} reports the rejection rates of $Q(M)$ and $Q^G(M)$ at the $5\%$ significance level for $M=12$ or 18.
The following findings in Section \ref{sim-Portmanteau tests} in the manuscript remain unchanged: 
(1) the power increases as the sample size $n$ or the departure level increases; 
(2) for the same level of departures, $Q(M)$ and $Q^G(M)$ for $M=12$ or 18 are more powerful in terms of detecting the misspecification in the conditional mean ($c_1\neq 0, c_2=0$) than they are in doing so in the conditional standard deviation ($c_1=0$, $c_2>0$); 
(3) as the innovation distribution becomes more heavy tailed, $Q(M)$ and $Q^G(M)$ perform worse in detecting misspecifications in the conditional standard deviation. 
In addition, most of the sizes of $Q(M)$ and $Q^G(M)$ with $M=12$ or 18 are less than the nominal level, and the power of $Q(M)$ and $Q^G(M)$ decreases slowly as $M$ increases.

\subsection{Proof of Theorem \ref{thm1EQMLE-consistency}}

To show Theorem \ref{thm1EQMLE-consistency}, we introduce the following lemma.
\begin{lemma}\label{lemma1}
	For any $\bm \theta^{*} \in \Theta$, let $B_\eta  (\bm\theta^{*})=\{\bm\theta \in \Theta:  \|\bm\theta - \bm\theta^{*}\| < \eta \}$ be an open neighborhood of $\bm\theta^{*}$ with radius $\eta>0$. If Assumptions \ref{assum1-compactness}, \ref{assum2-stationarity} and \ref{assum3EQMLE-moment and density}(i) hold, then 
	\begin{equation*}
		\begin{aligned}
			&\text{ (i) }E \sup_{ \bm\theta \in \Theta}\left|\ell_{t}^E( \bm\theta)\right|<\infty;\\
			&\text{ (ii) } E\left[\ell_t^E(\bm\theta)\right] \text{has a unique minimum at } \bm\theta_0;\\
			&\text{ (iii) } E \left[ \underset{\bm\theta \in B_{\eta} (\bm\theta^{*}) }{\sup}|\ell_{t}^E(\bm\theta)-\ell_{t}^E(\bm\theta^{*})|\right] \to 0  \text{ as } \eta \to 0.
		\end{aligned}
	\end{equation*}		
\end{lemma}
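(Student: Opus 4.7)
The plan is to treat the three parts sequentially, relying on the same kind of moment-domination arguments used in Lemma~\ref{boundedness} for the Gaussian case, with the only genuinely new ingredient being a conditioning argument based on the uniqueness of the median of $\eta_t$ for part~(ii).

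For part~(i), I would split $\ell_t^E(\bm\theta) = \ln h_t(\bm\delta) + |\varepsilon_t(\bm\alpha)|/h_t(\bm\delta)$. The bound $E\sup_\Theta|\ln h_t(\bm\delta)|<\infty$ is handled verbatim as in \eqref{finite 1} via Jensen's inequality and the fractional moment in Assumption~\ref{assum2-stationarity}. For the second term, write $\varepsilon_t(\bm\alpha) = \eta_t h_t(\bm\delta_0) - (\bm\alpha-\bm\alpha_0)'(y_{t-1},\ldots,y_{t-p})'$ and use the elementary estimates $h_t(\bm\delta_0)/h_t(\bm\delta) \leq (\overline\omega + \overline\beta\sum|y_{t-i}|)/(\underline\omega + \underline\beta\sum|y_{t-i}|) \leq \max(\overline\omega/\underline\omega,\overline\beta/\underline\beta)$ and $|y_{t-i}|/h_t(\bm\delta) \leq 1/\underline\beta$, valid under Assumption~\ref{assum1-compactness}. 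This yields $\sup_\Theta |\varepsilon_t(\bm\alpha)|/h_t(\bm\delta) \leq C(1 + |\eta_t|)$, whose expectation is finite since $E|\eta_t|=1$ by Assumption~\ref{assum3EQMLE-moment and density}(i).

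Part~(ii) is the principal obstacle. I would condition on $\mathcal{F}_{t-1}$ and rewrite $|\varepsilon_t(\bm\alpha)| = h_t(\bm\delta_0)\bigl|\eta_t - c_t(\bm\alpha)\bigr|$, where $c_t(\bm\alpha) := (\bm\alpha-\bm\alpha_0)'\bm Y_{1t}$ is $\mathcal{F}_{t-1}$-measurable. Using the independence of $\eta_t$ from $\mathcal{F}_{t-1}$,
\[
E\bigl[\ell_t^E(\bm\theta)\mid\mathcal{F}_{t-1}\bigr] = \ln h_t(\bm\delta) + \frac{h_t(\bm\delta_0)}{h_t(\bm\delta)}\, E\bigl|\eta_t - c_t(\bm\alpha)\bigr|.
\]
Assumption~\ref{assum3EQMLE-moment and density}(ii) guarantees that $\eta_t$ has a positive continuous density, so $0$ is its \emph{unique} median and the map $c\mapsto E|\eta_t-c|$ attains its unique minimum $E|\eta_t|=1$ at $c=0$. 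Hence
\[
E\bigl[\ell_t^E(\bm\theta)\mid\mathcal{F}_{t-1}\bigr] \geq \ln h_t(\bm\delta) + \frac{h_t(\bm\delta_0)}{h_t(\bm\delta)},
\]
with a.s.\ equality iff $c_t(\bm\alpha)=0$ a.s. The deterministic function $f(x)=\ln x + a/x$ with $a>0$ is uniquely minimised at $x=a$ with value $\ln a + 1$; applying this pointwise gives $E[\ell_t^E(\bm\theta)]\geq E[\ell_t^E(\bm\theta_0)]$, with equality iff, simultaneously, $h_t(\bm\delta)=h_t(\bm\delta_0)$ a.s.\ and $(\bm\alpha-\bm\alpha_0)'\bm Y_{1t}=0$ a.s. Invoking the identification statement \eqref{constant Vector} from the proof of Lemma~\ref{maxmum at true parm} (which rests only on the non-degeneracy of $\eta_t$, now supplied by Assumption~\ref{assum3EQMLE-moment and density}(ii)) then forces $\bm\alpha=\bm\alpha_0$ and $\bm\delta=\bm\delta_0$, establishing that $\bm\theta_0$ is the unique minimiser.

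For part~(iii), note that $\ell_t^E(\bm\theta)$ is continuous on the compact set $\Theta$ because $h_t(\bm\delta)\geq\underline\omega>0$; hence $\sup_{\bm\theta\in B_\eta(\bm\theta^*)}|\ell_t^E(\bm\theta)-\ell_t^E(\bm\theta^*)|\to 0$ pointwise as $\eta\to 0$. Since this supremum is dominated by $2\sup_\Theta|\ell_t^E(\bm\theta)|$, which is integrable by part~(i), the dominated convergence theorem closes the argument. Overall, the only delicate step is the conditional median identity underlying part~(ii); once that is isolated, the remaining pieces are routine monotonicity and dominated-convergence arguments.
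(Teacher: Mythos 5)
Your proposal is correct, and for parts (i) and (ii) it follows essentially the same route as the paper: part (i) is the same domination argument as in \eqref{A.2} (the paper factors out $E|\eta_t|$ by independence rather than writing the pointwise bound $C(1+|\eta_t|)$, which is cosmetic), and part (ii) is exactly the paper's conditioning argument, reducing the problem to the minimisation of $c\mapsto E|\eta_t-c|$ at the median and of $x\mapsto \ln x + a/x$ at $x=a$, followed by identification. Part (iii) is where you genuinely diverge: the paper proves it by three explicit first-order Taylor expansions (of $\ln h_t(\bm\delta)$, of $1/h_t(\bm\delta)$, and via the Lipschitz property of $|\cdot|$), each producing an $O(\eta)$ bound with an integrable constant, whereas you observe that $\ell_t^E(\bm\theta)$ is pathwise continuous in $\bm\theta$ (since $h_t(\bm\delta)\ge\underline{\omega}>0$) and dominated by $2\sup_{\Theta}|\ell_t^E(\bm\theta)|$, and invoke dominated convergence. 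Your route is shorter and loses nothing for the Huber-type consistency argument that consumes this lemma; the paper's computation additionally delivers an explicit linear-in-$\eta$ rate that is never used.

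Two cautions. First, in part (ii) you appeal to Assumption \ref{assum3EQMLE-moment and density}(ii) to make $0$ the \emph{unique} median, but the lemma (and Theorem \ref{thm1EQMLE-consistency}, which relies on it) is stated under Assumption \ref{assum3EQMLE-moment and density}(i) only; the paper sidesteps the same issue by citing Ling (2007) for the ``if and only if''. Zero median alone does not exclude a flat median interval on which $E|\eta_t-c|$ is constant, so you should acknowledge that you are strengthening the stated hypotheses (as the paper implicitly does). Second, the identification statement \eqref{constant Vector} was established in the proof of Lemma \ref{maxmum at true parm} using $E(\eta_t)=0$ and $E(\eta_t^2)=1$ from Assumption \ref{assum3G-QMLE-moment and density}(i), which is not assumed here; to reuse it you need the E-QMLE analogue (if $\bm c_1^{\prime}\bm Y_{1t}=0$ a.s.\ with $\bm c_1\neq \bm 0_p$, then $\eta_t$ is $\mathcal{F}_{t-1}$-measurable, hence a.s.\ constant by independence, contradicting zero median together with $E|\eta_t|=1$). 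Your parenthetical gestures at this, but the one-line argument should be written out rather than attributed to ``non-degeneracy'' alone.
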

\begin{proof}
	Recall that $\ell_t^E(\bm\theta) = \ln h_t(\bm\delta)+h_t^{-1}(\bm\delta)|\varepsilon_t(\bm\alpha)|$, where $\varepsilon_t(\bm\alpha) = y_t-\sum_{i=1}^{p}\alpha_{i}y_{t-i}$ and $h_t(\bm\delta) = \omega  +\sum_{i=1}^{p}\beta_{i}|y_{t-i}|$. 
	We first show (i). 
	Denote $c_i=\sup_{\bm \theta \in \Theta}|\alpha_{i0}-\alpha_{i}|$, then by Assumption \ref{assum1-compactness} we have $c_i<\infty$ for $i=1,\ldots,p$. 
	Note that $\varepsilon_t(\bm\alpha)=\sum_{i=1}^p(\alpha_{i0}-\alpha_i)y_{t-i}+\varepsilon_t(\bm\alpha_0)$ and $\varepsilon_t(\bm\alpha_0)=\eta_th_t(\bm\delta_0)=\eta_t\left(\omega_0+\sum_{i=1}^p\beta_{i0}\left|y_{t-i}\right|\right)$. 
	Then by Assumption \ref{assum1-compactness} and $E(|\eta_t|)=1$ by Assumption \ref{assum3EQMLE-moment and density}(i), it follows that
	\begin{align}\label{A.2}
		E&\sup_{\bm \theta \in \Theta}\left[\frac{|\varepsilon_t(\bm\alpha) |}{h_t(\bm\delta)}\right] \nonumber\\
		&\leq E\sup _{\bm \theta \in \Theta}\left[\frac{\sum_{i=1}^{p}|\alpha_{i0}-\alpha_{i}| |y_{t-i}|}{\omega+\sum_{i=1}^p \beta_i\left|y_{t-i}\right|}\right]
		+ E(|\eta_t|)E\sup _{\bm \theta \in \Theta}\left[\dfrac{\omega_0+\sum_{i=1}^{p}\beta_{i0} |y_{t-i}|}{\omega+\sum_{i=1}^{p}\beta_i |y_{t-i}|}\right] \nonumber\\
		&\leq \sum_{i=1}^{p}E\left[\frac{c_i |y_{t-i}|}{\underline{\beta}\left|y_{t-i}\right|}\right]
		+ E \left[\dfrac{\overline{\omega}}{\underline{\omega}+\underline{\beta} \sum_{i=1}^{p}|y_{t-i}|}\right]
		+\sum_{i=1}^{p}E \left[\dfrac{\beta_{i0}|y_{t-i}|}{\underline{\beta}|y_{t-i}|}\right]\nonumber\\
		&\leq \dfrac{1}{\underline{\beta}}\sum_{i=1}^{p}c_i+\dfrac{\overline{\omega}}{\underline{\omega}}+\frac{1}{\underline{\beta}}\sum_{i=1}^{p}\beta_{i0}
		< \infty.
	\end{align}
	By \eqref{finite 1} and \eqref{A.2}, we have
	\[E\sup_{\bm\theta \in \Theta}\left|\ell_{t}^E(\bm\theta)\right|\leq E \sup_{\bm\theta \in \Theta}\left|\ln h_t(\bm\delta)\right| + E \sup_{\bm \theta \in \Theta}\left[\frac{|\varepsilon_t(\bm\alpha)|}{h_t(\bm\delta)}\right] <\infty.\]
	Thus, (i) holds. 
	
	We next show (ii). Note that $\varepsilon_t(\bm\alpha)=\varepsilon_t(\bm\alpha_0)-(\bm\alpha-\bm\alpha_0)^\prime \bm Y_{1t}h_t(\bm \delta_0)$ and $\varepsilon_t(\bm\alpha_0)=\eta_t h_t(\bm \delta_0)$, where $\bm Y_{1t}=h_t^{-1}(\bm \delta_0)(y_{t-1},\ldots,y_{t-p})^\prime$. 
	Denote $\mathcal{F}_t$ as the $\sigma$-field generated by $\{y_s,s\leq t\}$. Recall that $\eta_t$ has zero median and $E(|\eta_t|)=1$, and thus $\min \limits_{a} E|\eta_t-a|=E|\eta_t-\text{median}(\eta_t)|=E|\eta_t|=1$. Then by the law of iterated expectations, we have
	\begin{align*}
		E\left[\ell_t^E(\bm\theta)\right]
		=&E\left[\ln h_t(\bm\delta)+\frac{|\varepsilon_t(\bm\alpha_0)-(\bm\alpha-\bm\alpha_0)^\prime \bm Y_{1t}h_t(\bm \delta_0)|}{h_t(\bm \delta)}\right] \nonumber\\
		= &E\left[\ln h_t(\bm \delta)+\frac{h_t(\bm \delta_0)}{h_t(\bm \delta)}E\left\{ \left|\eta_t-(\bm\alpha-\bm\alpha_0)^\prime\bm Y_{1t}\right|\Big|\mathcal{F}_{t-1}\right\}\right] \nonumber\\	
		\geq &E\left[\ln h_t(\bm \delta)+\frac{h_t(\bm \delta_0)}{h_t(\bm \delta)}E\left(|\eta_t|\big|\mathcal{F}_{t-1}\right)\right] \nonumber\\
		=&E\left[\ln h_t(\bm \delta)+\frac{h_t(\bm \delta_0)}{h_t(\bm \delta)}\right],			
	\end{align*}
	where the minimum of the inequality is attained if and only if $\bm\alpha=\bm\alpha_0$ almost surely as $\eta_t$ has zero median. 
	Moreover, the function $f(x)=\ln x + a/x$, $a \geq 0$, reaches its minimum at $x=a$. Therefore, $E[\ell_t^E(\bm \theta)]$ reaches its minimum if and only if $h_t(\bm \delta)=h_t(\bm \delta_0)$ almost surely, and hence $\bm \theta=\bm \theta_0$. As a result, $E[\ell_t^E(\bm \theta)]$ is uniformly minimized at $\bm \theta_0$, that is (ii) holds. 
	
	Finally, we show (iii). Let $\bm \theta^*=(\bm\alpha^{*\prime}$, $\bm \delta^{*\prime})^\prime \in \Theta$. For any $\bm \theta \in B_{\eta}(\bm \theta^*)$, we can see that 
	\begin{align*}
		\ell_t^E(\bm \theta)-\ell_t^E(\bm \theta^*)=\ln h_t(\bm \delta)-\ln h_t(\bm \delta^*)+\left[\frac{|\varepsilon_t(\bm\alpha)|}{h_t(\bm \delta)}-\frac{|\varepsilon_t(\bm\alpha^*)|}{h_t(\bm \delta)}\right]+\left[\frac{|\varepsilon_t(\bm\alpha^*)|}{h_t(\bm \delta)}-\frac{|\varepsilon_t(\bm\alpha^*)|}{h_t(\bm \delta^*)}\right].
	\end{align*}
	By Taylor's expansion, we can see that
	\begin{align*}
		\ln h_t(\bm \delta)-\ln h_t(\bm \delta^*)=\frac{(\bm \delta-\bm \delta^*)^\prime}{h_t(\bar{\bm\delta})} (1,|y_{t-1}|,...,|y_{t-p}|)^\prime,
	\end{align*}
	where $\bar{\bm\delta}$ lies between $\bm \delta$ and $\bm \delta^*$. Then, by Assumption \ref{assum1-compactness}, we have 
	\begin{align*}
		&E\left[\sup_{\bm \theta \in B_{\eta}(\bm \theta^*)}\left|\ln h_t(\bm \delta)-\ln h_t(\bm \delta^*)\right|\right]\nonumber\\	
		\leq &\eta E\left[\frac{1+\sum_{i=1}^p|y_{t-i}|}{\underline\omega+\underline \beta \sum_{i=1}^p|y_{t-i}|}\right]\leq \eta\left(\frac{1}{\underline{\omega}}+\frac{p}{\underline{\beta}}\right)\rightarrow 0
		\text{ as } \eta \rightarrow 0. 
	\end{align*}
	Similarly, by Taylor's expansion and \eqref{A.2}, it can be verified that
	\begin{align*}
		&E\left[\sup_{\bm \theta \in B_{\eta}(\bm \theta^*)}\left|\varepsilon_t(\bm\alpha^*)\right|\left|\frac{1}{h_t(\bm \delta)}-\frac{1}{h_t(\bm \delta^*)}\right|\right]\nonumber\\ 
		=& E \left[\sup_{\bm \theta \in B_{\eta}(\bm \theta^*)}|\varepsilon_t(\bm\alpha^*)|\frac{\left|(\bm \delta-\bm \delta^{*})^\prime(1,|y_{t-1}|,...,|y_{t-p}|)^\prime\right|}{h_{t}^{2}( \bar{\bm\delta})}\right]\nonumber\\
		\leq & \eta E\left[\sup_{\bm \theta \in B_{\eta}(\bm \theta^*)}\frac{|\varepsilon_t(\bm\alpha^*)|}{h_{t}( \bar{\bm\delta})}\left(\frac{1+\sum_{i=1}^p|y_{t-i}|}{\underline\omega+\underline \beta \sum_{i=1}^p|y_{t-i}|}\right)\right]\nonumber\\
		\leq & \eta\left(\dfrac{1}{\underline{\beta}}\sum_{i=1}^{p}c_i+\dfrac{\overline{\omega}}{\underline{\omega}}+\frac{1}{\underline{\beta}}\sum_{i=1}^{p}\beta_{i0}\right)\left(\frac{1}{\underline{\omega}}+\frac{p}{\underline{\beta}}\right)
		\rightarrow 0 \text{ as } \eta \rightarrow 0,	
	\end{align*}
	and 
	\begin{align*}
		E\left[\underset{\bm\theta \in B_{\eta} (\bm\theta^{*}) }{\sup}\frac{\left||\varepsilon_t(\bm\alpha)|-|\varepsilon_t(\bm\alpha^*)|\right|}{h_t(\bm \delta)}\right]
		\leq & \eta E\left[\frac{\sum_{i=1}^p|y_{t-i}|}{\underline\omega+\underline \beta \sum_{i=1}^p|y_{t-i}|} \right]
		\leq \frac{\eta p}{\underline{\beta}}\rightarrow 0 \text{ as } \eta\rightarrow 0.
	\end{align*}
	Therefore, we have 
	\[E \left[ \underset{\bm\theta \in B_{\eta} (\bm\theta^{*}) }{\sup}|\ell_{t}^E(\bm\theta)-\ell_{t}^E(\bm\theta^{*})|\right] \to 0  \text{ as } \eta \to 0.\] 
	Hence, (iii) is verified. The proof of this lemma is accomplished. 
\end{proof}
\begin{proof}[Proof of Theorem \ref{thm1EQMLE-consistency}]
	To establish the strong consistency of $\widehat{\bm\theta}_n$, we follow the method in \cite{Huber1967}. Let $V$ be any open neighborhood of $\bm \theta_0\in \Theta$. By Lemma \ref{lemma1} (iii), for any $\bm\theta^*\in V^c = \Theta/V$ and $\varepsilon > 0$, there exists an $\eta_0>0$ such that 
	\begin{align}\label{a3}
		E\left[\inf_{\bm\theta\in B_{\eta_0}(\bm\theta^*)}\ell_{t}^E(\bm\theta)\right]\geq E\left[\ell_{t}^E(\bm\theta^*)\right]-\varepsilon.
	\end{align}
	From Lemma \ref{lemma1} (i), by the Birkhoff ergodic theorem \citep{Fristedt2013}, if $n$ is large enough, then we have
	\begin{align}\label{a4}
		\frac{1}{n-p}\sum_{t=p+1}^{n}\inf_{\bm\theta \in B_{\eta_0}(\bm\theta^*)}\ell_{t}^E(\bm\theta)\geq E\left[ \inf_{\bm\theta \in B_{\eta_0}(\bm\theta^*)}\ell_{t}^E(\bm\theta)\right]-\varepsilon \quad \text{a.s.}.
	\end{align}
	Since $V^c$ is compact, we can choose $\{B_{\eta_0}(\bm\theta_i):\bm\theta_i \in V^c, i=1,2,...,k\}$ to be a finite covering of $V^c$. For each $\bm\theta_i\in V^c$, from Lemma \ref{lemma1} (ii), there exists an $\varepsilon_0 > 0$ such that 
	\begin{align}\label{a5}
		E\left[\inf_{\bm\theta \in B_{\eta_0}(\bm\theta_i)}\ell_{t}^E(\bm\theta)\right] \geq E\left[\ell_{t}^E(\bm\theta_0)\right]+3\varepsilon_0.
	\end{align}
	Thus by \eqref{a3}--\eqref{a5}, taking $\varepsilon = \varepsilon_0$, if $n$ is large enough, we have
	\begin{align}\label{a6}
		\inf_{\bm\theta \in V^c}L_n^E(\bm\theta) = &\min_{1\leq i \leq k}\inf_{\bm\theta \in B_{\eta_0}(\bm\theta_i)}L_n^E(\bm\theta)\nonumber\\
		\geq &\min_{1\leq i \leq k}\frac{1}{n-p}\sum_{t=p+1}^{n}\inf_{\bm\theta \in B_{\eta_0}(\bm\theta_i)}\ell_{t}^E(\bm\theta)\nonumber\\
		\geq &\min_{1\leq i \leq k}E\left[ \inf_{\bm\theta \in B_{\eta_0}(\bm\theta_i)}\ell_{t}^E(\bm\theta)\right]-\varepsilon\nonumber\\
		\geq &E\left[\ell_{t}^E(\bm\theta_0)\right]+2\varepsilon_0.
	\end{align}
	On the other hand, by the Birkhoff ergodic theorem \citep{Fristedt2013}, it follows that
	\begin{align}\label{a7}
		\inf_{\bm\theta\in V}L_n^E(\bm\theta)\leq L_n^E(\bm\theta_0)=\frac{1}{n-p}\sum_{t=p+1}^n \ell_{t}^E(\bm\theta_0)\leq E\left[\ell_{t}^E(\bm\theta_0)\right]+\varepsilon_0  \quad \text{a.s.}.
	\end{align}
	Hence, combining \eqref{a6} and \eqref{a7}, it holds that
	\[\inf_{\bm\theta \in V^c}L_n^E(\bm\theta)\geq E\left[\ell_{t}^E(\bm\theta_0)\right]+2\varepsilon_0 > E\left[\ell_{t}^E(\bm\theta_0)\right]+\varepsilon_0\geq \inf_{\bm\theta\in V}L_n^E(\bm\theta) \quad \text{a.s.}.\]
	Since $\widehat{\bm\theta}_n=\arg\min_{\Theta}L_n^E(\bm\theta)$, if $n$ is large enough, then $\widehat{\bm\theta}_n\in V$ almost surely for  $\forall V$. By the arbitrariness of $V$, it follows that $\widehat{\bm\theta}_n\rightarrow \bm\theta_0$ almost surely. The proof is accomplished.
\end{proof}

\subsection{Proof of Theorem \ref{thm2EQMLE-normality}}
To show Theorem \ref{thm2EQMLE-normality}, we introduce Lemmas \ref{lemma2}--\ref{lemma3} below, where Lemma \ref{lemma2} verifies the stochastic differentiability condition defined by \cite{pollard1985new}, and Lemma \ref{lemma3} provides a representation to obtain the $\sqrt{n}$-consistency and asymptotic normality of $\widehat{\bm\theta}_{n}$. 

\begin{lemma}\label{lemma2}
	If Assumptions \ref{assum1-compactness}, \ref{assum2-stationarity} and \ref{assum3EQMLE-moment and density} hold, then for any sequence of random variables $\bm u$ such that $\bm u=o_p(1)$, it follows that
	\[\Pi_{1n}(\bm u)=o_p(\sqrt{n}\|\bm u\|+n\|\bm u\|^{2}),\]
where $\Pi_{1n}(\bm u)=2\sum_{t=p+1}^n\int_0^{\bm u^\prime \bm K_{1t}}\left\{X_t(s)-E[X_t(s)|\mathcal{F}_{t-1}]\right\}ds$ with $\bm K_{1t} = (\bm Y_{1t}^{'},\bm 0_{p+1}^{\prime})^\prime$ and $X_t(s)=I(\eta_t\leq s)-I(\eta_t\leq 0)$. 
\end{lemma}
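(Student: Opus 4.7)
My plan is to prove this by combining a martingale-difference second-moment bound (to get the pointwise rate) with a bracketing/chaining argument in the style of \cite{pollard1985new} (to make the bound uniform over a neighborhood of $\bm 0$), following the template used in \cite{Zhu_Ling2011} and \cite{zhu2013quasi-maximum}.

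First, I would establish the martingale structure. Since $\eta_t$ is independent of $\mathcal{F}_{t-1}$, we have $E[X_t(s)\mid \mathcal{F}_{t-1}]=F(s)-F(0)$, a deterministic function of $s$. The upper limit $\bm u^\prime \bm K_{1t}$ is $\mathcal{F}_{t-1}$-measurable, so if we set $g_t(\bm u)=2\int_0^{\bm u^\prime \bm K_{1t}}\{X_t(s)-E[X_t(s)\mid \mathcal{F}_{t-1}]\}ds$, then $E[g_t(\bm u)\mid \mathcal{F}_{t-1}]=0$ and $\{g_t(\bm u)\}$ is a martingale difference sequence with respect to $\{\mathcal{F}_t\}$ for each fixed $\bm u$.

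Second, I would derive a pointwise variance bound. Writing $v=\bm u^\prime \bm K_{1t}$, a direct computation shows $\int_0^{v}X_t(s)\,ds=(v-\eta_t)I(0<\eta_t\le v)$ for $v>0$ (and symmetrically for $v<0$), so $|\int_0^v X_t(s)\,ds|\le |v|\,I(|\eta_t|\le |v|)$. Combined with $|\int_0^v E[X_t(s)\mid \mathcal{F}_{t-1}]\,ds|\le v^2\sup_x f(x)$ from Assumption \ref{assum3EQMLE-moment and density}(ii), this yields $|g_t(\bm u)|\le C(|v|\,I(|\eta_t|\le |v|)+v^2)$. Squaring, taking conditional expectation, and using $P(|\eta_t|\le |v|)\le 2|v|\sup_x f(x)$, we get $E[g_t(\bm u)^2\mid \mathcal{F}_{t-1}]\le C(|v|^3+v^4)$. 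Since $\bm K_{1t}$ is uniformly bounded by $\sqrt{p}/\underline{\beta}$ by Assumption \ref{assum1-compactness}, and martingale orthogonality gives $E[\Pi_{1n}(\bm u)^2]=\sum_t E[g_t(\bm u)^2]$, we obtain $E[\Pi_{1n}(\bm u)^2]=O(n\|\bm u\|^3)$ for $\|\bm u\|$ small. By Chebyshev's inequality, $\Pi_{1n}(\bm u)=O_p(\sqrt{n}\|\bm u\|^{3/2})$ pointwise, which is $o_p(\sqrt{n}\|\bm u\|+n\|\bm u\|^2)$ whenever $\|\bm u\|\to 0$.

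Third, to upgrade from pointwise to the stochastic-differentiability form (uniform over a shrinking neighborhood), I would apply Pollard's bracketing method. For each $r>0$, cover $\{\bm u:\|\bm u\|\le r\}\subset\mathbb{R}^{2p+1}$ by a $\delta$-net $\{\bm u_j\}$ of cardinality $O((r/\delta)^{2p+1})$, and exploit the fact that $v\mapsto \int_0^v X_t(s)\,ds$ is monotone-like on each sign component (it is $1$-Lipschitz in $v$ a.s.), so the oscillation of $g_t$ between two net points $\bm u, \bm u^\prime$ with $\|\bm u-\bm u^\prime\|\le \delta$ is controlled by $\delta\,\|\bm K_{1t}\|$ plus an indicator-style bracket. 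A maximal inequality over the finite net, using the second-moment bound from the previous step, combined with the oscillation control, yields $\sup_{\|\bm u\|\le r_n}|\Pi_{1n}(\bm u)|=o_p(\sqrt{n}r_n+n r_n^2)$ for any $r_n\to 0$, which delivers the claim since $\bm u=o_p(1)$ satisfies $\|\bm u\|\le r_n$ on events of probability tending to one.

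The main obstacle is the third step: the process $\bm u\mapsto g_t(\bm u)$ is not jointly monotone in $\bm u$ (monotonicity holds only along the ray through $\bm K_{1t}$), so the chaining argument must carefully exploit the one-dimensional structure together with the bracketing brackets induced by the indicator $I(\eta_t\le s)$. The routine but delicate part is verifying that the $\delta$-oscillation term is of the required lower order; this is handled exactly as in Lemma A.3 of \cite{zhu2013quasi-maximum} and ultimately reduces to the pointwise variance bound established in the second step.
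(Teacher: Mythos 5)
Your overall strategy coincides with the paper's: the paper also reduces $\Pi_{1n}$ to a martingale-difference process, bounds conditional second moments of the bracketing functions by $C\delta\sup_x f(x)$ (the analogue of your $E[g_t(\bm u)^2\mid\mathcal{F}_{t-1}]\le C(|v|^3+v^4)$ after factoring out $\|\bm u\|$), and then runs exactly the bracketing argument of Lemma 4 in \cite{pollard1985new} over a finite net. Your first two steps (the martingale structure, the identity $\int_0^v X_t(s)\,ds=(v-\eta_t)I(0<\eta_t\le v)$, and the resulting variance bound of order $n\|\bm u\|^3$) are correct and match the paper's computations.

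The one place where your write-up is not yet a proof is the concluding logic of the third step. You state the uniform bound as $\sup_{\|\bm u\|\le r_n}|\Pi_{1n}(\bm u)|=o_p(\sqrt{n}r_n+nr_n^2)$ and then substitute $\|\bm u\|\le r_n$. This does not deliver the claim: the target is $o_p(\sqrt{n}\|\bm u\|+n\|\bm u\|^2)$ with the \emph{random} $\|\bm u\|$ in the bound, and if $\|\bm u\|\ll r_n$ the quantity $\sqrt{n}r_n+nr_n^2$ can be vastly larger than $\sqrt{n}\|\bm u\|+n\|\bm u\|^2$ (e.g.\ $\|\bm u\|=n^{-1}$, $r_n=n^{-1/4}$). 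What Pollard's method actually gives, and what the paper proves, is the self-normalized statement $\sup_{\|\bm u\|\le\eta}|D_n(\bm u)|/(1+\sqrt{n}\|\bm u\|)=o_p(1)$ for a \emph{fixed} $\eta$, where $D_n$ carries the factor $n^{-1/2}$ and one power of $\|\bm u\|$ has already been pulled out of $\Pi_{1n}$. Obtaining this requires the dyadic annulus decomposition $A(k)=B(\delta_k)\setminus B(\delta_{k+1})$ with $\delta_k=2^{-k}\eta$, so that on each annulus the denominator $1+\sqrt{n}\|\bm u\|$ can be replaced by $\sqrt{n}\delta_k/2$ before Chebyshev is applied with the variance bound $\pi_n(\delta_k)=O(\delta_k)$; a single net of a ball of radius $r_n$ is not enough. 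Your appeal to Lemma A.3 of \cite{zhu2013quasi-maximum} points at the right machinery, but the statement you extract from it needs to be the ratio form, after which the bound $|\Pi_{1n}(\bm u)|\le 2\|\bm u\|\sqrt{n}\sum_j|D_n^{(j)}(\bm u)|\le 2\|\bm u\|\sqrt{n}\,(1+\sqrt{n}\|\bm u\|)\,o_p(1)$ immediately gives $o_p(\sqrt{n}\|\bm u\|+n\|\bm u\|^2)$.
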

\begin{proof}
    Note that $\partial\varepsilon_t(\bm\alpha_0)/\partial\bm \theta=(-\bm Y_{1t}^{\prime}h_t, \bm 0_{p+1}^{\prime})^{\prime}=-h_t\bm K_{1t}$. Then it follows that
    \[\int_0^{\bm u^{'}\bm K_{1t}}X_t(s)ds=\bm u^{'}\bm K_{1t}\int_0^{1}X_t(\bm u^{'}\bm K_{1t}s)ds=-\frac{\bm u^{'}}{h_t}\frac{\partial\varepsilon_t(\bm\alpha_0)}{\partial\bm \theta}M_t(\bm u),\]
    where $M_t(\bm u)=\int_0^{1}X_t(\bm u^{'}\bm K_{1t}s) ds$. 
    It can be verified that
    \begin{align*}
    	|\Pi_{1n}(\bm u)|
    	\leq2\|\bm u\|\sum_{j=1}^{2p+1}\left|\dfrac{1}{h_t}\dfrac{\partial\varepsilon_t(\bm\alpha_0)}{\partial\theta_j}\sum_{t=p+1}^{n}\{M_t(\bm u)-E[M_t(\bm u)|\mathcal{F}_{t-1}]\}\right|,
    \end{align*}
    where $\theta_j$ is the $j$th element of $\bm \theta$ for $1\leq j\leq 2p+1$.   
    Let $m_{t}=h_t^{-1}\partial\varepsilon_t(\bm\alpha_0)/\partial\theta_j$ and $f_t(\bm u)=m_{t}M_t(\bm u)$. Define
    \[D_n(\bm u)=\frac{1}{\sqrt{n}}\sum_{t=p+1}^{n}\{f_t(\bm u)-E[f_t(\bm u)|\mathcal{F}_{t-1}]\}.\]
    Then, to prove Lemma \ref{lemma2}, it suffices to show that, for any $\eta>0$,
    \begin{align}\label{a8}
    \sup_{\|\bm u\|\leq\eta}\frac{|D_n(\bm u)|}{1+\sqrt{n}\|\bm u\|}=o_p(1).
    \end{align}
    Note that $m_t=\max\{m_t,0\}-\max\{-m_t,0\}$. We first prove the case when $m_t\geq0$.

We adopt the method in Lemma 4 of \cite{pollard1985new} to verify \eqref{a8}. Let $\mathfrak{F}=\{f_t(\bm u):\|\bm u\|\leq\eta\}$ be a collection of functions indexed by $\bm u$. We first verify that $\mathfrak{F}$ statisfies the bracketing condition on page 304 of \cite{pollard1985new}. Denote $B_r(\bm \zeta)$ as an open neighborhood of $\bm \zeta$ with radius $r>0$, and define a constant $C_1$ to be selected later. For any fixed $\varepsilon>0$ and $0<\delta\leq\eta$, there exists a sequence of small cubes $\{B_{\varepsilon\delta/C_1}(\bm u_i)\}_{i=1}^{K_\varepsilon}$ to cover $B_{\delta}(\bm 0)$, where $K_{\varepsilon}$ is an integer less than $C_0\varepsilon^{-(2p+1)}$ and $C_0$ is not depending on $\varepsilon$ and $\delta$. 
Denote $V_i(\delta)=B_{\epsilon \delta/C_0}(\bm u_{i})\bigcap B_{\delta}(\bm0)$, and let $U_1(\delta)=V_1(\delta)$ and $U_i(\delta)=V_i(\delta)-\bigcup_{j=1}^{i-1}V_j(\delta)$ for $i\geq 2$. Note that $\{U_i(\delta)\}_{i=1}^{K_\varepsilon}$ is a partition of $B_{\delta}(\bm0)$. 
For each $\bm u_i\in U_i(\delta)$ with $1\leq i\leq K_\varepsilon$, we define the bracketing functions as follows
\begin{align*}
	f_t^{\pm}(\bm u)=m_t\int_0^1X_t\left(\bm u^{'}\bm K_{1t}s\pm\frac{\varepsilon\delta}{C_{1}h_{t}}\left\|\frac{\partial\varepsilon_t(\bm\alpha_0)}{\partial\bm \theta}\right\|\right)ds.
\end{align*}
Since the indicator function is nondecreasing and $m_t\geq0$, for any $\bm u\in U_i(\delta)$, we have
\[f_t^{-}(\bm u_i)\leq f_t(\bm u)\leq f_t^{+}(\bm u_i).\]
Moreover, by Taylor's expansion and the fact that $\partial\varepsilon_t(\bm\alpha_0)/\partial\bm \theta=(-\bm Y_{1t}^{\prime}h_t, \bm 0_{p+1}^{\prime})^{\prime}$, it holds that
\begin{align}\label{a9}
	E\left[f_t^{+}(\bm u_i)-f_t^{-}(\bm u_i)|\mathcal{F}_{t-1}\right]
	&\leq\dfrac{2\varepsilon\delta}{C_{1}h^{2}_{t}}\left\|\frac{\partial\varepsilon_t(\bm\alpha_0)}{\partial\bm \theta}\right\|^{2}\sup_{x}f(x)
	=\frac{\varepsilon \delta\Delta_t}{C_{1}},
\end{align}
where $\Delta_t = 2\left\|\bm Y_{1t}\right\|^{2}\sup_{x}f(x)$. 
Since $\sup_{x}f(x)<\infty$ by Assumption \ref{assum3EQMLE-moment and density}(ii), we choose $C_1=E(\Delta_t)$. Then by iterated-expectation, we have
\[E[f_t^{+}(\bm u_i)-f_t^{-}(\bm u_i)]=E\left\{E\left[f_t^{+}(\bm u_i)-f_t^{-}(\bm u_i)|\mathcal{F}_{t-1}\right]\right\}\leq\varepsilon\delta.\]
Thus, the family $\mathfrak{F}$ satisfies the bracketing condition.

Put $\delta_k=2^{-k}\eta$. Define $B(k)= B_{\delta_k}(\bm 0)$, and $A(k)$ to be the annulus $B(k)\setminus B(k+1)$. Fix $\varepsilon>0$, for each $1\leq i \leq K_{\varepsilon}$, by the bracketing condition, there exists a partition $\{U_i(\delta_k)\}_{i=1}^{K_\varepsilon}$ of $B(k)$.
For $\bm u\in U_i(\delta_k)$, by \eqref{a9} with $\delta=\delta_k$, we have the upper tail 
\begin{align*}
	D_n(\bm u)\leq &\frac{1}{\sqrt{n}}\sum_{t=p+1}^{n}\left\{f_t^{+}(\bm u_i)-E\left[f_t^{-}(\bm u_i)|\mathcal{F}_{t-1}\right]\right\}\nonumber\\
	=&D_n^{+}(\bm u_i)+\frac{1}{\sqrt{n}}\sum_{t=p+1}^{n}E\left[f_t^{+}(\bm u_i)-f_t^{-}(\bm u_i)|\mathcal{F}_{t-1}\right]\nonumber\\
	\leq &D_n^{+}(\bm u_i)+\sqrt{n}\varepsilon\delta_k\left(\frac{1}{nC_1}\sum_{t=p+1}^{n}\Delta_t\right),
\end{align*}
and the lower tail
\begin{align*}
D_n(\bm u)\geq &\frac{1}{\sqrt{n}}\sum_{t=p+1}^{n}\left\{f_t^{-}(\bm u_i)-E\left[f_t^{+}(\bm u_i)|\mathcal{F}_{t-1}\right]\right\}\nonumber\\
=&D_n^{-}(\bm u_i)-\frac{1}{\sqrt{n}}\sum_{t=p+1}^{n}E\left[f_t^{+}(\bm u_i)-f_t^{-}(\bm u_i)|\mathcal{F}_{t-1}\right]\nonumber\\
\geq &D_n^{-}(\bm u_i)-\sqrt{n}\varepsilon\delta_k\left(\frac{1}{nC_1}\sum_{t=p+1}^{n}\Delta_t\right),
\end{align*}
where 
\begin{align*}
D_n^{+}(\bm u_i)&=\frac{1}{\sqrt{n}}\sum_{t=p+1}^{n}\left\{f_t^{+}(\bm u_i)-E\left[f_t^{+}(\bm u_i)|\mathcal{F}_{t-1}\right]\right\} \quad\text{and}\\
D_n^{-}(\bm u_i)&=\frac{1}{\sqrt{n}}\sum_{t=p+1}^{n}\left\{f_t^{-}(\bm u_i)-E\left[f_t^{-}(\bm u_i)|\mathcal{F}_{t-1}\right]\right\}.
\end{align*}
Denote the event 
\[E_n=\left\{\omega:\frac{1}{nC_1}\sum_{t=p+1}^{n}\Delta_t(\omega)<2\right\}.\]
On $E_n$ with $\bm u\in U_i(\delta_k)$, it follows that
\begin{align}\label{a10}
  D_n^{-}(\bm u_i)-2\sqrt{n}\varepsilon\delta_k	\leq D_n(\bm u)\leq D_n^{+}(\bm u_i)+2\sqrt{n}\varepsilon\delta_k.
\end{align}
For $\bm u \in A(k)$, i.e. $\delta_{k+1} \leq \|\bm u\| \leq \delta_{k}$, we have $1+\sqrt{n}\|\bm u\|>\sqrt{n}\delta_{k+1}=\sqrt{n}\delta_k/2$. Thus, by the upper tail in \eqref{a10} and Chebyshev's inequality, it holds that
\begin{align}\label{a12}
	P\left(\sup_{\bm u\in A(k)}\frac{D_n(\bm u)}{1+\sqrt{n}\|\bm u\|}>6\varepsilon,E_n\right)\leq &P\left(\max_{1\leq i\leq K_{\varepsilon}}\sup_{\bm u\in U_i(k)\cap A(k)}D_n(\bm u)>3\sqrt{n}\varepsilon\delta_k,E_n\right)\nonumber\\
	\leq &K_{\varepsilon}\max_{1\leq i\leq K_{\varepsilon}}P(D_n^{+}(\bm u_i)>\sqrt{n}\varepsilon\delta_k)\nonumber\\
	\leq &K_{\varepsilon}\max_{1\leq i\leq K_{\varepsilon}}\frac{E[(D_n^{+}(\bm u_i))^2]}{n\varepsilon^{2}\delta_k^{2}}, 
\end{align}
and by the lower tail in \eqref{a10} and Chebyshev's inequality, we have
\begin{align}\label{a13}
	P\left(\inf_{\bm u\in A(k)}\frac{D_n(\bm u)}{1+\sqrt{n}\|\bm u\|}<-6\varepsilon,E_n\right)
	\leq &P\left(\min_{1\leq i\leq K_{\varepsilon}}\inf_{\bm u\in U_i(k)\cap A(k)}D_n(\bm u)<-3\sqrt{n}\varepsilon\delta_k,E_n\right)\nonumber\\
	\leq &K_{\varepsilon}\max_{1\leq i\leq K_{\varepsilon}}P(D_n^{-}(\bm u_i)<-\sqrt{n}\varepsilon\delta_k)\nonumber\\
	\leq &K_{\varepsilon}\max_{1\leq i\leq K_{\varepsilon}}\frac{E[(D_n^{-}(\bm u_i))^2]}{n\varepsilon^{2}\delta_k^{2}}.
\end{align}
Note that we have $\|\bm K_{1t}\|\leq p/\underline{\beta}$ by Assumption \ref{assum1-compactness}, and it follows that   
\begin{align*}
|\bm u^{'}\bm K_{1t}|\leq \|\bm u\|\|\bm K_{1t}\|\leq \frac{p\delta_{k}}{\underline{\beta}} \quad \text{and} \quad 
|m_{t}|\leq \frac{1}{h_t}\left\|\frac{\partial\varepsilon_t(\bm\alpha_0)}{\partial\bm\theta}\right\|=\|\bm K_{1t}\|\leq \frac{p}{\underline{\beta}}.
\end{align*}
By Taylor's expansion and iterated-expectation, we have
\begin{align*}
	E[f_t^{+}(\bm u_i)^2]=&E\left\{E[f_t^{+}(\bm u_i)^2|\mathcal{F}_{t-1}]\right\}\nonumber\\
	\leq &E\left\{m_{t}^{2}\int_0^1E\left[\left|X_t\left(\bm u^{'}\bm K_{1t}s+\frac{\varepsilon\delta_{k}}{C_{1}h_{t}}\left\|\frac{\partial\varepsilon_t(\bm\alpha_0)}{\partial\bm \theta}\right\|\right) \Big|\mathcal{F}_{t-1}\right|\right]ds\right\}\nonumber\\
	\leq &\frac{p^2}{\underline{\beta}^2}E\left(\sup_{|x|\leq C\delta_{k}}|F(x)-F(0)|\right)
	\leq C\delta_{k}\sup_{x}f(x).
\end{align*}
This together with the fact that $f_t^{+}(\bm u_i)-E\left[f_t^{+}(\bm u_i)|\mathcal{F}_{t-1}\right]$ is a martingale difference sequence,  and $\sup_{x}f(x)<\infty$ by Assumption \ref{assum3EQMLE-moment and density}(ii), implies that
\begin{align}\label{a14}
	E\left[(D_n^{+}(\bm u_i))^2\right]=&\frac{1}{n}\sum_{t=p+1}^nE\left\{f_t^{+}(\bm u_i)-E\left[f_t^{+}(\bm u_i)|\mathcal{F}_{t-1}\right]\right\}^2\nonumber\\
	\leq &\frac{1}{n}\sum_{t=p+1}^nE\left[(f_t^{+}(\bm u_i))^2\right]
	\leq C\delta_{k}\sup_{x}f(x):=\pi_n(\delta_k).
\end{align}
Similarly, we have
\begin{align}\label{a15}
	E\left[(D_n^{-}(\bm u_i))^2\right]\leq \pi_n(\delta_k).
\end{align}
Combining \eqref{a12} and \eqref{a14}, we have
\[P\left(\sup_{\bm u\in A(k)}\frac{D_n(\bm u)}{1+\sqrt{n}\|\bm u\|}>6\varepsilon,E_n\right)\leq \frac{K_{\varepsilon}\pi_n(\delta_k)}{n\varepsilon^2\delta_k^2}.\]
Combining \eqref{a13} and \eqref{a15}, we have
\[P\left(\inf_{\bm u\in A(k)}\frac{D_n(\bm u)}{1+\sqrt{n}\|\bm u\|}<-6\varepsilon,E_n\right)\leq \frac{K_{\varepsilon}\pi_n(\delta_k)}{n\varepsilon^2\delta_k^2}.\]
Thus, we can show that
\begin{align}\label{a16}
	P\left(\sup_{\bm u\in A(k)}\frac{|D_n(\bm u)|}{1+\sqrt{n}\|\bm u\|}>6\varepsilon,E_n\right)\leq \frac{2K_{\varepsilon}\pi_n(\delta_k)}{n\varepsilon^2\delta_k^2}.
\end{align}
Since $\pi_n(\delta_k)\to 0$ as $k\to\infty$, we can choose $k_{\varepsilon}$ so that
	$2K_{\varepsilon}\pi_n(\delta_k)/(\varepsilon^2\eta^2)<\varepsilon$
for $k\geq k_{\varepsilon}$. Let $k_n$ be an integer so that $n^{-1/2}\leq 2^{-k_n}\leq 2n^{-1/2}$. Split $\{\bm u:\|\bm u\|\leq \eta\}$ into two sets $B:=B(k_n+1)$ and $B^{c}:=B(0)-B(k_n+1)=\cup_{k=0}^{k_n}A(k)$. Then by \eqref{a16}, we have
\begin{align}\label{a17}
	P\left(\sup_{\bm u\in B^{c} }\frac{|D_n(\bm u)|}{1+\sqrt{n}\|\bm u\|}>6\varepsilon \right)\leq &\sum_{k=0}^{k_n}P\left(\sup_{\bm u\in A(k)}\frac{|D_n(\bm u)|}{1+\sqrt{n}\|\bm u\|}>6\varepsilon,E_n\right)+P(E_n^{c})\nonumber\\
	\leq &\frac{1}{n}\sum_{k=0}^{k_{\varepsilon}-1}\frac{2K_{\varepsilon}\pi_n(\delta_k)}{\varepsilon^{2}\eta^{2}}2^{2k}+\frac{\varepsilon}{n}\sum_{k=k_{\varepsilon}}^{k_n}2^{2k}+P(E_n^{c})\nonumber\\
	\leq &O\left(\frac{1}{n}\right)+4\varepsilon+P(E_n^{c}).
\end{align}
Since $1+\sqrt{n}\|\bm u\|>1$ and $\sqrt{n}\delta_{k_n+1}<1$, similar to the proof of \eqref{a12} and \eqref{a14}, we have
\begin{align*}
	P\left(\sup_{\bm u\in B}\frac{D_n(\bm u)}{1+\sqrt{n}\|\bm u\|}>3\varepsilon,E_n\right)
	\leq &P\left(\max_{1\leq i\leq K_{\varepsilon}}D_n^{+}(\bm u_i)>\varepsilon,E_n\right)
	\leq \frac{K_{\varepsilon}\pi_n(\delta_{k_n+1})}{\varepsilon^2}.
\end{align*}
We can get the same bound for the lower tail. Therefore, we have
\begin{align}\label{a18}
	P\left(\sup_{\bm u\in B}\frac{|D_n(\bm u)|}{1+\sqrt{n}\|\bm u\|}>3\varepsilon \right)\leq &P\left(\sup_{\bm u\in B}\frac{|D_n(\bm u)|}{1+\sqrt{n}\|\bm u\|}>3\varepsilon,E_n\right)+P(E_n^c)\nonumber\\
	\leq &\frac{2K_{\varepsilon}\pi_n(\delta_{k_n+1})}{\varepsilon^2}+P(E_n^c).
\end{align}
Note that $\pi_n(\delta_{k_n+1})\rightarrow 0$ as $n\rightarrow\infty$. Furthermore, $P(E_n)\rightarrow1$ by the ergodic theorem. Hence, $P(E_n^{c})\rightarrow 0$ as $n\rightarrow\infty$. 
Finally, \eqref{a8} follows by \eqref{a17} and \eqref{a18}. 
We can verify \eqref{a8} similarly for the case of $m_t<0$. This completes the proof of this lemma.
\end{proof}

\begin{lemma}\label{lemma3}
Suppose that Assumptions \ref{assum1-compactness}, \ref{assum2-stationarity} and \ref{assum3EQMLE-moment and density} hold, then for any sequence of random variables $\bm u$ such that $\bm u=o_p(1)$, it follows that
\[(n-p)[L_n(\bm \theta_0+\bm u) - L_n(\bm \theta_0)]=\sqrt{n}\bm u^{'}\bm T_n+(\sqrt{n}\bm u)^{'}\Sigma_2 (\sqrt{n}\bm u)+o_p(\sqrt{n}\|\bm u\|+n\|\bm u\|^{2}),\]
where $\bm T_n=\bm T_{1n}+\bm T_{2n}$ with $\bm T_{1n}=n^{-1/2}\sum_{t=p+1}^n \bm K_{1t}[I(\eta_t<0)-I(\eta_t>0)]$ and $\bm T_{2n}=n^{-1/2}\sum_{t=p+1}^n\bm K_{2t}(1-|\eta_t|)$, and $\Sigma_2 = \diag \left \{f(0)E(\bm Y_{1t}\bm Y_{1t}^\prime),E(\bm Y_{2t}\bm Y_{2t}^\prime)/2\right\}$. 
Moreover, 
\[\bm T_n\rightarrow _{\mathcal{L}} N(0,\Omega_2) \quad as \quad n\rightarrow\infty,\]
where 
\[\Omega_2 = \begin{pmatrix} E(\bm Y_{1t}\bm Y_{1t}^\prime) & \kappa_1 E(\bm Y_{1t}\bm Y_{2t}^\prime) \\ \kappa_1 E(\bm Y_{2t}\bm Y_{1t}^\prime) & \kappa_2 E(\bm Y_{2t}\bm Y_{2t}^\prime) \end{pmatrix} \quad \text{with} \quad \kappa_1=E(\eta_t)\;\;\text{and}\;\; \kappa_2=E(\eta_t^2)-1.\]
\end{lemma}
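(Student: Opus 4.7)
My plan is to isolate the non-smooth part of $\ell_t^E$ via the identity $|\eta_t - a_t| - |\eta_t| = -a_t\sgn(\eta_t) + 2\int_0^{a_t}X_t(s)\,ds$, where $a_t = \bm{u}^{\prime}\bm{K}_{1t} = \bm{u}_1^{\prime}\bm{Y}_{1t}$ and $b_t = \bm{u}_2^{\prime}\bm{Y}_{2t}$ with $\bm{u} = \bm\theta - \bm\theta_0$ split conformably as $(\bm u_1^{\prime},\bm u_2^{\prime})^{\prime}$; then Taylor-expand the smooth factors $\ln(1+b_t)$ and $(1+b_t)^{-1}$ to second order in $b_t$, and use Lemma \ref{lemma2} to dispose of the only genuinely non-smooth residual. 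Substituting the identity gives
\[
\ell_t^E(\bm\theta_0+\bm u) - \ell_t^E(\bm\theta_0) = \ln(1+b_t) - \frac{|\eta_t|b_t}{1+b_t} - \frac{a_t\sgn(\eta_t)}{1+b_t} + \frac{2\int_0^{a_t}X_t(s)\,ds}{1+b_t}.
\]
Since $\bm Y_{1t},\bm Y_{2t}$ are bounded under Assumption \ref{assum1-compactness}, one has $|a_t|,|b_t|\leq C\|\bm u\|$ uniformly in $t$, so the cubic Taylor remainders contribute $O_p(n\|\bm u\|^3)=o_p(n\|\bm u\|^2)$.

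After summation the linear-in-$\bm u$ part collapses to $\sum_t[b_t(1-|\eta_t|) - a_t\sgn(\eta_t)] = \sqrt{n}\,\bm u^{\prime}\bm T_n$, precisely the score. The quadratic contribution is assembled from two principal pieces. First, $\sum_t b_t^2(|\eta_t| - 1/2)$ equals $(n/2)\bm u_2^{\prime} E(\bm Y_{2t}\bm Y_{2t}^{\prime})\bm u_2 + o_p(n\|\bm u\|^2)$ by the ergodic theorem and independence of $\eta_t$ from $\mathcal{F}_{t-1}$ together with $E|\eta_t|=1$, yielding the $\bm\delta$-block of $n\bm u^{\prime}\Sigma_2\bm u$. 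Second, $2\sum_t\int_0^{a_t}X_t(s)ds$ is split into the conditional mean $2\int_0^{a_t}[F(s)-F(0)]ds = f(0)a_t^2 + o(a_t^2)$ (by Fubini and continuity of $f$) plus the centered residual $\Pi_{1n}(\bm u)$; the ergodic theorem converts the mean piece into $f(0)n\bm u_1^{\prime} E(\bm Y_{1t}\bm Y_{1t}^{\prime})\bm u_1 + o_p(n\|\bm u\|^2)$, the $\bm\alpha$-block of $n\bm u^{\prime}\Sigma_2\bm u$, while Lemma \ref{lemma2} directly gives $\Pi_{1n}(\bm u) = o_p(\sqrt{n}\|\bm u\| + n\|\bm u\|^2)$. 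The cross term $\sum_t a_tb_t\sgn(\eta_t)$ is a martingale difference sum (since $E\sgn(\eta_t)=0$), hence $O_p(\sqrt{n}\|\bm u\|^2) = o_p(\sqrt{n}\|\bm u\|)$. The interactions coming from expanding $(1+b_t)^{-1}$ against the integral are handled by a further conditional-mean/centered decomposition, the latter forming an MDS bounded by $O_p(\sqrt{n}\|\bm u\|^{5/2}) = o_p(\sqrt{n}\|\bm u\|)$ via the sharp estimate $E[(\int_0^{a_t}X_t(s)ds)^2|\mathcal{F}_{t-1}]=O(|a_t|^3)$.

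For the CLT on $\bm T_n$, I rewrite $\bm T_n = n^{-1/2}\sum_t \bm Z_t$ with $\bm Z_t = -\bm K_{1t}\sgn(\eta_t) + \bm K_{2t}(1-|\eta_t|)$. Under Assumption \ref{assum3EQMLE-moment and density}(i), $\{\bm Z_t,\mathcal{F}_t\}$ is a stationary ergodic martingale difference sequence, and Assumption \ref{assum3EQMLE-moment and density}(iii) yields $E\|\bm Z_t\|^2<\infty$. A block-wise computation of $E[\bm Z_t\bm Z_t^{\prime}|\mathcal{F}_{t-1}]$, using $E[\sgn^2(\eta_t)]=1$, $E[-\sgn(\eta_t)(1-|\eta_t|)]=\kappa_1$ (which reduces to $E\eta_t=\kappa_1$ because $P(\eta_t<0)=P(\eta_t>0)=1/2$), and $E[(1-|\eta_t|)^2]=\kappa_2$, gives $E[\bm Z_t\bm Z_t^{\prime}] = \Omega_2$; the martingale CLT combined with the Cram\'{e}r--Wold device then closes the argument. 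The main obstacle of the whole proof is the bookkeeping around the non-smooth term: a naive bound on $b_t\int_0^{a_t}X_t(s)ds$ is merely $O_p(n\|\bm u\|^2)$, of the same order as the $\Sigma_2$ term, so one must exploit the $O(|a_t|^3)$ conditional-variance estimate together with Lemma \ref{lemma2} to squeeze this remainder below $\sqrt{n}\|\bm u\|$.
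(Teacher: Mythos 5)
Your proposal is correct and follows essentially the same route as the paper: Knight's identity to split off the non-smooth part, Lemma \ref{lemma2} for the centered integral term, extraction of the conditional mean $\int_0^{a_t}[F(s)-F(0)]ds$ to produce the $f(0)E(\bm Y_{1t}\bm Y_{1t}^{\prime})$ block, a second-order Taylor expansion of the scale part for the $\tfrac12 E(\bm Y_{2t}\bm Y_{2t}^{\prime})$ block, and the martingale CLT with Cram\'{e}r--Wold for $\bm T_n$. The only differences are cosmetic bookkeeping (you normalize by $h_t(\bm\delta_0)$ and expand in $a_t,b_t$ where the paper splits $H_n$ into mean-shift, scale-shift and cross terms $A_t,B_t,C_t$) plus a sharper-than-necessary MDS bound on the interaction terms --- note the tolerance $o_p(\sqrt{n}\|\bm u\|+n\|\bm u\|^{2})$ already absorbs any $o_p(n\|\bm u\|^{2})$ remainder, so the $O(|a_t|^{3})$ conditional-variance estimate, while valid, is not strictly needed.
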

\begin{proof}
We first re-parameterize the objective function as
\[H_n(\bm u) = (n-p)\left[L_n^E(\bm \theta_0+\bm u) - L_n^E(\bm \theta_0)\right],\]
where $\bm u\in \Lambda\equiv\{\bm u = (\bm u_1^\prime,\bm u_2^\prime)^\prime:\bm u+\bm \theta_0 \in \Theta\}$. Let $\widehat{\bm u}_n=\widehat{\bm \theta}_n-\bm \theta_0$. Then, $\widehat{\bm u}_n$ is the minimizer of $H_n(\bm u)$ in $\Lambda$. Furthermore, we have
\begin{align*}
	H_n(\bm u) = \sum_{t=p+1}^nA_t(\bm u) + \sum_{t=p+1}^nB_t(\bm u) +\sum_{t=p+1}^nC_t(\bm u),
\end{align*}
where
\[A_t(\bm u)=\frac{1}{h_t(\bm \delta_0)}\left[|\varepsilon_t(\bm\alpha_0+\bm u_1)|-|\varepsilon_t(\bm\alpha_0)|\right],\]
\[B_t(\bm u)=\ln h_t(\bm \delta_0+\bm u_2)-\ln h_t(\bm \delta_0)+\frac{|\varepsilon_t(\bm\alpha_0)|}{h_t(\bm \delta_0+\bm u_2)}-\frac{|\varepsilon_t(\bm\alpha_0)|}{h_t(\bm \delta_0)},\]
\[C_t(\bm u)=\left[\frac{1}{h_t(\bm \delta_0+\bm u_2)}-\frac{1}{h_t(\bm \delta_0)}\right]\left[|\varepsilon_t(\bm\alpha_0+\bm u_1)|-|\varepsilon_t(\bm\alpha_0)|\right].\]
Recall that $\bm K_{1t} = (\bm Y_{1t}^{'},\bm 0_{p+1}^{\prime})^\prime$. By Taylor's expansion, we can show that
\begin{align}\label{a19}
\frac{\varepsilon_t(\bm\alpha_0+\bm u_1)}{h_t(\bm \delta_0)}-\frac{\varepsilon_t(\bm\alpha_0)}{h_t(\bm \delta_0)}=-\frac{\bm u_1^\prime(y_{t-1},...,y_{t-p})^\prime}{h_t(\bm \delta_0)}=-\bm u^\prime \bm K_{1t}.
	\end{align}
Let $I(\cdot)$ be the indicator function. For $x\neq0$, using the identity
\begin{align}\label{a20}
	|x-y|- |x| = -y[I(x>0)-I(x<0)]+2\int_0^y[I(x\leq s)-I(x\leq0)]ds,
\end{align}
together with \eqref{a19}, we have
\begin{align}\label{KightEqVarepsilon_t}
|\varepsilon_t(\bm\alpha_0+\bm u_1)|-|\varepsilon_t(\bm\alpha_0)|=&-\bm u^{'}\bm K_{1t}h_t[I(\eta_t>0)-I(\eta_t<0)]\nonumber\\
&+2h_t\int _0^{\bm u^{'}\bm K_{1t}} X_t\left(s\right)ds.
\end{align}
Therefore, it follows that
\begin{align*}
	A_t(\bm u)=\bm u^\prime \bm K_{1t}[I(\eta_t<0)-I(\eta_t>0)]+2\int_0^{\bm u^\prime \bm K_{1t}}X_t(s)ds , 
\end{align*}
where $X_t(s)=I(\eta_t\leq s)-I(\eta_t\leq 0)$. Then we have
\begin{align}\label{a22At}
	\sum_{t=p+1}^nA_t(\bm u)=\sqrt{n}\bm u^\prime \bm T_{1n}+\Pi_{1n}(\bm u)+\Pi_{2n}(\bm u),
\end{align}
where 
\[\bm T_{1n}=\dfrac{1}{\sqrt{n}}\sum_{t=p+1}^n \bm K_{1t}[I(\eta_t<0)-I(\eta_t>0)],\]
\[\Pi_{1n}(\bm u)=2\sum_{t=p+1}^n\int_0^{\bm u^\prime \bm K_{1t}}\left\{X_t(s)-E\left[X_t(s)|\mathcal{F}_{t-1}\right]\right\}ds,\]
\[\Pi_{2n}(\bm u)=2\sum_{t=p+1}^n\int_0^{\bm u^\prime \bm K_{1t}}E\left[X_t(s)|\mathcal{F}_{t-1}\right]ds.\]

For $\Pi_{1n}(\bm u)$ with $\bm u=o_p(1)$, by Lemma 2, it holds that
\begin{align}\label{a23}
	\Pi_{1n}(\bm u)=o_p(\sqrt{n}\|\bm u\|+n\|\bm u\|^{2}).
\end{align}

For $\Pi_{2n}(\bm u)$, by iterated-expectation and Taylor's expansion, we have
\begin{align}\label{a24}
	\Pi_{2n}(\bm u)
	&=2\sum_{t=p+1}^{n}\int_{0}^{\bm u^{'}\bm K_{1t}}E\left[I(\eta_t\leq s)-I(\eta_t\leq 0)|\mathcal{F}_{t-1}\right]ds\nonumber\\
	&=2\sum_{t=p+1}^{n}\int_{0}^{\bm u^{'}\bm K_{1t}}[F(s)-F(0)]ds\nonumber\\
	&=2\sum_{t=p+1}^{n}\int_{0}^{\bm u^{'}\bm K_{1t}}sf(0)ds+2\sum_{t=p+1}^{n}\int_{0}^{\bm u^{'}\bm K_{1t}}s[f(s^*)-f(0)]ds\nonumber\\
	&=(\sqrt{n}\bm u)^{'}[K_{1n}+K_{2n}(\bm u)](\sqrt{n}\bm u),
\end{align}
where $s^*$ lies between $0$ and $s$,
\begin{align*}
	K_{1n}=\frac{f(0)}{
	n} \sum_{t=p+1}^{n}\bm K_{1t}\bm K_{1t}^{'}\text{ and }
	K_{2n}(\bm u)=\frac{2}{n\|\bm u\|^2}\sum_{t=p+1}^{n}\int_{0}^{\bm u^{'}\bm K_{1t}}s[f(s^*)-f(0)]ds.
\end{align*}
By the ergodic theorem, it holds that
\begin{align}\label{a25}
	K_{1n}=f(0)E[\bm K_{1t}\bm K_{1t}^{'}]+o_p(1)=\Sigma_{21}+o_p(1),
\end{align}
where $\Sigma_{21}=\diag\left\{f(0)E(\bm Y_{1t}\bm Y_{1t}^\prime),0_{(p+1)\times(p+1)}\right\}$. 
Furthermore, since $\|\bm K_{1t}\|\leq p/\underline{\beta}$ by Assumption \ref{assum1-compactness} and $\|\bm u\|\leq \eta$, we have $|\bm u^{'}\bm K_{1t}|\leq p\eta/\underline{\beta}$. Then for any $\eta>0$, it holds that
\begin{align*}
	\sup_{\|\bm u\|<\eta}| K_{2n}(\bm u)| 
	\leq &\sup_{\|\bm u\|<\eta}\frac{2}{n\|\bm u\|^2}\sum_{t=p+1}^{n}\int_{0}^{|\bm u^{'}\bm K_{1t}|}s|f(s^{*})-f(0)|ds\nonumber\\
	 \leq &\frac{1}{n\|\bm u\|^{2}}\sum_{t=p+1}^{n}\|\bm u\|^2\|\bm K_{1t}\|^{2}\left[\sup_{0\leq s\leq p\eta/\underline{\beta}}|f(s)-f(0)|\right]\nonumber\\
	 \leq &\frac{p^2}{n\underline{\beta}^2}\sum_{t=p+1}^{n}\left[\sup_{0\leq s\leq p\eta/\underline{\beta}}|f(s)-f(0)|\right].
	  \end{align*}
By the dominated convergence theorem and $\sup_{x}f(x)<\infty$ by Assumption \ref{assum3EQMLE-moment and density}(ii), we have 
\[\lim_{\eta\rightarrow 0}E\left[\sup_{0\leq s\leq p\eta/\underline{\beta}}|f(s)-f(0)|\right]=E\left[\lim_{\eta\rightarrow 0}\sup_{0\leq s\leq p\eta/\underline{\beta}}|f(s)-f(0)|\right]=0.\]
Therefore, by Markov's theorem and the stationarity of $\{y_t\}$ by Assumption \ref{assum2-stationarity}, for $\forall \varepsilon, \delta>0$, there exists $\eta_0=\eta_0(\varepsilon)>0$ such that
\begin{align}\label{a26}
	P\left(\sup_{\|\bm u\|\leq\eta_0}|K_{2n}(\bm u)|>\delta\right)<\frac{\varepsilon}{2},
\end{align}
for all $n\geq 1$. On the other hand, since $\bm u=o_p(1)$, it follows that 
\begin{align}\label{a27}
	P(\|\bm u\|>\eta_0)<\frac{\varepsilon}{2}
\end{align}
as $n$ is large enough. By \eqref{a26} and \eqref{a27}, for $\forall\varepsilon,\delta>0,$ we have
\begin{align*}
	P\left(|K_{2n}(\bm u)|>\delta\right)=&P(|K_{2n}(\bm u)|>\delta,\|\bm u\|\leq\eta_0)+P(|K_{2n}(\bm u)|>\delta,\|\bm u\|>\eta_0)\nonumber\\
	\leq &P\left(\sup_{\|\bm u\|\leq\eta_0}|K_{2n}(\bm u)|>\delta\right)+P(\|\bm u\|>\eta_0)<\varepsilon
\end{align*}
as $n$ is large enough. Hence, it holds that $K_{2n}(\bm u)=o_p(1)$. Furthermore, combining \eqref{a24} and \eqref{a25}, we can show that 
\begin{align}\label{a28}
	\Pi_{2n}(\bm u)=(\sqrt{n}\bm u)^\prime\Sigma_{21}(\sqrt{n}\bm u)+o_p(n\|\bm u\|^{2}).
\end{align}
As a result, combining \eqref{a22At}, \eqref{a23} and \eqref{a28}, we have
\begin{align}\label{SumAt}
\sum_{t=p+1}^nA_t(\bm u)=\sqrt{n}\bm u^\prime \bm T_{1n} + (\sqrt{n}\bm u)^\prime\Sigma_{21}(\sqrt{n}\bm u) + o_p(\sqrt{n}\|\bm u\|+n\|\bm u\|^{2}).
\end{align}

We next consider $B_t(\bm u)$. Note that $\partial h_t(\bm\delta_0)/\partial \bm\delta= h_t\bm Y_{2t}$ and $\varepsilon_t(\bm\alpha_0)=h_t\eta_t$. 
By Taylor's expansion, we can show that
 \begin{align*}
	B_t(\bm u)&=\ln h_t(\bm \delta_0+\bm u_2)+\frac{|\varepsilon_t(\bm\alpha_0)|}{h_t(\bm \delta_0+\bm u_2)}-\left(\ln h_t(\bm \delta_0)+\frac{|\varepsilon_t(\bm\alpha_0)|}{h_t(\bm \delta_0)}\right)\nonumber\\
	&=\bm u_2^\prime \bm Y_{2t}-|\eta_t|\bm u_2^\prime \bm Y_{2t}-\frac{\bm u_2^\prime\frac{\partial h_t( \bar{\bm\delta})}{\partial \bm \delta}\frac{\partial h_t( \bar{\bm\delta})}{\partial \bm \delta^\prime}\bm u_2}{2h_t^2( \bar{\bm\delta})}+\frac{|\varepsilon_t(\bm\alpha_0)|\bm u_2^\prime\frac{\partial h_t( \bar{\bm\delta})}{\partial \bm \delta}\frac{\partial h_t( \bar{\bm\delta})}{\partial \bm \delta^\prime}\bm u_2}{h_t^{3}( \bar{\bm\delta})}\nonumber\\
	&=\bm u^\prime\bm K_{2t}(1-|\eta_t|)+\bm u^\prime\left(\frac{|\varepsilon_t
	(\bm\alpha_0)|}{h_t( \bar{\bm\delta})}-\frac{1}{2}\right)\frac{1}{h_t^{2}( \bar{\bm\delta})}\frac{\partial h_t( \bar{\bm\delta})}{\partial \bm \theta}\frac{\partial h_t( \bar{\bm\delta})}{\partial \bm \theta^\prime} \bm u,
\end{align*}
where $\bar{\bm\delta}$ lies between $\bm \delta_0$ and $\bm \delta_0 +\bm u_2$, and $\bm K_{2t}=(\bm 0_{p}^{\prime},\bm Y_{2t}^\prime)^\prime$. Then, we have
\begin{align}\label{Bt}
	\sum_{t=p+1}^nB_t(\bm u)=\sqrt{n}\bm u^\prime \bm T_{2n}+\Pi_{3n}(\bm u),
\end{align}
where
\[\bm T_{2n}=\dfrac{1}{\sqrt{n}}\sum_{t=p+1}^n\bm K_{2t}(1-|\eta_t|)\text{ and } 
   \Pi_{3n}(\bm u)=\bm u^\prime\sum_{t=p+1}^{n}\left(\frac{|\varepsilon_t
	(\bm\alpha_0)|}{h_t( \bar{\bm\delta})}-\frac{1}{2}\right)\frac{1}{h_t^{2}( \bar{\bm\delta})}\frac{\partial h_t(\bar{\bm\delta})}{\partial \bm \theta}\frac{\partial h_t(\bar{\bm\delta})}{\partial \bm \theta^\prime}\bm u.\]
Rewrite $\Pi_{3n}(\bm u)$ as $\Pi_{3n}(\bm u)=(\sqrt{n}\bm u)^{'}\left[n^{-1}\sum_{t=p+1}^{n}J_t(\bar{\bm\delta})\right](\sqrt{n}\bm u)$, where 
\[J_t(\bar{\bm\delta})=\left(\frac{|\varepsilon_t
	(\bm\alpha_0)|}{h_t(\bar{\bm\delta})}-\frac{1}{2}\right)\frac{1}{h_t^{2}(\bar{\bm\delta})}\frac{\partial h_t(\bar{\bm\delta})}{\partial \bm \theta}\frac{\partial h_t(\bar{\bm\delta})}{\partial \bm \theta^{'}}.\]
By the ergodic theorem, it is easy to see that
\[\frac{1}{n}\sum_{t=p+1}^{n}J_t(\bar{\bm\delta})=E[J_t(\bar{\bm\delta})]+o_p(1).\]
Since $\bar{\bm\delta}\rightarrow\bm \delta_0$ almost surely, by the dominated convergence theorem, we have
\[\lim_{n\rightarrow\infty}E[J_t(\bar{\bm\delta})]=E[J_t(\bm \theta_0)]=\Sigma_{22},\]
where $\Sigma_{22}=\diag\left\{0_{p\times p},E(\bm Y_{2t}\bm Y_{2t}^\prime)/2\right\}$. Thus, for $\bm u=o_p(1)$, we can get that
\begin{align}\label{a29}
	\Pi_{3n}(\bm u)=(\sqrt{n}\bm u)^\prime\Sigma_2 (\sqrt{n}\bm u)+o_p(n\|\bm u\|^{2}).
\end{align}
Therefore, \eqref{Bt} and \eqref{a29} imply that
\begin{align}\label{SumBt}
\sum_{t=p+1}^nB_t(\bm u)=\sqrt{n}\bm u^\prime \bm T_{2n} + (\sqrt{n}\bm u)^\prime\Sigma_{22} (\sqrt{n}\bm u)+o_p(n\|\bm u\|^{2}).
\end{align}

Finally, we consider $C_t(\bm u)$. By Taylor's expansion, we have
\[\frac{1}{h_t(\bm \delta_0+\bm u_2)}-\frac{1}{h_t(\bm \delta_0)}=-\frac{\bm u^{'}}{h_t^{2}(\bm \delta_0)}\frac{\partial h_t(\bar{\bm\delta})}{\partial\bm \theta},\]
where $\bar{\bm\delta}$ lies between $\bm \delta_0$ and $\bm \delta_0+\bm u_2$. This together with \eqref{KightEqVarepsilon_t}, implies that
\begin{align*}
	\sum_{t=p+1}^{n}C_t(\bm u)=(\sqrt{n}\bm u)^{'}[K_{3n}(\bm u)+K_{4n}(\bm u)](\sqrt{n}\bm u),
\end{align*}
where
\[K_{3n}(\bm u)=\frac{1}{n}\sum_{t=p+1}^{n}\frac{1}{h_t}\frac{\partial h_t(\bar{\bm\delta})}{\partial\bm \theta}\bm K_{1t}^{\prime}[I(\eta_t<0)-I(\eta_t>0)],\]
\[K_{4n}(\bm u)=-\frac{1}{n}\sum_{t=p+1}^{n}\frac{2}{h_t}\frac{\partial h_t(\bar{\bm\delta})}{\partial\bm \theta}\bm K_{1t}^{\prime}\int _0^1 X_t(\bm u^{'}\bm K_{1t}s)ds.\]
Since $\eta_t$ has zero median by Assumption \ref{assum3EQMLE-moment and density}(i), then by the ergodic theorem, we have
\[K_{3n}(\bm u)=E\left\{\frac{1}{h_t}\frac{\partial h_t(\bar{\bm\delta})}{\partial\bm \theta}\bm K_{1t}^{\prime}[I(\eta_t<0)-I(\eta_t>0)]\right\}+o_p(1)=o_p(1).\]
Since $\bm u=o_p(1)$, similar to the proof of $K_{2n}(\bm u)$, we can show that $K_{4n}(\bm u)=o_p(1)$. Therefore, it follows that
\begin{align}\label{SumCt}
\sum_{t=p+1}^{n}C_t(\bm u)=o_p(n\|\bm u\|^{2})
\end{align}
Combining \eqref{SumAt}, \eqref{SumBt} and \eqref{SumCt}, we have
\[(n-p)[L_n(\bm \theta_0+\bm u) - L_n(\bm \theta_0)]=\sqrt{n}\bm u^{'}\bm T_n+(\sqrt{n}\bm u)^{'}\Sigma_2 (\sqrt{n}\bm u)+o_p(\sqrt{n}\|\bm u\|+n\|\bm u\|^{2}),\]
where $\bm T_n=\bm T_{1n}+\bm T_{2n}$ and $\Sigma_2 = \Sigma_{21}+\Sigma_{22} = \diag \left \{f(0)E(\bm Y_{1t}\bm Y_{1t}^\prime),E(\bm Y_{2t}\bm Y_{2t}^\prime)/2\right\}$. 

Moreover, let $\bm G_t=\left(\bm Y_{1t}^\prime[I(\eta_t<0)-I(\eta_t>0)],\bm Y_{2t}^\prime(1-|\eta_t|)\right)^\prime$, we have that $\bm T_n=\bm T_{1n}+\bm T_{2n}=\sum_{t=p+1}^{n}\bm G_t$.
If Assumptions \ref{assum1-compactness}, \ref{assum2-stationarity} and \ref{assum3EQMLE-moment and density} hold, by the Central Limit Theorem, we have 
\begin{align}\label{a30}
\bm T_n\rightarrow _{\mathcal{L}} N(0,\Omega_2) \quad as \quad n\rightarrow\infty,
\end{align}
where $\Omega_2$ is defined as in Lemma \ref{lemma3}. We accomplish the proof of this lemma.
\end{proof}

\begin{proof}[Proof of Theorem \ref{thm2EQMLE-normality}]
	We have $\widehat{\bm u}_n=\widehat{\bm \theta}_n-\bm \theta_0=o_p(1)$ by Theorem \ref{thm1EQMLE-consistency}. Furthermore, by Lemma \ref{lemma3}, we have
\begin{align}\label{a33}
	H_n(\widehat{\bm u}_n)&=\sqrt{n}\widehat{\bm u}_n^{'}\bm T_n+(\sqrt{n}\widehat{\bm u}_n)^{'}\Sigma_2 (\sqrt{n}\widehat{\bm u}_n)+o_p(\sqrt{n}\|\widehat{\bm u}_n\|+n\|\widehat{\bm u}_n\|^{2}) \\
	&\geq -\sqrt{n}\|\widehat{\bm u}_n\|\left[\left\|\bm T_n\right\|+o_p(1)\right]+n\|\widehat{\bm u}_n\|^{2}[\lambda_{\min}+o_p(1)], \nonumber 
\end{align}
where $\lambda_{\min}>0$ is the minimum eigenvalue of $\Sigma_2$. 
Note that $H_n(\widehat{\bm u}_n)=(n-p)[L_n^E(\widehat{\bm \theta}_n)-L_n^E(\bm \theta_0)]\leq0$. Then it follows that
\begin{align}\label{a34}
	n\|\widehat{\bm u}_n\|\leq[\lambda_{\min}+o_p(1)]^{-1}\left[\left\|\bm T_n\right\|+o_p(1)\right]=O_p(1).
\end{align}
This together with Theorem \ref{thm1EQMLE-consistency}, verifies the $\sqrt{n}$-consistency. Hence, Statement (i) holds. 

Next, let $\sqrt{n}\bm u_n^{*}=-\Sigma_2^{-1}\bm T_n/2$, then, by \eqref{a30} from Lemma \ref{lemma3}, we have
\[\sqrt{n}\bm u_n^{*}\rightarrow_{\mathcal{L}} N\left(0,\frac{1}{4}\Sigma_2^{-1}\Omega_2\Sigma_2^{-1}\right)\text{ as } n\rightarrow \infty .\]
As a result, it is sufficient to show that $\sqrt{n}\widehat{\bm u}_{n}-\sqrt{n}\bm u_n^{*}=o_p(1)$. By \eqref{a33} and \eqref{a34}, we have
\begin{align*}
	H_n(\widehat{\bm u}_n)=&(\sqrt{n}\widehat{\bm u}_n)^{'}\bm T_n+(\sqrt{n}\widehat{\bm u}_n)^{'}\Sigma_2 (\sqrt{n}\widehat{\bm u}_n)+o_p(1) \\
    =&-2(\sqrt{n}\widehat{\bm u}_n)^{'}\Sigma_2(\sqrt{n}\bm u_n^{*})+(\sqrt{n}\widehat{\bm u}_n)^{'}\Sigma_2 (\sqrt{n}\widehat{\bm u}_n)+o_p(1).
\end{align*}
Note that \eqref{a33} still holds when $\widehat{\bm u}_n$ is replaced by $\bm u_n^*$. Thus,
\begin{align*}
	H_n(\bm u_n^*)=&(\sqrt{n}\bm u_n^*)^{'}\bm T_n+(\sqrt{n}\bm u_n^*)^{'}\Sigma_2 (\sqrt{n}\bm u_n^*)+o_p(1)\nonumber\\
    =&-(\sqrt{n}\bm u_n^*)^{'}\Sigma_2(\sqrt{n}\bm u_n^{*})+o_p(1).
\end{align*}
By the previous two equations, it follows that
\begin{align}\label{a35}
	H_n(\widehat{\bm u}_n)-H_n(\bm u_n^*)=&(\sqrt{n}\widehat{\bm u}_n-\sqrt{n}\bm u_n^*)^{'}\Sigma_2(\sqrt{n}\widehat{\bm u}_n-\sqrt{n}\bm u_n^*)+o_p(1)\nonumber\\
	\geq&\lambda_{\min}\|\sqrt{n}\widehat{\bm u}_n-\sqrt{n}\bm u_n^*\|^{2}+o_p(1).
\end{align}
Since $H_n(\widehat{\bm u}_n)-H_n(\bm u_n^{*})=(n-p)\left[L_n^E(\bm \theta_0+\widehat{\bm u}_n)-L_n^E(\bm \theta_0+\bm u_n^{*})\right]\leq 0$ almost surely, by \eqref{a35} we have $\|\sqrt{n}\widehat{\bm u}_{n}-\sqrt{n}\bm u_n^{*}\|=o_p(1)$. Therefore, we have
\[\sqrt{n}\widehat{\bm u}_n\rightarrow_{\mathcal{L}} N\left(0,\frac{1}{4}\Sigma_2^{-1}\Omega_2\Sigma_2^{-1}\right)\text{ as }n\rightarrow \infty.\]
Therefore, Statement (ii) holds. The proof is accomplished. 
\end{proof}

\subsection{Proof of Remark \ref{remark1}}

\begin{proof}
	Recall that the error function is $\eta_t(\bm \theta)=\varepsilon_t(\bm\alpha)/h_t(\bm \delta)$, where $\varepsilon_t(\bm\alpha) = y_t-\sum_{i=1}^{p}\alpha_{i}y_{t-i}$ and $h_t(\bm\delta) = \omega  +\sum_{i=1}^{p}\beta_{i}|y_{t-i}|$, and the residuals are defined as  $\widehat{\eta}_t=\eta_t(\widehat{\bm \theta}_n)=\varepsilon_t(\widehat{\bm\alpha}_n)/h_t(\widehat{\bm \delta}_n)$. 
	Note that $\eta_t=\eta_t(\bm \theta_0)$ and $\widehat{f}(0)=(nb_n)^{-1}\sum_{t=p+1}^{n}K(\widehat{\eta}_{t}/b_n)$. 	 
	Since $|K(x)-K(y)|\leq L|x-y|$ for some $L>0$, by Taylor's expansion, it holds that
	\begin{align}\label{a36}
		\left|\widehat{f}(0)-\frac{1}{nb_{n}}\sum_{t=p+1}^{n}K\left(\frac{\eta_t}{b_n}\right)\right|&=\frac{1}{nb_{n}}\sum_{t=p+1}^{n}\left|K\left(\frac{\widehat{\eta}_t}{b_n}\right)-K\left(\frac{\eta_t}{b_n}\right)\right|\nonumber\\
		&\leq\frac{L}{nb_{n}^{2}}\sum_{t=p+1}^{n}|\widehat{\eta}_t-\eta_t|\nonumber\\
		&\leq\frac{L\|\widehat{\bm \theta}_n-\bm \theta_0\|}{nb_{n}^{2}}\sum_{t=p+1}^{n}\left\|\frac{\partial\eta_t(\bm \theta^{\ast})}{\partial\bm\theta} \right\|,
	\end{align}
	where $\bm \theta^{\ast}$ lies between $\bm \theta_0$ and $\widehat{\bm \theta}_n$. It can be verified that 
	\begin{align}\label{Partial_eta_t}
		\frac{\partial\eta_t(\bm \theta)}{\partial\bm \theta}=\frac{1}{h_t(\bm \delta)}\frac{\partial\varepsilon_t(\bm\alpha)}{\partial\bm \theta}-\frac{\varepsilon_t(\bm\alpha)}{h_t^{2}(\bm \delta)}\frac{\partial h_t(\bm \delta)}{\partial\bm \theta}=-\frac{h_t}{h_t(\bm \delta)}\bm K_{1t}-\frac{h_t\varepsilon_t(\bm\alpha)}{h_t^{2}(\bm \delta)}\bm K_{2t},
	\end{align}
	where $\bm K_{1t} = (\bm Y_{1t}^{'},\bm 0_{p+1}^{\prime})^\prime$ and $\bm K_{2t}=(\bm 0_{p}^{\prime},\bm Y_{2t}^\prime)^\prime$. 
	Since $\|\bm K_{1t}\|\leq p/\underline{\beta}$ by \eqref{K1t_bound}, we can show that $\|\bm K_{2t}\|\leq \underline{\omega}^{-1} +p/\underline{\beta}$, and $\sup_{\bm \theta \in \Theta}h_t/h_t(\bm \delta)\leq \omega_0/\underline{\omega}+\sum_{i=1}^{p}\beta_{i0}/\underline{\beta}$ by Assumption \ref{assum1-compactness}. 
	These together with \eqref{A.2}, imply that
	\begin{align*}
		E\left[\sup_{\bm \theta \in \Theta}\left\|\dfrac{\partial\eta_t(\bm \theta)}{\partial\bm \theta}\right\|\right]
		&\leq E\left[\sup_{\bm \theta \in \Theta}\dfrac{h_t}{h_t(\bm \delta)}\|\bm K_{1t}\|\right] + E\left[\sup_{\bm \theta \in \Theta}\dfrac{h_t}{h_t(\bm \delta)}\dfrac{|\varepsilon_t(\bm\alpha)|}{h_t(\bm \delta)}\|\bm K_{2t}\|\right] <\infty.
	\end{align*}
	Then by Theorem 3.1 in \cite{Ling_McAleer2003} and the Dominated Convergence Theorem, we have
	\begin{align}\label{a37}
		\frac{1}{n}\sum_{t=p+1}^{n}\left\|\frac{\partial\eta_t(\bm \theta^{\ast})}{\partial\bm \theta} \right\|=E\left\|\frac{\partial\eta_t(\bm \theta^{\ast})}{\partial\bm \theta} \right\|+o_p(1)=E\left\|\frac{\partial\eta_t(\bm \theta_0)}{\partial\bm \theta} \right\|+o_p(1)=O_p(1).
	\end{align}
	Since $\sqrt{n}(\widehat{\bm \theta}_n-\bm \theta_0)=O_p(1)$ by Theorem \ref{thm2EQMLE-normality} and $nb_n^{4}\rightarrow\infty$ as $n\rightarrow\infty$, by \eqref{a36} and \eqref{a37}, we have
	\begin{align}\label{a38}
		\left|\widehat{f}(0)-\frac{1}{nb_{n}}\sum_{t=p+1}^{n}K\left(\frac{\eta_t}{b_n}\right)\right|\leq O_p\left(\frac{1}{\sqrt{n}b_n^{2}}\right)=o_p(1).
	\end{align}
	Moreover, since $\sup_{x}f(x)<\infty$ by Assumption \ref{assum3EQMLE-moment and density}(ii) and $\int_{-\infty}^{\infty}K(x)dx=1$, it holds that
	\[E\left[\frac{1}{b_n}K\left(\frac{\eta_t}{b_n}\right)\right]=\int_{-\infty}^{\infty}K(x)f(b_{n}x)dx<\infty.\]
	Then, by Theorem 3.1 in \cite{Ling_McAleer2003}, it follows that
	\begin{align}\label{a39}
		\frac{1}{nb_{n}}\sum_{t=p+1}^{n}K\left(\frac{\eta_t}{b_n}\right)=E\left[\frac{1}{b_n}K\left(\frac{\eta_t}{b_n}\right)\right]+o_p(1).
	\end{align}
	Furthermore, since $\int_{-\infty}^{\infty}|x|K(x)dx < \infty$, $\sup_x|f^\prime(x)|<\infty$ and $b_n\rightarrow 0$ as $n\rightarrow \infty$, it can be verified that
	\begin{align}\label{a40}
		\left|E\left[\frac{1}{b_n}K\left(\frac{\eta_t}{b_n}\right)\right]-f(0)\right|&=\left|\int_{-\infty}^{\infty}K(x)[f(b_{n}x)-f(0)]dx\right|\nonumber\\
		&\leq b_n\sup_{x}|f^{'}(x)|\int_{-\infty}^{\infty}|x|K(x)dx\rightarrow 0 \;\; \text{as} \;\; n\rightarrow \infty.
	\end{align}
	By \eqref{a38}--\eqref{a40}, we have $\widehat{f}(0)=f(0)+o_p(1)$. Finally, by a similar argument as for \eqref{a37}, we can show that $(n-p)^{-1}\sum_{t=p+1}^{n}\bm Y_{1t}\bm Y_{1t}^{'}=E(\bm Y_{1t}\bm Y_{1t}^{'})+o_p(1)$, $(n-p)^{-1}\sum_{t=p+1}^{n}\bm Y_{2t}\bm Y_{2t}^{'}=E(\bm Y_{2t}\bm Y_{2t}^{'})+o_p(1)$, $(n-p)^{-1}\sum_{t=p+1}^{n}\bm Y_{1t}\bm Y_{2t}^{'}=E(\bm Y_{1t}\bm Y_{2t}^{'})+o_p(1)$, 
	$\widehat{\kappa}_1=(n-p)^{-1}\sum_{t=p+1}^{n}\widehat{\eta}_t=E(\eta_t)+o_p(1)$ and $\widehat{\kappa}_2=(n-p)^{-1}\sum_{t=p+1}^{n}\widehat{\eta}_t^2-1=E(\eta_t^{2})-1+o_p(1)$. 
	As a result, $\widehat{\Sigma}_{n}\rightarrow_p\Sigma$ and $\widehat{\Omega}_{n}\rightarrow_p\Omega$ as $n\rightarrow\infty$. 
	The proof is accomplished.
\end{proof}

\subsection{Proof of Theorem \ref{thm3BIC}}
\begin{proof}	
	In the following proof, notations $\Theta^p$, $\bm\theta_{0}^{p}$, $\widehat{\bm\theta}_{n}^{p}$, $\bm T_n^{p}$, and $\Sigma^{p}$ are employed to emphasize their dependence on the order $p$. 
	Let $p_0$ be the true order of model \eqref{LDAR}. It is sufficient to show that, for any $p\neq p_0$, 
	\begin{equation}\label{goal of E-BIC}
		\lim_{n\rightarrow \infty} P(\text{BIC}^{E}(p)-\text{BIC}^{E}(p_0)>0)=1.
	\end{equation}
	We first consider the case that the model is overfitted, i.e. $p > p_0$. 
	Let $\bm\theta_{0}^{p}=\argmin_{\bm\theta\in\Theta^{p}}E\ell_t^E(\bm\theta)$ be the unique minimizer, and denote $\mathring{\bm\theta}_{0}^{p}=(\bm\alpha_{0}^{p_0\prime},\bm 0_{p-p_0}^{\prime},\bm\delta_{0}^{p_0\prime},\bm 0_{p-p_0}^{\prime})^{\prime}$, where $\bm 0_{m}$ is the $m \times 1$ vector of zeros. 
	Let $\mathring{\Theta}^{p}$ be the parameter space corresponding to $\Theta^{p}$ with $\underline{\beta}=0$, which allows the scale parameters $\beta_i$'s to be zero. Note that $E\ell_t^E(\bm\theta_{0}^{p})=\min_{\bm\theta\in\Theta^{p}}E\ell_t^E(\bm\theta)>\min_{\bm\theta\in\mathring{\Theta}^{p}}E\ell_t^E(\bm\theta)=E\ell_t^E(\mathring{\bm\theta}_{0}^{p})$, where the inequality holds since $\Theta^{p}\subset\mathring{\Theta}^{p}$ and $\mathring{\bm\theta}_{0}^{p}\notin\Theta^{p}$. This together with $E\ell_t^E(\mathring{\bm\theta}_{0}^{p}) =E\ell_t^E(\bm\theta_{0}^{p_0})$, implies that $E\ell_t^E(\bm\theta_{0}^{p}) =E\ell_t^E(\bm\theta_{0}^{p_0})+c$ for some $c>0$. 
	Then by ergodic theorem, we have
	\[ 
	L_n^E(\bm\theta_{0}^{p}) =	L_n^E(\bm\theta_{0}^{p_0})+c+o_p(1).\]
	Denote $\widehat{\bm u}_n^{p_0}=\widehat{\bm \theta}_n^{p_0}-\bm \theta_0^{p_0}$. 
	Similar to the proof of Theorems \ref{thm1EQMLE-consistency} and \ref{thm2EQMLE-normality}, we can show that $\widehat{\bm\theta}^{p_0}_{n} \to \bm\theta_0^{p_0}$ almost surely and $\sqrt{n}\widehat{\bm u}_n^{p_0}=O_p(1)$ as $n\rightarrow \infty$. 	
	Then it follows that 
	\begin{align}\label{Ln distance2}
		&(n-p_{\max})[L_n^E(\widehat{\bm\theta}_{n}^{p_0})-L_n^E(\bm\theta_{0}^{p_0})] \nonumber\\
		=&(\sqrt{n}\widehat{\bm u}_n^{p_0})^{'}\bm T_n^{p_0}+(\sqrt{n}\widehat{\bm u}_n^{p_0})^{'}\Sigma^{p_0} (\sqrt{n}\widehat{\bm u}_n^{p_0})+o_p(1)= O_p(1), 
	\end{align}
	Recall that $\bm\theta_{0}^{p}=\argmin_{\bm\theta\in\Theta^{p}}E\ell_t^E(\bm\theta)$ is the unique minimizer. Similar to the proof of Theorems \ref{thm1EQMLE-consistency}--\ref{thm2EQMLE-normality} and \eqref{Ln distance2}, it can be verified that $\widehat{\bm\theta}^{p}_{n} \to \bm\theta_0^{p}$ almost surely, $\sqrt{n}\widehat{\bm u}_n^{p}=O_p(1)$, and $(n-p_{\max})[L_n^E(\widehat{\bm\theta}_{n}^{p})-L_n^E(\bm\theta_{0}^{p})]=O_p(1)$. Therefore, we have
	\begin{align*}
		&L_n^E(\widehat{\bm\theta}_{n}^{p})-L_n^E(\widehat{\bm\theta}_{n}^{p_0}) \\
		=&[L_n^E(\widehat{\bm\theta}_{n}^{p})-L_n^E(\bm\theta_{0}^{p})]-[L_n^E(\widehat{\bm\theta}_{n}^{p_0})-L_n^E(\bm\theta_{0}^{p_0})]+[L_n^E(\bm\theta_{0}^{p})-L_n^E(\bm\theta_{0}^{p_0})]=O_p\left(\dfrac{1}{n}\right)+c+o_p(1).
	\end{align*}
	Hence, it follows that
	\begin{align}\label{BIC2_overfitted}
		&\text{BIC}^{E}(p)-\text{BIC}^{E}(p_0) \nonumber\\
		=& 2(n-p_{\max})[L_n^E(\widehat{\bm \theta}_{n}^{p})-L_n^E(\widehat{\bm \theta}_{n}^{p_0})] +[(2p+1)\ln (n-p_{\max})-(2p_0+1)\ln (n-p_{\max})] \nonumber\\
		=& O_p(1) +2(n-p_{\max})(c+o_p(1)) + 2(p-p_0)\ln(n-p_{\max}) \to \infty \;\;\text{as} \;\; n\to \infty.
	\end{align}
	
	We next consider the case that the model is underfitted, i.e. $p < p_0$. 
	Let $\bm\theta_{0,E}^p = \arg\min_{\bm\theta \in \Theta^p} E[\ell_t^E(\bm\theta^p)]$. 
	Similar to the proof of Theorems \ref{thm1EQMLE-consistency}--\ref{thm2EQMLE-normality} and \eqref{Ln distance2}, we can verify that  $\sqrt{n}(\widehat{\bm\theta}_{n}^{p}-\bm\theta_{0}^{p})=O_p(1)$ and  
	\[(n-p_{\max})[L_n^E(\widehat{\bm\theta}_{n}^{p})-L_n^E(\bm\theta_{0}^{p})] = O_p(1).\]
	Note that $L_n^E(\cdot)$ is the negative likelihood. Since the model with order $p$ corresponds to a smaller model than the true model, we have $E[\ell_t^E(\bm\theta_{0}^{p})]= E[\ell_{t}^E(\bm \theta_0^{p_0})]+\varepsilon$ for some positive constant $\varepsilon$. By ergodic theorem, we have $L_n^E(\bm\theta_{0}^p)= E[\ell_{t}^E(\bm \theta_{0}^p)]+o_p(1)$. Thus it holds that
	\begin{equation*}
		L_n^E(\bm\theta_{0}^{p})-L_n^E(\bm\theta_{0}^{p_0})=E[\ell_{t}^E(\bm\theta_{0}^{p})] - E[\ell_t^E(\bm\theta_0^{p_0})] + o_p(1)=\varepsilon + o_p(1).
	\end{equation*}
	Therefore, we have
	\begin{align*}
		L_n^E(\widehat{\bm\theta}_{n}^{p})-L_n^E(\widehat{\bm\theta}_{n}^{p_0})
		=&[L_n^E(\widehat{\bm\theta}_{n}^{p})-L_n^E(\bm\theta_{0}^{p})]-[L_n^G(\widehat{\bm\theta}_{n}^{p_0})-L_n^E(\bm\theta_{0}^{p_0})] \nonumber\\ &+ [L_n^E(\bm\theta_{0}^{p})-L_n^E(\bm\theta_{0}^{p_0})]
		=O_p\left(\dfrac{1}{n}\right)+\varepsilon + o_p(1).
	\end{align*}
	This together with $(2p+1)\ln (n-p_{\max})-(2p_0+1)\ln (n-p_{\max})=O(\ln n)$, implies that
	\begin{align}\label{BIC2_underfitted}
		&\text{BIC}^{E}(p)-\text{BIC}^{E}(p_0) \nonumber\\
		=& 2(n-p_{\max})[L_n^E(\widehat{\bm \theta}_{n}^{p})-L_n^E(\widehat{\bm \theta}_{n}^{p_0})] +\left[(2p+1)\ln (n-p_{\max})-(2p_0+1)\ln (n-p_{\max})\right] \nonumber\\
		= & 2(n-p_{\max})\varepsilon + o_p(n-p_{\max}) + O_p(1) + O(\ln n) \to \infty \;\;\text{as} \;\; n\to \infty.
	\end{align}
	Hence, combining \eqref{BIC2_overfitted} and \eqref{BIC2_underfitted} implies that \eqref{goal of E-BIC} holds. 
	The proof is accomplished. 
\end{proof}

\subsection{Proof of Remark 2}
\begin{proof}
	In the following proof, notations $\Theta^p$, $\bm\theta_{0}^p$, $\widetilde{\bm\theta}_{n}^{p}$ and $\Sigma_1^{p}$ are employed to emphasize their dependence on the order $p$. 
	Let $p_0$ be the true order of model \eqref{LDAR}, and $\bm\theta_{0}^{p}=\argmin_{\bm\theta\in\Theta^{p}}E\ell_t^G(\bm\theta)$ be the unique minimizer. It is sufficient to show that, for any $p\neq p_0$, 
	\begin{equation}\label{goal of G-BIC}
		\lim_{n\rightarrow \infty} P(\text{BIC}^{G}(p)-\text{BIC}^{G}(p_0)>0)=1.
	\end{equation}
	
	We first consider the case that the model is overfitted, i.e. $p > p_0$. 
	Similar to the proof of Theorem \ref{thm3BIC} for the overfitted case, we can show that $E\ell_t^G(\bm\theta_{0}^{p}) =E\ell_t^G(\bm\theta_{0}^{p_0})+c$ for some $c>0$. Then by ergodic theorem, it follows that
	\[ 
	L_n^G(\bm\theta_{0}^{p}) =	L_n^G(\bm\theta_{0}^{p_0})+c+o_p(1).\]
	Denote $\widetilde{\bm u}_n^{p_0}=\widetilde{\bm \theta}_n^{p_0}-\bm \theta_0^{p_0}$. 	
	Similar to the proof of Theorem \ref{thm1G-QMLE-asymptotics}, we can verify that $\widetilde{\bm\theta}^{p_0}_{n} \to_p \bm\theta_0^{p_0}$, $\sqrt{n}\partial L_n^G(\bm\theta_{0}^{p_0})/\partial \bm \theta = O_p(1)$ and $\sqrt{n}\widetilde{\bm u}_n^{p_0}=O_p(1)$ as $n\rightarrow \infty$. 
	By Taylor's expansion and Slutsky's theorem, it can be shown that
	\begin{align}\label{Ln distance1}
		&(n-p_{\max})[L_n^G(\widetilde{\bm\theta}_{n}^{p_0})-L_n^G(\bm\theta_{0}^{p_0})] \nonumber \\
		=& (\sqrt{n}\widetilde{\bm u}_n^{p_0})^{\prime}\dfrac{n-p_{\max}}{\sqrt{n}}\dfrac{\partial L_n^G(\bm\theta_{0}^{p_0})}{\partial \bm \theta}-(\sqrt{n}\widetilde{\bm u}_n^{p_0})^{\prime}\Sigma_1^{p_0}\sqrt{n}\widetilde{\bm u}_n^{p_0}+o_p(1) = O_p(1).
	\end{align}
	Similar to the proof of Theorem \ref{thm1G-QMLE-asymptotics} and \eqref{Ln distance1}, it can be verified that $\sqrt{n}\widetilde{\bm u}_n^p=O_p(1)$ and  $(n-p_{\max})[L_n^G(\widetilde{\bm\theta}_{n}^{p})-L_n^G(\bm\theta_{0}^{p})]=O_p(1)$. Therefore, we have
	\begin{align*}
		&L_n^G(\widetilde{\bm\theta}_{n}^{p})-L_n^G(\widetilde{\bm\theta}_{n}^{p_0}) \\
		=&[L_n^G(\widetilde{\bm\theta}_{n}^{p})-L_n^G(\bm\theta_{0}^{p})]-[L_n^G(\widetilde{\bm\theta}_{n}^{p_0})-L_n^G(\bm\theta_{0}^{p_0})]+[L_n^G(\bm\theta_{0}^{p})-L_n^G(\bm\theta_{0}^{p_0})]=O_p\left(\dfrac{1}{n}\right)+c+o_p(1).
	\end{align*}
	Hence, we have
	\begin{align}\label{BIC1_overfitted}
		&\text{BIC}^{G}(p)-\text{BIC}^{G}(p_0) \nonumber\\
		=& 2(n-p_{\max})[L_n^G(\widetilde{\bm \theta}_{n}^{p})-L_n^G(\widetilde{\bm \theta}_{n}^{p_0})] +[(2p+1)\ln (n-p_{\max})-(2p_0+1)\ln (n-p_{\max})] \nonumber\\
		=& O_p(1) +2(n-p_{\max})(c+o_p(1)) + 2(p-p_0)\ln(n-p_{\max}) \to \infty \;\;\text{as} \;\; n\to \infty.
	\end{align} 
	
	We next consider the case that the model is underfitted, i.e. $p < p_0$. 
	Let $\bm\theta_{0,G}^p = \arg\min_{\bm\theta \in \Theta^p} E[\ell_t^G(\bm\theta^p)]$ and $\bm\theta_{0,E}^p = \arg\min_{\bm\theta \in \Theta^p} E[\ell_t^E(\bm\theta^p)]$. 	
	Similar to the proof of Theorem \ref{thm1G-QMLE-asymptotics} and \eqref{Ln distance1}, we can verify that  $\sqrt{n}(\widetilde{\bm\theta}_{n}^{p}-\bm\theta_{0}^{p})=O_p(1)$ and  
	\[(n-p_{\max})[L_n^G(\widetilde{\bm\theta}_{n}^{p})-L_n^G(\bm\theta_{0}^{p})] = O_p(1).\]
	Note that $L_n^G(\cdot)$ is the negative likelihood. Since the model with order $p$ corresponds to a smaller model than the true model, we have $E[\ell_t^G(\bm\theta_{0}^{p})]\geq E[\ell_{t}^G(\bm \theta_0^{p_0})]+\varepsilon$ for some positive constant $\varepsilon$. By ergodic theorem, we have $L_n^G(\bm\theta_{0,G}^p)= E[\ell_{t}^G(\bm \theta_{0,G}^p)]+o_p(1)$. Thus it holds that
	\begin{equation*}
		L_n^G(\bm\theta_{0}^{p})-L_n^G(\bm\theta_{0}^{p_0})= E[\ell_{t}^G(\bm\theta_{0}^{p})] - E[\ell_t^G(\bm\theta_0^{p_0})] + o_p(1)=\varepsilon + o_p(1).
	\end{equation*}
	Therefore, we have
	\begin{align*}
		L_n^G(\widetilde{\bm\theta}_{n}^{p})-L_n^G(\widetilde{\bm\theta}_{n}^{p_0})
		=&[L_n^G(\widetilde{\bm\theta}_{n}^{p})-L_n^G(\bm\theta_{0}^{p})]-[L_n^G(\widetilde{\bm\theta}_{n}^{p_0})-L_n^G(\bm\theta_{0}^{p_0})] \nonumber\\ &+ [L_n^G(\bm\theta_{0}^{p})-L_n^G(\bm\theta_{0}^{p_0})]
		=O_p\left(\dfrac{1}{n}\right)+\varepsilon + o_p(1).
	\end{align*}
	This together with $(2p+1)\ln (n-p_{\max})-(2p_0+1)\ln (n-p_{\max})=O(\ln n)$, implies that
	\begin{align}\label{BIC1_underfitted}
		&\text{BIC}^{G}(p)-\text{BIC}^{G}(p_0) \nonumber\\
		=& 2(n-p_{\max})[L_n^G(\widetilde{\bm \theta}_{n}^{p})-L_n^G(\widetilde{\bm \theta}_{n}^{p_0})] +\left[(2p+1)\ln (n-p_{\max})-(2p_0+1)\ln (n-p_{\max})\right] \nonumber\\
		= & 2(n-p_{\max})\varepsilon + o_p(n-p_{\max}) + O_p(1) + O(\ln n) \to \infty \;\;\text{as} \;\; n\to \infty.
	\end{align}
	Hence, combining \eqref{BIC1_overfitted} and \eqref{BIC1_underfitted} implies that \eqref{goal of G-BIC} holds. 
	The proof is accomplished.
\end{proof}

\subsection{Proof of Theorem \ref{thm1G-QMLE-asymptotics}}
To show Theorem \ref{thm1G-QMLE-asymptotics}, we introduce the following Lemmas \ref{boundedness}--\ref{MCLT}. 
\begin{lemma}\label{boundedness}
	If Assumptions \ref{assum1-compactness}--\ref{assum2-stationarity} hold, then it holds that
	\[
	\text{ (i) } E\sup _{\bm \theta \in \Theta}\left|\ell^G_{t}(\bm \theta)\right|<\infty; \quad \text{ (ii) } E\sup_{\bm \theta \in \Theta} \left\| \dfrac{\partial \ell^G_{t}(\bm \theta)}{\partial \bm \theta } \right\|<\infty; \quad \text{ (iii) } E\sup _{\bm \theta \in \Theta}\left\|\dfrac{\partial^{2} \ell^G_{t}(\bm \theta)}{\partial \bm \theta \partial \bm \theta^{\prime}}\right\|<\infty.
	\]
\end{lemma}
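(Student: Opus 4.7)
The plan is to dominate $\ell_t^G(\bm\theta)$ and its first two derivatives uniformly over the compact set $\Theta$ by functions of $\eta_t$ and past $y_{t-j}$'s that are integrable under Assumption \ref{assum2-stationarity} together with $E(\eta_t^2)=1$ (which is part of the standing hypothesis $E(\eta_t)=0$, $\var(\eta_t)=1$ for the G-QMLE). Three elementary uniform bounds carry essentially all of the work. First, $h_t(\bm\delta)\geq\underline{\omega}$ and $h_t(\bm\delta)\geq\underline{\beta}|y_{t-j}|$, so that $1/h_t(\bm\delta)\leq 1/\underline{\omega}$ and $|y_{t-j}|/h_t(\bm\delta)\leq 1/\underline{\beta}$ hold uniformly in $\bm\delta\in\Theta_\delta$. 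Second, the ratio $h_t(\bm\delta_0)/h_t(\bm\delta)$ is bounded by a constant uniformly in $\bm\delta$, since both numerator and denominator are linear combinations of $(1,|y_{t-1}|,\ldots,|y_{t-p}|)$ whose coefficients are all trapped in compact subintervals of $(0,\infty)$. Third, substituting the true model $y_t=\sum_i\alpha_{i,0}y_{t-i}+\eta_t h_t(\bm\delta_0)$ into the residual gives $\varepsilon_t(\bm\alpha)/h_t(\bm\delta)=\sum_i(\alpha_{i,0}-\alpha_i)y_{t-i}/h_t(\bm\delta)+\eta_t h_t(\bm\delta_0)/h_t(\bm\delta)$, and the first two bounds then yield $\sup_{\bm\theta\in\Theta}|\varepsilon_t(\bm\alpha)/h_t(\bm\delta)|\leq C(1+|\eta_t|)$.

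For (i), split $\ell_t^G(\bm\theta)=\ln h_t(\bm\delta)+\varepsilon_t^2(\bm\alpha)/(2h_t^2(\bm\delta))$. The log term lies between $\ln\underline{\omega}$ and $\ln(\overline{\omega}+\overline{\beta}\sum_j|y_{t-j}|)$; applying $\ln(1+x)\leq x^\kappa/\kappa$ for $x\geq 0$, $\kappa\in(0,1]$, together with subadditivity of $x\mapsto x^\kappa$, upper-bounds $\sup_{\bm\delta}|\ln h_t(\bm\delta)|$ by $C(1+\sum_j|y_{t-j}|^\kappa)$, whose expectation is finite by Assumption \ref{assum2-stationarity}. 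The squared-residual term is bounded by $C(1+\eta_t^2)/2$ via the third bound above, with integrable majorant since $E(\eta_t^2)=1$.

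For (ii) and (iii), one computes the derivatives explicitly: $\partial\ell_t^G/\partial\alpha_j=-\varepsilon_t y_{t-j}/h_t^2$, $\partial\ell_t^G/\partial\omega=1/h_t-\varepsilon_t^2/h_t^3$, $\partial\ell_t^G/\partial\beta_j=|y_{t-j}|/h_t-\varepsilon_t^2|y_{t-j}|/h_t^3$, and the $(2p+1)^2$ second partials are analogous monomials in the three bounded quantities $1/h_t$, $|y_{t-j}|/h_t$, and $\varepsilon_t/h_t$, of combined degree at most four. Each factor is uniformly controlled by the first or third of the preceding bounds, so $\|\partial\ell_t^G/\partial\bm\theta\|\leq C(1+\eta_t^2)$ and $\|\partial^2\ell_t^G/\partial\bm\theta\partial\bm\theta'\|\leq C(1+\eta_t^2)$ uniformly on $\Theta$, and both majorants are integrable under $E(\eta_t^2)=1$.

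The main obstacle is the $\ln h_t(\bm\delta)$ piece in (i): this is the one place where the fractional-moment hypothesis $E|y_t|^\kappa<\infty$ actually bites, and the elementary inequality $\ln(1+x)\leq x^\kappa/\kappa$ is the specific device that converts logarithmic growth into a fractional-moment requirement rather than a first-moment one. The factor $\varepsilon_t/h_t$, which looks dangerous at first because $y_t$ appears linearly in $\varepsilon_t(\bm\alpha)$ but not in $h_t(\bm\delta)$, is tamed only through the reparameterization via the innovation in the third bound; this is the standard trick in the QMLE literature for DAR-type models (cf.\ \cite{Ling2007,zhu2013quasi-maximum}).
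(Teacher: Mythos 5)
Your proposal is correct and follows essentially the same route as the paper's proof: the same three uniform bounds ($1/h_t\leq 1/\underline{\omega}$, $|y_{t-j}|/h_t\leq 1/\underline{\beta}$, and boundedness of $h_t(\bm\delta_0)/h_t(\bm\delta)$), the same substitution of the true model into $\varepsilon_t(\bm\alpha)$ to reduce everything to $C(1+\eta_t^2)$, and the same conversion of the logarithmic term into a fractional-moment bound (the paper uses Jensen's inequality plus subadditivity of $x\mapsto x^\kappa$ where you use $\ln(1+x)\leq x^\kappa/\kappa$, which is the same device in different clothing). Your explicit verification of (ii) and (iii) fills in what the paper dismisses with ``similarly,'' but the argument is the paper's.
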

\begin{proof}
	Recall that $\bm \theta = (\bm \alpha^\prime,\bm \delta^\prime)^{'}$ and $\ell^G_{t}(\bm \theta)=\ln h_t(\bm \delta)+0.5\varepsilon^{2}_t(\bm \alpha)/h_t^{2}(\bm \delta)$, where $\varepsilon_t(\bm\alpha)=y_t-\sum_{i=1}^p \alpha_{i}y_{t-i}$ and $h_t(\bm \delta)=\omega+\sum_{i=1}^{p}\beta_{i}|y_{t-i}|$. 
	It can be derived that
	\begin{align*}
		&\dfrac{\partial \ell^G_t(\bm \theta)}{\partial \alpha_i}=-\dfrac{\varepsilon_t(\bm \alpha)y_{t-i}}{h_t^2(\bm \delta)},\; 
		\dfrac{\partial \ell^G_t(\bm \theta)}{\partial \omega}=\dfrac{1}{h_t(\bm \delta)}\left(1-\dfrac{\varepsilon^2_t(\bm \alpha)}{h_t^2(\bm \delta)}\right), \;
		\dfrac{\partial \ell^G_t(\bm \theta)}{\partial \beta_i}=\dfrac{|y_{t-i}|}{h_t(\bm \delta)}\left(1-\dfrac{\varepsilon^2_t(\bm \alpha)}{h_t^2(\bm \delta)}\right),\\
		&\dfrac{\partial^2 \ell_{t}^G(\bm \theta)}{\partial \alpha_i \partial \alpha_j}=\dfrac{y_{t-i}y_{t-j}}{h_t^2(\bm \delta)},\;
		\dfrac{\partial^2\ell_{t}^G(\bm \theta)}{\partial \alpha_i \partial \omega}=\dfrac{2\varepsilon_t(\bm \alpha)y_{t-i}}{h_t^3(\bm \delta)},\;
		\dfrac{\partial^2\ell_{t}^G(\bm \theta)}{\partial \alpha_i \partial \beta_j}=\dfrac{2\varepsilon_t(\bm \alpha)y_{t-i}|y_{t-j}|}{h_t^3(\bm \delta)},\\
		&\dfrac{\partial^2 \ell^G_t(\bm \theta)}{\partial^2 \omega}=-\dfrac{1}{h_t^2(\bm\delta)}\left(1-\dfrac{3\varepsilon^2_t(\bm \alpha)}{h_t^2(\bm \delta)}\right), \;
		\dfrac{\partial^2 \ell^G_t(\bm \theta)}{\partial \omega \partial \beta_j}=-\dfrac{|y_{t-j}|}{h_t^2(\bm\delta)}\left(1-\dfrac{3\varepsilon^2_t(\bm \alpha)}{h_t^2(\bm \delta)}\right)	\; \text{and} \\
		&\dfrac{\partial^2 \ell^G_t(\bm \theta)}{\partial \beta_i \partial \beta_j}=-\dfrac{|y_{t-i}y_{t-j}|}{h_t^2(\bm\delta)}\left(1-\dfrac{3\varepsilon^2_t(\bm \alpha)}{h_t^2(\bm \delta)}\right).
	\end{align*}
	
	We first show (i). By Assumption \ref{assum2-stationarity}, there exists a constant $0< \kappa \leq 1$ such that $E(|y_t|^\kappa) < \infty$. Let $\overline{\omega}^\star = \max\left\lbrace 1,\overline \omega\right\rbrace$. By Jensen’s inequality, we have
	\begin{align*}
		E \ln \left(\overline{\omega}^{\star}+\overline{\beta}\sum_{i=1}^{p} |y_{t-i}|\right) &=\frac{1}{\kappa} E \ln \left(\overline{\omega}^{\star}+\overline{\beta}\sum_{i=1}^{p}  |y_{t-i}|\right)^{\kappa} \\ 
		&\leq \frac{1}{\kappa} E \ln \left(\overline{\omega}^{\star \kappa}+\overline{\beta}^\kappa\sum_{i=1}^{p} |y_{t-i}|^\kappa\right) \\ 
		&\leq \frac{1}{\kappa} \ln \left(\overline{\omega}^{\star \kappa}+\overline{\beta}^\kappa\sum_{i=1}^{p} E(|y_{t-i}|^\kappa)\right) <\infty,
	\end{align*}
	where the following elementary relation is used: $(\sum_{i=1}^{p}\alpha_i)^s \leq \sum_{i=1}^{p}\alpha_i^s$ for all $\alpha_i > 0$ and $s \in (0,1]$. 
	This together with Assumption \ref{assum2-stationarity}, implies that
	\begin{align}\label{finite 1}
		E \sup _{\bm\theta \in \Theta}\left|\ln h_t(\bm\delta)\right|  
		\leq &E \sup _{\bm\theta \in \Theta}\left[I\left(\omega+\overline{\beta}\sum_{i=1}^{p}\left|y_{t-i}\right| \geq 1\right) \ln \left(\omega+\overline{\beta}\sum_{i=1}^{p}\left|y_{t-i}\right|\right)\right]  \nonumber\\
		&+E \sup _{\bm\theta \in \Theta}\left[-I\left(\omega+\overline{\beta}\sum_{i=1}^{p}\left|y_{t-i}\right| \leq 1\right) \ln \left(\omega+\overline{\beta}\sum_{i=1}^{p}\left|y_{t-i}\right|\right)\right]  \nonumber\\
		\leq & E \ln \left(\overline{\omega}^{\star}+\overline{\beta}\sum_{i=1}^{p}  |y_{t-i}|\right)-I\{\underline{\omega}<1\} \ln \underline{\omega}<  \infty.
	\end{align}
	By Assumption \ref{assum3G-QMLE-moment and density}(i), we have $E(\eta_t)=0$ and $E(\eta_t^2)=1$. 
	Moreover, note that $\eta_t$ is independent of $\mathcal{F}_{t-1}$, and  
	$y_{t}-\sum_{i=1}^{p} \alpha_{i} y_{t-i}=\sum_{i=1}^{p}(\alpha_{i0}-\alpha_{i})y_{t-i}+\eta_t\left(\omega_0+\sum_{i=1}^{p}\beta_{i0}|y_{t-i}|\right)$, 
	then by Assumption \ref{assum1-compactness} and $c_r$ inequality, it can be verified that
	\begin{align}\label{finite 2}
		&E\sup _{\bm \theta \in \Theta}\left[\frac{\left(y_{t}-\sum_{i=1}^{p} \alpha_{i} y_{t-i}\right)^{2}}{(\omega+\sum_{i=1}^{p} \beta_{i}|y_{t-i}|)^2}\right] \nonumber\\
		\leq &E\sup _{\bm \theta \in \Theta}\left[\left(\sum_{i=1}^{p}\dfrac{\left(\alpha_{i0}-\alpha_{i}\right) y_{t-i}}{\underline{\omega}+\underline{\beta}\sum_{i=1}^{p} |y_{t-i}|}\right)^{2}\right]
		+ E \left[\left(\dfrac{\overline{\omega}+\bar{\beta}\sum_{i=1}^{p} |y_{t-i}|}{\underline{\omega}+\underline{\beta}\sum_{i=1}^{p} |y_{t-i}|}\right)^2\right]  \nonumber\\
		\leq & E\sup _{\bm \theta \in \Theta}\left[\left(\sum_{i=1}^{p}\dfrac{\left(\alpha_{i0}-\alpha_{i}\right) y_{t-i}}{\underline{\beta} |y_{t-i}|}\right)^{2}\right] \nonumber\\
		&+
		2E\left[\left(\dfrac{\overline{\omega}}{\underline{\omega}+\underline{\beta}\sum_{i=1}^{p} |y_{t-i}|}\right)^2\right]
		+2E\left[ \left(\sum_{i=1}^{p}\dfrac{\bar{\beta}|y_{t-i}|}{\underline{\omega}+\underline{\beta}\sum_{i=1}^{p} |y_{t-i}|}\right)^2\right] \nonumber\\
		\leq & \dfrac{p}{\underline{\beta}^2}\sum_{i=1}^{p}\sup _{\bm \theta \in \Theta}(\alpha_{i0}-\alpha_{i})^2 +\dfrac{2\overline{\omega}^2}{\underline{\omega}^2}+\dfrac{2p^2\bar{\beta}^2}{\underline{\beta}^2}<\infty. 
	\end{align}
	By \eqref{finite 1}, \eqref{finite 2} and the triangle inequality, we have
	\[E\sup_{\bm \theta \in \Theta}\left|\ell^G_{t}(\bm \theta)\right|
	\leq E\sup_{\bm\theta \in \Theta}\left|\ln h_t(\bm\delta)\right|
	+ \dfrac{1}{2}E\sup_{\bm \theta \in \Theta}\dfrac{\varepsilon^2_{t}(\bm\alpha)}{h^2_t(\bm\delta)}<\infty.\]
	Thus, (i) is verified. 
	Similarly, we can show that (ii) and (iii) hold. 
\end{proof}

\begin{lemma}\label{three o_p}
	If Assumptions \ref{assum1-compactness} and \ref{assum2-stationarity} hold, then 
	\begin{equation*}
		\begin{aligned}
			&\text { (i) } \sup_{\bm \theta \in \Theta}\left|L^G_n(\bm \theta)-E \left[\ell^G_{t}(\bm \theta)\right]\right|=o_{p}(1); \\ 
			&\text{ (ii) } \sup_{\bm \theta \in \Theta}\left\|\dfrac{\partial L^G_n(\bm \theta)}{\partial \bm \theta}-E\left[\dfrac{\partial \ell^G_{t}(\bm \theta)}{\partial \bm \theta}\right]\right\|=o_{p}(1); \\ 
			&\text { (iii) } \sup _{\bm \theta \in \Theta}\left\|\frac{\partial^{2} L^G_{n}(\bm \theta)}{\partial \bm \theta \partial \bm \theta^{\prime}}-E\left[\frac{\partial^{2} \ell^G_{t}(\bm \theta)}{\partial \bm \theta \partial \bm \theta^{\prime}}\right]\right\|=o_{p}(1).\\
		\end{aligned}
	\end{equation*}
\end{lemma}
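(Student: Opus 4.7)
The plan is to establish all three uniform convergence statements via the standard uniform ergodic theorem for stationary ergodic sequences indexed by a compact parameter set (see, e.g., the Ranga Rao / Jennrich-type result). Since $\{y_t\}$ is strictly stationary and ergodic by Assumption \ref{assum2-stationarity}, the sequences $\{\ell_t^G(\bm\theta)\}$, $\{\partial\ell_t^G(\bm\theta)/\partial\bm\theta\}$ and $\{\partial^2\ell_t^G(\bm\theta)/\partial\bm\theta\partial\bm\theta'\}$ are also strictly stationary and ergodic for each fixed $\bm\theta$, so the pointwise ergodic theorem yields almost sure convergence of $L_n^G(\bm\theta)$, $\partial L_n^G/\partial\bm\theta$ and $\partial^2 L_n^G/\partial\bm\theta\partial\bm\theta'$ to their expectations. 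The task is to upgrade pointwise convergence to uniform convergence over the compact set $\Theta$.

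The key ingredients are already in hand. From the explicit formulas derived in the proof of Lemma \ref{boundedness}, $\ell_t^G(\bm\theta)$ and its first two derivatives are, for each realization, continuous (in fact smooth) functions of $\bm\theta\in\Theta$, because $h_t(\bm\delta)\geq\underline{\omega}>0$ uniformly on $\Theta$ by Assumption \ref{assum1-compactness}. Moreover, Lemma \ref{boundedness} provides integrable envelopes: $E\sup_{\bm\theta\in\Theta}|\ell_t^G(\bm\theta)|<\infty$, $E\sup_{\bm\theta\in\Theta}\|\partial\ell_t^G/\partial\bm\theta\|<\infty$ and $E\sup_{\bm\theta\in\Theta}\|\partial^2\ell_t^G/\partial\bm\theta\partial\bm\theta'\|<\infty$. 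Combining continuity in $\bm\theta$, compactness of $\Theta$, stationarity/ergodicity of $\{y_t\}$, and the integrable envelopes yields the desired uniform convergence by a standard finite-cover argument.

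I would execute that covering argument explicitly for part (i) and then note that (ii) and (iii) follow by exactly the same reasoning applied to the first- and second-derivative processes. Fix $\epsilon>0$. By continuity of $\ell_t^G(\cdot)$ and dominated convergence (with the envelope from Lemma \ref{boundedness}(i)), for every $\bm\theta^*\in\Theta$ one can choose a neighborhood $\mathcal{B}(\bm\theta^*,r(\bm\theta^*))$ such that $E\sup_{\bm\theta\in\mathcal{B}(\bm\theta^*,r(\bm\theta^*))}|\ell_t^G(\bm\theta)-\ell_t^G(\bm\theta^*)|<\epsilon$. By compactness, extract a finite subcover $\{\mathcal{B}(\bm\theta_j^*,r_j)\}_{j=1}^J$ of $\Theta$. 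Then
\[
\sup_{\bm\theta\in\Theta}\bigl|L_n^G(\bm\theta)-E[\ell_t^G(\bm\theta)]\bigr|\leq \max_{1\leq j\leq J}\bigl|L_n^G(\bm\theta_j^*)-E[\ell_t^G(\bm\theta_j^*)]\bigr|+2\epsilon+R_n,
\]
where $R_n=\max_j\bigl\{(n-p)^{-1}\sum_{t=p+1}^n\sup_{\bm\theta\in\mathcal{B}_j}|\ell_t^G(\bm\theta)-\ell_t^G(\bm\theta_j^*)|-E\sup_{\bm\theta\in\mathcal{B}_j}|\ell_t^G(\bm\theta)-\ell_t^G(\bm\theta_j^*)|\bigr\}$. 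The first term is $o_p(1)$ by pointwise ergodicity applied at each of the finitely many $\bm\theta_j^*$, and $R_n=o_p(1)$ again by the ergodic theorem applied to the stationary ergodic sequence of suprema (which has finite mean by Lemma \ref{boundedness}(i)). Since $\epsilon$ is arbitrary, (i) follows. Statements (ii) and (iii) are proved identically, using Lemma \ref{boundedness}(ii)-(iii) in place of (i) and working componentwise in the matrix norm.

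I do not anticipate a genuine obstacle: the entire argument is a textbook application of the uniform ergodic theorem, and the nontrivial analytical work (establishing the integrable envelopes by taming $1/h_t(\bm\delta)$ through $\underline{\omega}$ and $\underline{\beta}$ and exploiting $E|y_t|^{\kappa}<\infty$) has already been done in Lemma \ref{boundedness}. The only mild bookkeeping point is ensuring that the dominated convergence step used to produce arbitrarily small-oscillation neighborhoods is justified, which is immediate because $|\ell_t^G(\bm\theta)-\ell_t^G(\bm\theta^*)|\leq 2\sup_{\bm\theta\in\Theta}|\ell_t^G(\bm\theta)|$ and the right-hand side is integrable.
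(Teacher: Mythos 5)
Your proposal is correct and follows essentially the same route as the paper: the paper's proof is a one-line appeal to the integrable envelopes of Lemma \ref{boundedness} together with the uniform ergodic theorem (Theorem 3.1 of Ling and McAleer, 2003), and your finite-cover argument is precisely the standard proof of that cited result. The only difference is that you reprove the uniform law of large numbers explicitly rather than citing it, which is harmless.
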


\begin{proof}
	These follow from Lemma \ref{boundedness} and Theorem 3.1 in \cite{Ling_McAleer2003}.
\end{proof}

\begin{lemma}\label{maxmum at true parm}
	If Assumptions \ref{assum1-compactness}--\ref{assum3G-QMLE-moment and density}(i) hold, then $E[\ell^G_t(\bm \theta)]$ has a unique minimum at $\bm \theta_0$. 
\end{lemma}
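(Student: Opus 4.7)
The plan is to show that $E[\ell_t^G(\bm\theta)-\ell_t^G(\bm\theta_0)]\geq 0$ for all $\bm\theta\in\Theta$, with strict inequality whenever $\bm\theta\neq\bm\theta_0$. Finiteness of the expectations is already secured by Lemma \ref{boundedness}(i), so I can work pointwise and take expectations afterwards.

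First, using the data-generating equation \eqref{LDAR}, I write $\varepsilon_t(\bm\alpha)=d_t(\bm\alpha)+\eta_t h_t(\bm\delta_0)$ with $d_t(\bm\alpha)=\sum_{i=1}^{p}(\alpha_{i0}-\alpha_i)y_{t-i}$, and note that $h_t(\bm\delta)$ and $d_t(\bm\alpha)$ are $\mathcal{F}_{t-1}$-measurable while $\eta_t$ is independent of $\mathcal{F}_{t-1}$. Invoking Assumption \ref{assum3G-QMLE-moment and density}(i), namely $E(\eta_t)=0$ and $E(\eta_t^2)=1$, the tower property yields
\[
E\bigl[\ell_t^G(\bm\theta)-\ell_t^G(\bm\theta_0)\,\big|\,\mathcal{F}_{t-1}\bigr]
=\ln\frac{h_t(\bm\delta)}{h_t(\bm\delta_0)}+\frac{1}{2}\!\left(\frac{h_t^2(\bm\delta_0)}{h_t^2(\bm\delta)}-1\right)+\frac{d_t^2(\bm\alpha)}{2h_t^2(\bm\delta)}.
\]

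Next, I apply the elementary inequality $\ln x\geq 1-1/x$ for all $x>0$, with equality iff $x=1$, to $x=h_t^2(\bm\delta)/h_t^2(\bm\delta_0)$. The first two terms on the right-hand side equal $\tfrac{1}{2}[\ln x-(1-1/x)]\geq 0$, with equality iff $h_t(\bm\delta)=h_t(\bm\delta_0)$. Since the last term is also nonnegative and vanishes iff $d_t(\bm\alpha)=0$, I obtain
\[
E[\ell_t^G(\bm\theta)]-E[\ell_t^G(\bm\theta_0)]\geq 0,
\]
with equality only if $h_t(\bm\delta)=h_t(\bm\delta_0)$ and $d_t(\bm\alpha)=0$ almost surely.

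The final and most delicate step is identification: I must show that these two a.s.\ equalities force $\bm\theta=\bm\theta_0$. This amounts to establishing that $\{1,|y_{t-1}|,\ldots,|y_{t-p}|\}$ and $\{y_{t-1},\ldots,y_{t-p}\}$ are each linearly independent as random vectors in the a.s.\ sense. I would argue this by a backward induction on the lag index, exploiting the representation $y_{t-j}=\sum_{i=1}^{p}\alpha_{i0}y_{t-j-i}+\eta_{t-j}h_{t-j}(\bm\delta_0)$ together with the independence of $\eta_{t-j}$ from $\mathcal{F}_{t-j-1}$: non-degeneracy of $\eta_t^2$ in Assumption \ref{assum3G-QMLE-moment and density}(i) guarantees that, conditional on the past, $|y_{t-j}|$ is non-constant, while $E(\eta_t)=0$ and non-degeneracy of $\eta_t^2$ together imply that $\eta_t$ takes more than one value with positive probability on each side, so $y_{t-j}$ is also conditionally non-constant. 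Peeling off the highest-lag coefficients first in $(\omega-\omega_0)+\sum(\beta_i-\beta_{i0})|y_{t-i}|=0$ and in $\sum(\alpha_{i0}-\alpha_i)y_{t-i}=0$, and then iterating, forces $\bm\delta=\bm\delta_0$ and $\bm\alpha=\bm\alpha_0$.

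I expect the main obstacle to be the identification argument in the last step, since the algebraic inequality in the middle step is a clean consequence of $\ln x\geq 1-1/x$. The identification uses the non-degeneracy of $\eta_t^2$ in an essential way, and the backward-peeling argument must be carried out carefully because $h_{t-j}(\bm\delta_0)$ is random and positive on $[\underline{\omega},\infty)$, which is exactly what allows the coefficients on $|y_{t-i}|$ to be isolated lag by lag.
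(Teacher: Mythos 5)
Your decomposition and the middle inequality coincide in substance with the paper's proof: the paper substitutes $\varepsilon_t(\bm\alpha)=\varepsilon_t(\bm\alpha_0)-(\bm\alpha-\bm\alpha_0)'\bm Y_{1t}h_t(\bm\delta_0)$, takes iterated expectations, and minimizes the scale part by noting that $x\mapsto \ln x+a^2/(2x^2)$ is minimized at $x=a$, which is exactly your $\ln x\ge 1-1/x$ applied to $x=h_t^2(\bm\delta)/h_t^2(\bm\delta_0)$; working with the difference $E[\ell_t^G(\bm\theta)-\ell_t^G(\bm\theta_0)\mid\mathcal F_{t-1}]$ as you do is, if anything, cleaner (and sidesteps a sign typo in the paper's display \eqref{E_ell}). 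The genuine divergence is the identification step. The paper settles it in two lines: if $\bm c'\bm Y_{1t}=0$ a.s.\ with some nonzero coefficient, then after normalizing and shifting the time index by stationarity, $y_t$ and hence $\eta_t=\varepsilon_t(\bm\alpha_0)/h_t(\bm\delta_0)$ become $\mathcal F_{t-1}$-measurable, so independence of $\eta_t$ from $\mathcal F_{t-1}$ forces $E(\eta_t^2)=E(\eta_t)E(\cdot)=0$, contradicting $E(\eta_t^2)=1$; the $\bm Y_{2t}$ case is dismissed as ``similar.'' Your lag-by-lag peeling is a workable alternative that uses the same ingredients, but two points need care. First, the induction runs the other way: you must isolate the \emph{most recent} lag first, conditioning on $\mathcal F_{t-2}$ so that $y_{t-2},\dots,y_{t-p}$ are measurable and only $y_{t-1}=\mu_{t-1}+\eta_{t-1}h_{t-1}(\bm\delta_0)$ carries fresh randomness (whence $c_1=0$, then iterate); no conditioning isolates the highest-lag coefficient first, so ``peeling off the highest-lag coefficients first'' as written would stall. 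Second, for the absolute-value family, non-degeneracy of $\eta_t^2$ alone does not make $|a+b\eta_t|$ conditionally non-constant: if $\eta_t$ has two-point support $\{v_1,v_2\}$ with $v_1+v_2=-2a/b$ the modulus is constant while $\eta_t^2$ is not degenerate. This edge case is equally elided by the paper's ``similarly'' (it is only fully excluded later via the positive definiteness of $D$), so it is a shared blemish rather than a defect of your approach, but in a full write-up you should either restrict it away or note it explicitly. Neither issue undermines your overall plan.
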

\begin{proof}
	We first prove that	
	\begin{equation}
		\bm c_1 = \bm 0_p\quad \text{if} \quad \bm c^{\prime}_1 \bm Y_{1t} = 0\ a.s. \quad \text{and}\quad  \bm c_2 = \bm 0_{p+1}\quad \text{if} \quad \bm c^{\prime}_2 \bm Y_{2t} = 0\ a.s.,
		\label{constant Vector}
	\end{equation}
	where $\bm c_1$ and $\bm c_2$ are $p\times 1$ and $(p+1)\times 1$ constant vectors, respectively.
	If $\bm c^{\prime}_1 \bm Y_{1t} = 0\ a.s.$ and $\bm c_1=(c_{1},\cdots,c_{p})^{\prime} \neq \bm 0$, without loss of generality, we can assume $c_{1}=1$, thus $y_t = -\sum_{i=2}^{p}c_{i}y_{t-i+1}\ a.s.$. Recall that $\eta_{t}=\varepsilon_t(\bm \alpha_0)/h_t(\bm \delta_0)$ with $\varepsilon_t(\bm \alpha_0), h_t(\bm \delta_0)\in \mathcal{F}_{t-1}$, and $\eta_t$ is independent of $\mathcal{F}_{t-1}$, we have
	\begin{equation}\label{contradiction}
		E(\eta^2_t)=E(\eta_t)E\left(\dfrac{-\sum_{i=2}^{p}c_{1i}y_{t-i+1}-\varepsilon_t(\bm \alpha_0)}{h_t(\bm\delta_0)}\right)= 0,
	\end{equation}
	which is a contradiction with $E(\eta^2_t)=1$ by Assumption \ref{assum3G-QMLE-moment and density}(i), thus $\bm c_1=\bm 0_p$. Similarly, we can show that $\bm c_2 = \bm 0_{p+1}$.
	
	Note that $\varepsilon_t(\bm \alpha)=\varepsilon_t(\bm \alpha_0)-(\bm \alpha-\bm \alpha_0)^\prime \bm Y_{1t}h_t(\bm \delta_0)$ and $\varepsilon_{t}(\bm \alpha_{0})=\eta_t h_t(\bm \delta_0)$. As for \eqref{finite 2}, then by the law of iterated expectations and Assumption \ref{assum3G-QMLE-moment and density}(i), we can show that
	\begin{align}
		E[\ell^G_{t}(\bm \theta)]
		=&E\left\{\ln h_t(\bm \delta)+\dfrac{\left[\varepsilon_t(\bm \alpha_0)-(\bm \alpha-\bm \alpha_0)^\prime \bm Y_{1t}h_t(\bm \delta_0)\right]^2}{2h^2_t(\bm\delta)}\right\}\notag \\
		=&E\left[\ln h_t(\bm \delta)+\dfrac{1}{2}\left(\dfrac{h_t(\bm \delta_{0})}{h_t(\bm \delta)}\right)^2\right]-\dfrac{1}{2} E\left[\dfrac{\left(\bm \alpha-\bm \alpha_{0}\right)^{\prime} \bm Y_{1t}h_t(\bm \delta_0)}{h_t(\bm \delta)}\right]^2.
		\label{E_ell}
	\end{align}
	The second term in \eqref{E_ell} reaches its minimum at zero, and this happens if and only if $(\bm \alpha-\bm \alpha_0)^{\prime}\bm Y_{1t} = 0\ a.s.$, which holds if and only if $\bm \alpha = \bm \alpha_{0}$ by \eqref{constant Vector}. 
	For the first term in \eqref{E_ell}, denote $f(x) = -\ln(x) - 0.5a^2/x^2,$ where $x = h_t(\bm \delta)$ and $a = h_t(\bm \delta_0)$. We can prove that $f(x)$ reaches its minimum at $x = a$, i.e. $h_t(\bm \delta) = h_t(\bm \delta_{0})$, which holds if and only if $\bm \delta = \bm \delta_{0}$ by \eqref{constant Vector}. 	
	Therefore, $E[\ell^G_t(\bm \theta)]$ is uniquely minimized at $\bm \theta_0$.
\end{proof}
\begin{lemma} \label{MCLT}
	Suppose Assumptions \ref{assum1-compactness}--\ref{assum3G-QMLE-moment and density} hold, then
	\item[(i)] $\Omega_1$ and $\Sigma_1$ are finite and positive definite;
	\item[(ii)] $\sqrt{n}\partial L_{n}^G\left(\bm \theta_{0}\right)/\partial \bm \theta \to_{\mathcal{L}} N(0,\Omega_1)$ as $n\to\infty$.
\end{lemma}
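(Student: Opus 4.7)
The plan is to handle (i) and (ii) in that order, leveraging the explicit form of the score at $\bm\theta_0$ obtained by substituting $\varepsilon_t(\bm\alpha_0)=\eta_t h_t(\bm\delta_0)$ into the first-order derivatives computed in the proof of Lemma \ref{boundedness}. Doing so gives
\[
\frac{\partial\ell^G_t(\bm\theta_0)}{\partial\bm\theta}=-\begin{pmatrix}\eta_t\,\bm Y_{1t}\\(\eta_t^2-1)\,\bm Y_{2t}\end{pmatrix},
\]
which directly exposes $\Omega_1$ as the covariance matrix of this score and $\Sigma_1$ as the expected Hessian at $\bm\theta_0$. First I would verify finiteness: since $h_t(\bm\delta_0)\geq\underline{\omega}$ and $h_t(\bm\delta_0)\geq\underline{\beta}|y_{t-i}|$ for every $i$ by Assumption \ref{assum1-compactness}, both $\|\bm Y_{1t}\|$ and $\|\bm Y_{2t}\|$ are almost surely bounded by a deterministic constant, so every entry of $E(\bm Y_{1t}\bm Y_{1t}^\prime)$, $E(\bm Y_{2t}\bm Y_{2t}^\prime)$ and $E(\bm Y_{1t}\bm Y_{2t}^\prime)$ is finite; Assumption \ref{assum3G-QMLE-moment and density}(ii) then ensures that $\kappa_3$ and $\kappa_4$ are finite, so $\Sigma_1$ and $\Omega_1$ themselves are finite.

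For positive definiteness, I would argue separately for the two matrices. $\Sigma_1$ is block diagonal, so it reduces to showing $E(\bm Y_{1t}\bm Y_{1t}^\prime)$ and $E(\bm Y_{2t}\bm Y_{2t}^\prime)$ are positive definite; for any nonzero constant vector $\bm c$, $\bm c^\prime E(\bm Y_{jt}\bm Y_{jt}^\prime)\bm c=E[(\bm c^\prime\bm Y_{jt})^2]=0$ forces $\bm c^\prime\bm Y_{jt}=0$ a.s., which by \eqref{constant Vector} in the proof of Lemma \ref{maxmum at true parm} forces $\bm c=\bm 0$. For $\Omega_1$, I would introduce the $(2p+1)\times 2$ block-diagonal-like matrix $M_t=\mathrm{diag}(\bm Y_{1t},\bm Y_{2t})$ and the random vector $\bm\xi_t=(\eta_t,\eta_t^2-1)^\prime$, so that the score equals $-M_t\bm\xi_t$ and, using independence of $\eta_t$ from $\mathcal{F}_{t-1}$, $\Omega_1=E[M_t D M_t^\prime]$ with $D=\mathrm{cov}(\bm\xi_t)$ the positive definite matrix of Theorem \ref{thm1G-QMLE-asymptotics}. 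Then for any nonzero $\bm c=(\bm c_1^\prime,\bm c_2^\prime)^\prime$,
\[
\bm c^\prime\Omega_1\bm c\geq\lambda_{\min}(D)\,E\|M_t^\prime\bm c\|^2=\lambda_{\min}(D)\bigl(\bm c_1^\prime E(\bm Y_{1t}\bm Y_{1t}^\prime)\bm c_1+\bm c_2^\prime E(\bm Y_{2t}\bm Y_{2t}^\prime)\bm c_2\bigr)>0
\]
by the positive definiteness of the two diagonal blocks just established.

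For (ii), the key observation is that $\{\partial\ell_t^G(\bm\theta_0)/\partial\bm\theta,\mathcal{F}_t\}$ is a stationary ergodic martingale difference sequence: from the displayed expression above and the fact that $\bm Y_{1t},\bm Y_{2t}\in\mathcal{F}_{t-1}$ while $\eta_t$ is independent of $\mathcal{F}_{t-1}$ with $E(\eta_t)=0$ and $E(\eta_t^2)=1$, the conditional expectation given $\mathcal{F}_{t-1}$ is zero, and its unconditional second moment equals $\Omega_1<\infty$ by part (i). I would then apply the Cram\'er-Wold device together with the martingale CLT for stationary ergodic sequences (e.g., Billingsley) to the sum $(n-p)^{-1/2}\sum_{t=p+1}^n \partial\ell_t^G(\bm\theta_0)/\partial\bm\theta$, and finish with Slutsky's theorem using $\sqrt{n/(n-p)}\to 1$ to replace the normalization and obtain $\sqrt{n}\,\partial L_n^G(\bm\theta_0)/\partial\bm\theta\to_{\mathcal L}N(\bm 0,\Omega_1)$.

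There is no genuinely hard step here; the computation is largely bookkeeping once the explicit score is written down. The only subtle point is the positive definiteness of $\Omega_1$, which would be more delicate if $D$ were only positive semi-definite, but this is ruled out by the hypothesis in Theorem \ref{thm1G-QMLE-asymptotics}. All moment conditions needed for the MDS-CLT are supplied by Assumptions \ref{assum2-stationarity} and \ref{assum3G-QMLE-moment and density}(ii).
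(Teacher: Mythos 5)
Your proof is correct and follows essentially the same route as the paper's: almost-sure boundedness of $\bm Y_{1t}$ and $\bm Y_{2t}$ for finiteness, the identification condition \eqref{constant Vector} for the positive definiteness of the blocks of $\Sigma_1$, and the martingale CLT combined with the Cram\'{e}r--Wold device for part (ii). The only cosmetic difference is in the positive definiteness of $\Omega_1$: you factor $\Omega_1=E[M_t D M_t^{\prime}]$ and bound the quadratic form below by $\lambda_{\min}(D)$, whereas the paper completes the square as $E[(\bm x_1^{\prime}\bm Y_{1t}+\kappa_3\bm x_2^{\prime}\bm Y_{2t})^2+(\kappa_4-\kappa_3^2)(\bm x_2^{\prime}\bm Y_{2t})^2]$ and shows $\kappa_4-\kappa_3^2>0$ via Cauchy--Schwarz; both arguments hinge on the same hypothesis that $D$ is positive definite.
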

\begin{proof}
	We first show (i). Recall that $\kappa_3=E\eta_t^3$, $\kappa_4=E\eta_t^4-1$, 
	\[
	\Sigma_1 = \diag \left \{E(\bm Y_{1t}\bm Y_{1t}^{\prime}),2E(\bm Y_{2t}\bm Y_{2t}^{\prime})\right \} \; \text{and} \;
	\Omega_1 = \begin{pmatrix} E(\bm Y_{1t}\bm Y_{1t}^{\prime}) & \kappa_3 E(\bm Y_{1t}\bm Y_{2t}^{\prime}) \\ \kappa_3 E(\bm Y_{2t}\bm Y_{1t}^{\prime}) & \kappa_4 E(\bm Y_{2t}\bm Y_{2t}^{\prime}) \end{pmatrix}.
	\]
	By Assumptions \ref{assum1-compactness}--\ref{assum2-stationarity}, for some constant $C$, we have
	\[
	||E(\bm Y_{1t} \bm Y_{1t}^{\prime})||<C, \; 
	||E(\bm Y_{1t} \bm Y_{2t}^{\prime})||<C \; \text{and} \; 
	||E(\bm Y_{2t} \bm Y_{2t}^\prime)||<C.
	\]
	Thus, $\Sigma_1$ is finite. Since $E(\eta_t^4)<\infty$ by Assumption \ref{assum3G-QMLE-moment and density}(ii), we have $\kappa_1,\kappa_2<\infty$, then $\Omega_1$ is also finite. 
	
	Let $\bm x = (\bm x^{\prime}_1,\bm x^{\prime}_2)^{\prime}$, where $\bm x_1\in\mathbb{R}^p$ and $\bm x_2\in\mathbb{R}^{p+1}$ are arbitrary non-zero constant vectors. It follows that
	\begin{align}
		\bm x^{\prime}\Omega_1 \bm x 
		&= E\left\lbrace(\bm x^{\prime}_1\bm Y_{1t})^2+\kappa_4(\bm x^{\prime}_2\bm Y_{2t})^2+2\kappa_3\bm x^{\prime}_2\bm Y_{2t}\bm Y^{\prime}_{1t}\bm x_1\right\rbrace \notag\\
		&= E\left\lbrace(\bm x^{\prime}_1\bm Y_{1t}+\kappa_3\bm x^{\prime}_2\bm Y_{2t})^2+(\kappa_4-\kappa_3^2)(\bm x^{\prime}_2\bm Y_{2t})^2\right\rbrace. \label{positive Omega}
	\end{align}
	By Cauchy-Schwarz inequality and Assumption \ref{assum3G-QMLE-moment and density}(i), we have $\kappa_3^2=\left[\cov(\eta_t,\eta^2_t)\right]^2\leq \var(\eta_t)\var(\eta^2_t)=\kappa_4$, and the equality holds when $P(\eta_t^2-c\eta_t=1)=1$ for any $c\in \mathbb{R}$, which is equivalent to $\det(D)=0$. 
	Since $D$ is positive definite, we have $\kappa_4-\kappa_3^2>0$ and thus $\bm x^{\prime}\Omega_1 \bm x >0$, i.e. $\Omega_1$ is positive definite. Moreover, by \eqref{constant Vector}, it can be verified that
	\begin{align*}
		\bm x^{\prime}\Sigma_1 \bm x 
		&= E\left\lbrace(\bm x^{\prime}_1\bm Y_{1t})^2+2(\bm x^{\prime}_2\bm Y_{2t})^2\right\rbrace>0.
	\end{align*}
	As a result, $\Sigma_1$ is positive definite. Hence, (i) holds. 
	
	Note that $\Omega_1= E\left[\partial \ell_{t}\left(\bm \theta_{0}\right)/\partial \bm \theta \partial \ell_{t}\left(\bm \theta_{0}\right)/\partial \bm \theta^{\prime}\right]$. By the Martingale Central Limit Theorem and the Cram\'{e}r-Wold device, we can show that (ii) holds. 
\end{proof}

\begin{proof}[Proof of Theorem \ref{thm1G-QMLE-asymptotics}]
	By Lemma \ref{three o_p}(i) and Lemma \ref{maxmum at true parm}, we have established all the conditions for consistency in Theorem 4.1.1 in \cite{Amemiya1985}, and hence $\widetilde{\bm\theta}_n \rightarrow_{p} \bm\theta_{0}$ as $n\to\infty$. 
	
	By Lemma \ref{three o_p}(iii), for any $\bm\theta=\bm\theta_0+o_p(1)$, we have $\partial^2L_n^G(\bm\theta)/\partial\bm\theta\partial\bm\theta^{\prime}=\Sigma_1+o_p(1)$.  
	By Taylor's expansion and the consistency of $\widetilde{\bm\theta}_n$, then we have
	\begin{equation}\label{QMLE representation}
		\sqrt{n}(\widetilde{\bm\theta}_n-\bm\theta_0) =-\Sigma_1^{-1}\dfrac{1}{\sqrt{n}}\sum_{t=p+1}^{n}\dfrac{\partial \ell^G_t(\bm \theta_0)}{\partial \bm \theta} +o_p(1).
	\end{equation}
	This together with Lemma \ref{MCLT}, we have established all the conditions of Theorem 4.1.3 in \cite{Amemiya1985}, and hence the asymptotic normality follows. 
\end{proof}

\subsection{Proofs of Theorem \ref{thmACF} and Remark \ref{remark-PT-of-GQMLE}}

Recall that the error function is defined as $\eta_{t}(\bm\theta)=\varepsilon_t(\bm \alpha)/h(\bm \delta)$, $\kappa_1=E(\eta_t)$, $\tau_1=E[\sgn(\eta_t)]$, $\tau_2=E(|\eta_t|)$,  $\sigma_1^{2}=\var(\eta_t)$ and $\sigma_2^2=\var(|\eta_t|)$. 
Note that $\eta_{t}=\eta_{t}(\bm \theta_{0})$, $\widetilde{\eta}_{t}=\eta_{t}(\widetilde{\bm\theta}_n)$ and $\widehat{\eta}_{t}=\eta_{t}(\widehat{\bm\theta}_n)$. 
Let $\breve{\bm\theta}_n\in\Theta$ be a $\sqrt{n}-$estimator of $\bm\theta_0$ and have the asymptotic property $\sqrt{n}(\breve{\bm\theta}_n-\bm\theta_0)=-D\sum_{t=1}^n\bm G_t+o_p(1)$, and the residual ACF and absolute residual ACF of model \eqref{LDAR} fitted by $\breve{\bm \theta}_{n}$ can be defined as
\begin{align*}
	&\breve{\rho}_{k}=\dfrac{\sum_{t=p+k+1}^{n}(\breve{\eta}_{t}-\breve{\eta}_1)(\breve{\eta}_{t-k}-\breve{\eta}_1)}{\sum_{t=p+1}^{n}(\breve{\eta}_{t}-\breve{\eta}_1)^{2}} \;\text{and}\; \breve{\gamma}_{k}=\dfrac{\sum_{t=p+k+1}^{n}(|\breve{\eta}_{t}|-\breve{\eta}_2)(|\breve{\eta}_{t-k}|-\breve{\eta}_2)}{\sum_{t=p+1}^{n}(|\breve{\eta}_{t}|-\breve{\eta}_2)^{2}},
\end{align*}
where $\breve{\eta}_1=(n-p)^{-1}\sum_{t=p+1}^{n}\breve{\eta}_t$ and $\breve{\eta}_2=(n-p)^{-1}\sum_{t=p+1}^{n}|\breve{\eta}_t|$. 

We introduce the following lemma to show Theorem \ref{thmACF} and Remark \ref{remark-PT-of-GQMLE} in a unified framework.

\begin{lemma}\label{lemma-PT}
	Denote $\breve{\bm\rho}=(\breve{\rho}_1,\ldots,\breve{\rho}_M)^{\prime}$ and $\breve{\bm\gamma}=(\breve{\gamma}_1,\ldots,\breve{\gamma}_M)^{\prime}$. 
	For any $\breve{\bm\theta}_n\in\Theta$, if the model \eqref{LDAR} is correctly specified and $\sqrt{n}(\breve{\bm\theta}_n-\bm\theta_0)=-D\sum_{t=1}^n\bm G_t+o_p(1)$ holds, then
	\begin{equation*}
		\sqrt{n}(\breve{\bm\rho}^{\prime},\breve{\bm\gamma}^{\prime})^{\prime}\rightarrow_{\mathcal{L}} N\left(0, V G V^{\prime}\right) \;\; \text{as} \;\; n\rightarrow \infty,
	\end{equation*}
	where $G=E(\bm v_{t} \bm v_{t}^{\prime})$ with
	\begin{align*}
		\bm v_t=&\left[(\eta_t-\kappa_1)(\eta_{t-1}-\kappa_1)/\sigma_1^2,\ldots,(\eta_{t}-\kappa_1)(\eta_{t-M}-\kappa_1)/\sigma_1^2,\right.\\
		&\left.(|\eta_{t}|-\tau_2)(|\eta_{t-1}|-\tau_2)/{\sigma}_2^{2},\ldots,(|\eta_{t}|-\tau_2)(|\eta_{t-M}|-\tau_2)/{\sigma}_2^{2},-\bm G_t^\prime D^\prime\right],
	\end{align*}
	and $V=\left(\begin{array}{ccc}
		I_{M} & 0 & U_{\rho}/\sigma_1^2 \\
		0 & I_{M} & U_{\gamma}/{\sigma}_2^{2}
	\end{array}\right)$, $U_{\rho}=(\bm U_{\rho 1}^{\prime},\ldots,\bm U^{\prime}_{\rho M})^\prime$, $U_{\gamma}=(\bm U^{\prime}_{\gamma 1},\ldots,\bm U^{\prime}_{\gamma M})^{\prime}$, and
	$\bm U_{\rho k}=-\left(E\left[ (\eta_{t-k}-\kappa_1) \bm {Y}_{1t}^{\prime}\right], \kappa_1 E\left[(\eta_{t-k}-\kappa_1)\bm Y_{2t}^\prime\right]\right)$, $\bm U_{\gamma k}=-\left(\tau_1E[(|\eta_{t-k}|-\tau_2) \bm{Y}_{1t}^{\prime}], \tau_2E[(|\eta_{t-k}|-\tau_2) \bm{Y}_{2t}^{\prime}]\right)$ for $1\leq k\leq M$.
\end{lemma}

\begin{proof}[Proof of Lemma \ref{lemma-PT}]
	When model \eqref{LDAR} is correctly specified, by the ergodic theorem and the dominated convergence theorem, it can be shown that, 
	as $n\to\infty$, 
	\begin{align*}
		\breve{\eta}_1&=\dfrac{1}{n-p}\sum_{t=p+1}^{n}\breve{\eta}_t \to_p \kappa_1 \quad \text {and} \quad 
		\dfrac{1}{n}\sum_{t=p+1}^{n}\left(\breve{\eta}_{t}-\breve{\eta}_1\right)^{2} \to_p \sigma_1^2, \\
		\breve{\eta}_2&=\dfrac{1}{n-p}\sum_{t=p+1}^{n}|\breve{\eta}_t| \to_p \tau_2 \quad \text {and} \quad 
		\dfrac{1}{n}\sum_{t=p+1}^{n}(|\breve{\eta}_{t}|-\breve{\eta}_2)^2 \to_p \sigma_2^2,
	\end{align*}
	Therefore, it follows that 
	\begin{equation}\label{relationship between hat_acf and tilde_acf}
		\sqrt{n}\left(\breve{\bm \rho}^{\prime}, \breve{\bm \gamma}^{\prime}\right)^{\prime}=\sqrt{n}\left(\check{\bm \rho}^{\prime}, \check{\bm\gamma}^{\prime}\right)^{\prime}+o_{p}(1),
	\end{equation}
	where $\check{\bm\rho}=(\check{\rho}_{1},\ldots,\check{\rho}_{M})^\prime$ and $\check{\bm \gamma}=(\check{\gamma}_{1},\ldots,\check{\gamma}_{M})^\prime$ with $\check{\rho}_{k}=(n\sigma_1^2)^{-1}\sum_{t=p+k+1}^{n}(\breve{\eta}_{t}-\kappa_1)(\breve{\eta}_{t-k}-\kappa_1)$ and $\check{\gamma}_{k}=(n\sigma^2_2)^{-1}\sum_{t=p+k+1}^{n}(|\breve{\eta}_{t}|-\tau_2)(|\breve{\eta}_{t-k}|-\tau_2)$. 
	It can be shown that
	\begin{align*}
		\sigma^2_1\sqrt{n}\check{\rho}_{k}
		=& \dfrac{1}{\sqrt{n}}\sum_{t=p+k+1}^{n}(\eta_{t}-\kappa_1)(\eta_{t-k}-\kappa_1)+ \dfrac{1}{\sqrt{n}}\sum_{t=p+k+1}^{n}A_{1nt}\nonumber\\
		&+\dfrac{1}{\sqrt{n}}\sum_{t=p+k+1}^{n}A_{2nt} + \dfrac{1}{\sqrt{n}}\sum_{t=p+k+1}^{n}A_{3nt},  \\
		\sigma^2_2\sqrt{n}\check{\gamma}_{k}
		=& \dfrac{1}{\sqrt{n}}\sum_{t=p+k+1}^{n}(|\eta_t|-\tau_2)(|\eta_{t-k}|-\tau_2) + \dfrac{1}{\sqrt{n}}\sum_{t=p+k+1}^{n}B_{1nt}\nonumber\\
		&+\dfrac{1}{\sqrt{n}}\sum_{t=p+k+1}^{n}B_{2nt} + \dfrac{1}{\sqrt{n}}\sum_{t=p+k+1}^{n}B_{3nt},
	\end{align*}
	where
	\begin{align*}
		A_{1nt}&=\left(\breve{\eta}_{t}-\eta_{t}\right)\left(\eta_{t-k}-\kappa_1\right), \quad\;
		B_{1nt}=\left(|\breve{\eta}_{t}|-|\eta_t|\right)\left(|\eta_{t-k}|-\tau_2\right); \\
		A_{2nt}&=\left(\eta_{t}-\kappa_1\right)\left(\breve{\eta}_{t-k}-\eta_{t-k}\right),\; B_{2nt}=\left(|\eta_{t}|-1\right)(|\breve{\eta}_{t-k}|-|\eta_{t-k}|);\\
		A_{3nt}&=\left(\breve{\eta}_{t}-\eta_{t}\right)(\breve{\eta}_{t-k}-\eta_{t-k}), \;\; B_{3nt}=\left(|\breve{\eta}_{t}|-|\eta_{t}|\right)(|\breve{\eta}_{t-k}|-|\eta_{t-k}|).
	\end{align*}
	Since $\sqrt{n}(\breve{\bm\theta}_{n} - \bm\theta_{0})=O_p(1)$,
	moreover, note that $\partial\eta_t(\bm\theta_0)/\partial\bm\theta=(-\bm Y_{1t}^\prime,-\eta_{t}\bm Y_{2t}^\prime)^\prime$ by \eqref{Partial_eta_t}. Then by Taylor's expansion and the ergodic theorem, we have 
	\begin{align}\label{A1n}
		\dfrac{1}{\sqrt{n}}\sum_{t=p+k+1}^{n}A_{1nt}=&\dfrac{1}{n}\sum_{t=p+k+1}^{n}(\eta_{t-k}-\kappa_1)\dfrac{\partial\eta_t(\bm \theta_0)}{\partial \bm \theta^\prime}\sqrt{n}(\breve{\bm \theta}_n-\bm \theta_0)+o_p(1) \nonumber\\
		=& -\dfrac{1}{n}\sum_{t=p+k+1}^{n}(\eta_{t-k}-\kappa_1)\bm Y_{1t}^\prime \sqrt{n}(\breve{\bm\alpha}_n-\bm\alpha_{0}) \nonumber\\
		&-\dfrac{1}{n}\sum_{t=p+k+1}^{n}(\eta_{t-k}-\kappa_1)\eta_t\bm Y_{2t}^\prime\sqrt{n}(\breve{\bm\delta}_n-\bm\delta_{0})+o_p(1) \nonumber\\
		=&\bm U_{\rho k}\sqrt{n}(\breve{\bm \theta}_n-\bm \theta_0)+o_p(1),
	\end{align}
	where $\bm U_{\rho k}=-\left(E\left[ (\eta_{t-k}-\kappa_1) \bm {Y}_{1t}^{\prime}\right], \kappa_1 E\left[(\eta_{t-k}-\kappa_1)\bm Y_{2t}^\prime\right]\right)$. 
	Similarly, we can show that
	\begin{align}\label{A2nA3n}
		\dfrac{1}{\sqrt{n}}\sum_{t=p+k+1}^{n}A_{2nt}&=\dfrac{1}{n}\sum_{t=p+k+1}^{n}(\eta_{t}-\kappa_1) \sqrt{n}\left[\eta_{t-k}(\breve{\bm \theta}_{n})-\eta_{t-k}(\bm\theta_0)\right] = o_{p}(1), \; \text{and} \nonumber\\
		\dfrac{1}{\sqrt{n}}\sum_{t=p+k+1}^{n}A_{3nt}&=\dfrac{1}{n} \sum_{t=p+k+1}^{n}(\breve{\eta}_{t}-\eta_{t})\sqrt{n}\left[\eta_{t-k}(\breve{\bm \theta}_{n})-\eta_{t-k}(\bm\theta_0)\right] =o_{p}(1). 
	\end{align}
	As a result, we have
	\begin{equation}\label{rhok rep}
		\sqrt{n}\check{\rho}_{k} = \dfrac{1}{\sqrt{n}}\sum_{t=p+k+1}^{n}\dfrac{(\eta_{t}-\kappa_1)(\eta_{t-k}-\kappa_1)}{\sigma_1^2} + \dfrac{\bm U_{\rho k}}{\sigma_1^2}\sqrt{n}(\breve{\bm \theta}_n-\bm \theta_0)+o_p(1).
	\end{equation}
	We next consider $\check{\gamma}_{k}$. By the identity \eqref{a20}, we have
	\[|\breve{\eta}_{t}|-|\eta_t|=(\breve{\eta}_{t}-\eta_t)\sgn(\eta_t)+2\int_{0}^{-(\breve{\eta}_{t}-\eta_t)}[I(\eta_t\leq s)-I(\eta_t\leq 0)]ds.\]  
	where $\sgn(\eta_t)=I(\eta_t>0)-I(\eta_t<0)$. 
	Then similar to the proof of \eqref{A1n}, we can show that 
	\[
	\dfrac{1}{\sqrt{n}}\sum_{t=p+k+1}^{n}B_{1nt}=\bm U_{\gamma k}\sqrt{n}(\widehat{\bm \theta}_n-\bm \theta_{0})+o_p(1),
	\]
	where $\bm U_{\gamma k}=-\left(\tau_1E[(|\eta_{t-k}|-\tau_2) \bm{Y}_{1t}^{\prime}], \tau_2E[(|\eta_{t-k}|-\tau_2) \bm{Y}_{2t}^{\prime}]\right)$.  
	Similar to the proof of \eqref{A2nA3n}, we can verify that
	$n^{-1/2}\sum_{t=p+k+1}^{n}B_{2nt}=o_p(1)$ and $n^{-1/2}\sum_{t=p+k+1}^{n}B_{3nt}=o_p(1)$. 
	Therefore, we have
	\begin{equation}\label{gammak rep}
		\sqrt{n}\check{\gamma}_{k} = \dfrac{1}{\sqrt{n}}\sum_{t=p+k+1}^{n}\dfrac{(|\eta_t|-\tau_2)(|\eta_{t-k}|-\tau_2)}{\sigma^2_2} + \dfrac{\bm U_{\gamma k}}{\sigma^2_2}\sqrt{n}(\breve{\bm \theta}_n-\bm \theta_0)+o_p(1).
	\end{equation} 
	Combining \eqref{relationship between hat_acf and tilde_acf}, \eqref{rhok rep} and \eqref{gammak rep}, by Theorem \ref{thm2EQMLE-normality}, we can obtain that
	\begin{equation}\label{relationship between hat_acf and true_acf}
		\sqrt{n}(\breve{\bm \rho}^\prime,\breve{\bm \gamma}^\prime)^\prime=V\dfrac{1}{\sqrt{n}}\sum_{t=p+k+1}^{n}\bm v_t+o_p(1).
	\end{equation}
		Then by the martingale central limit theorem and the Cram\'{e}r--Wold device, we have
		\begin{equation*}
			\sqrt{n}(\breve{\bm\rho}^{\prime},\breve{\bm\gamma}^{\prime})^{\prime}\rightarrow_{\mathcal{L}} N\left(0, V G V^{\prime}\right) \;\; \text{as} \;\; n\rightarrow \infty,
		\end{equation*}
		where $G=E(\bm v_{t} \bm v_{t}^{\prime})$. 
		Hence, the proof of this theorem is accomplished. 
	\end{proof}
	
	\begin{proof}[Proof of Theorem \ref{thmACF}]
		Since $\sqrt{n}(\widehat{\bm\theta}_{n} - \bm\theta_{0})=-\Sigma^{-1}/2\sum_{t=1}^n\bm G_t$ with $\bm G_t=(\bm Y_{1t}^\prime[I(\eta_t\leq 0)-I(\eta_t>0)],\bm Y_{2t}^\prime (1-|\eta_t|))^\prime$ by Theorem \ref{thm2EQMLE-normality} and $\tau_1=0$, $\tau_2=1$ by Assumption \ref{assum3EQMLE-moment and density}(i), we have established all the conditions for Lemma \ref{lemma-PT}, and hence Theorem \ref{thmACF} is established.
	\end{proof}

	\begin{proof}[Proof of Remark \ref{remark-PT-of-GQMLE}]
		Since $\sqrt{n}(\widetilde{\bm\theta}_{n} - \bm\theta_{0})=-\Sigma_1^{-1}\dfrac{1}{\sqrt{n}}\sum_{t=p+1}^{n}\dfrac{\partial \ell^G_t(\bm \theta_0)}{\partial \bm \theta} +o_p(1)$ by \eqref{QMLE representation} and $\kappa_1=0$, $\sigma_1^2=1$ by Assumption \ref{assum3G-QMLE-moment and density}(i), we have established all the conditions for Lemma \ref{lemma-PT}, and hence Remark \ref{remark-PT-of-GQMLE} is verified.
	\end{proof}

\newpage
\bibliography{QMELEforLDAR}
\clearpage
\begin{table}
	\caption{\label{tableE-QMLE} Biases ($\times 10$), ESDs, and ASDs of the E-QMLE $\widehat{\bm \theta}_n$ and G-QMLE $\widetilde{\bm \theta}_n$ when the innovations follow the normal, Laplace, or Student's $t_{3}$ distribution.}
	\begin{center}
		
		\begin{tabular}{crrrrrrrrrrrr}
			\hline\hline
			&&\multicolumn{3}{c}{Normal}&&\multicolumn{3}{c}{Laplace}&&\multicolumn{3}{c}{$t_3$}\\
			\cline{3-5}\cline{7-9}\cline{11-13}
			&$n$&\multicolumn{1}{c}{Bias}&\multicolumn{1}{c}{ESD}&\multicolumn{1}{c}{ASD}&&\multicolumn{1}{c}{Bias}&\multicolumn{1}{c}{ESD}&\multicolumn{1}{c}{ASD}&&\multicolumn{1}{c}{Bias}&\multicolumn{1}{c}{ESD}&\multicolumn{1}{c}{ASD}\\
			\hline
			&&\multicolumn{11}{c}{E-QMLE}\\
			\hline
			$\alpha$  &	$500$	&	-0.002	&   0.065	&	0.069	
			&&	-0.013	&   0.044	&	0.051	
			&&	-0.039	&	0.052	&	0.056\\
			&	$1000$	&	 0.006	&	0.047	&	0.048	
			&&	-0.009	&	0.031	&	0.036	
			&&	-0.017	&   0.037	&   0.039\\
			$\omega$  &	$500$	&	 0.075	&	0.071	&	0.072	
			&&	 0.063	&	0.088	&	0.088	
			&&	 0.018	&	0.102	&	0.100\\
			&	$1000$	&	 0.034	&	0.050	&	0.051	
			&&	 0.036	&	0.061	&	0.062	
			&&	 0.010	&   0.073	&	0.072\\
			$\beta$   &	$500$	&	-0.059	&	0.045	&	0.045	
			&&  -0.073  &	0.056	&	0.056	
			&&  -0.027  &	0.070	&	0.066\\
			&	$1000$	&	-0.021	&	0.032	&	0.032	
			&&  -0.036  &	0.039	&	0.039	
			&&  -0.018	&	0.049	&	0.047\\
			\hline
			&&\multicolumn{11}{c}{G-QMLE}\\
			\hline
			$\alpha$  &	$500$	&	 0.000	&   0.050	&	0.051	
			&&	-0.001	&   0.055	&	0.054	
			&&	-0.002	&	0.061	&	0.058\\
			&	$1000$	&	 0.000	&	0.036	&	0.036	
			&&	 0.000	&	0.039	&	0.038	
			&&	 0.000	&   0.042	&   0.042\\
			$\omega$  &	$500$	&	 0.048	&	0.064	&	0.063	
			&&	 0.037	&	0.091	&	0.089	
			&&	-0.279	&	0.201	&	0.150\\
			&	$1000$	&	 0.026	&	0.044	&	0.045	
			&&	 0.035	&	0.064	&	0.064	
			&&	-0.169	&   0.162	&	0.125\\
			$\beta$   &	$500$	&	-0.060	&	0.052	&	0.051	
			&&  -0.106  &	0.082	&	0.081	
			&&   0.074  &	0.250	&	0.157\\
			&	$1000$	&	-0.025	&	0.037	&	0.036	
			&&  -0.065  &	0.057	&	0.058	
			&&   0.022	&	0.177	&	0.129\\
			\hline
		\end{tabular}
	\end{center}
\end{table}

\begin{figure}[htp]
	\centering
	\includegraphics[width=5.5in]{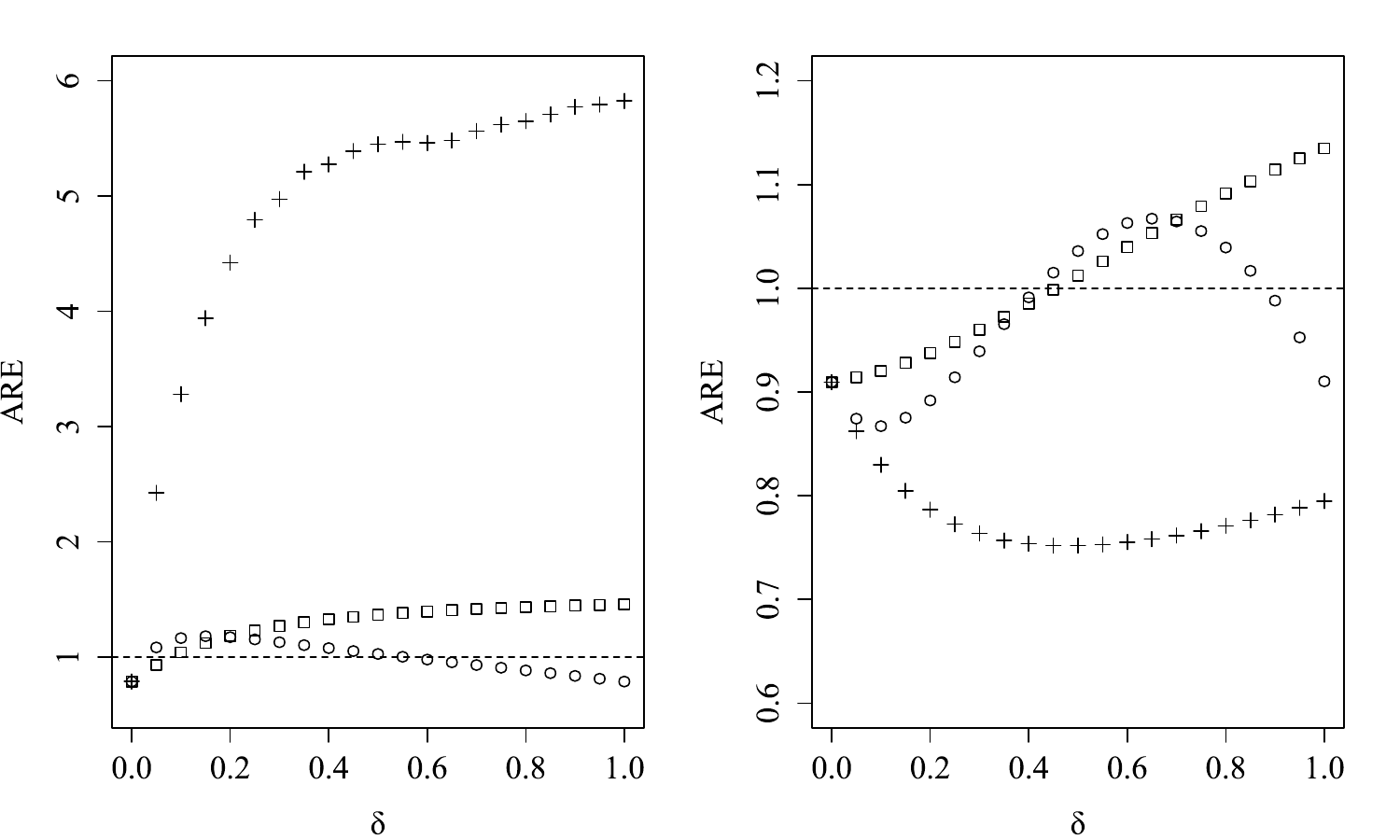}
	\caption{\label{fig_efficiency} 
		The $\text{ARE}(\widehat{\bm\theta}_n,\widetilde{\bm\theta}_n)$ (left panel) and $\text{ARE}(\widehat{\bm\theta}_n^{\star},\check{\bm\theta}^{\star opt}_n)$ (right panel) for $\delta=k/20~(k=0,1,\ldots,20)$, where $m(x)$ is the pdf of the $N(0,6)$ ($\circ$), standard Laplace ($\square$), or $t_3$ ($+$) distribution.}
\end{figure}

\begin{table}
	\caption{\label{tableBIC} Percentages of underfitted, correctly selected, and overfitted models by the BIC and BIC$^G$ when the innovations follow the normal, Laplace, or Student's $t_{3}$ distribution.}
	\begin{center}
		\begin{tabular}{ccccccccccccc}
			\hline\hline
			&&\multicolumn{3}{c}{Normal}&&\multicolumn{3}{c}{Laplace}&&\multicolumn{3}{c}{$t_{3}$}\\
			\cline{3-5}\cline{7-9}\cline{11-13} 
			& $n$ & Under & Exact & Over && Under & Exact & Over && Under & Exact & Over\\\hline
			BIC    &$300$   &40.7 & 58.9 & 0.4 && 12.8 & 87.1 & 0.1 &  & 15.3 & 84.4 & 0.3\\
			&$500$   & 15.1 & 84.5 & 0.4 && 2.1 & 97.4 & 0.5 &  & 3.0 & 96.1 & 0.9\\
			&$1000$   & 1.0 & 99.0 & 0.0 && 0.0 & 100.0 & 0.0 &  & 0.0 & 99.8 & 0.2\\
			BIC$^G$   &$300$   &  33.5  &  66.5  &  0.0 &&  32.2 &  66.6  &  1.2 
			&&  32.7 &  58.2  &  9.1\\
			&$500$   &  10.1  &  89.8  &  0.1 &&  9.8  &  88.4  &  1.8 
			&&  11.8 &  78.1  &  10.1\\
			&$1000$  &  0.0  &  100.0  &  0.0 &&  0.3  &  98.7  &  1.0  
			&&  0.4  &  83.9  &  15.7\\ 			 
			\hline
		\end{tabular}
	\end{center}
\end{table}

\begin{table}
	\caption{\label{tablechecking} Rejection rates of the tests $Q(6)$ and $Q^G(6)$ at the 5\% significance level, where the innovations follow the normal, Laplace, or Student's $t_{3}$ distribution.}	
	\begin{center}	
		\begin{tabular}{lcccccccccc}
			\hline\hline
			&&&\multicolumn{2}{c}{Normal}&&\multicolumn{2}{c}{Laplace}&&\multicolumn{2}{c}{$t_3$}\\\cline{4-5}\cline{7-8}\cline{10-11}
			&$c_1$ & $c_2$ &500&1000&&500&1000&&500&1000\\
			\hline			
			Q	
			&0.0 & 0.0 &0.041 & 0.042 &  & 0.049 & 0.051 &  & 0.049 & 0.048\\
			&0.1 & 0.0 &0.194 & 0.443 &  & 0.216 & 0.462 &  & 0.207 & 0.469\\
			&0.3 & 0.0 &0.996 & 1.000 &  & 0.995 & 1.000 &  & 0.977 & 0.998\\
			&0.0 & 0.1 &0.123 & 0.340 &  & 0.099 & 0.228 &  & 0.076 & 0.173\\
			&0.0 & 0.3 &0.898 & 1.000 &  & 0.633 & 0.975 &  & 0.487 & 0.874\\
			Q$^G$   
			&0.0 & 0.0 &0.058 & 0.047 &  & 0.076 & 0.053 &  & 0.155 & 0.133\\
			&0.1 & 0.0 &0.228 & 0.449 &  & 0.252 & 0.485 &  & 0.340 & 0.532\\
			&0.3 & 0.0 &0.999 & 1.000 &  & 0.999 & 1.000 &  & 0.991 & 1.000\\
			&0.0 & 0.1 &0.106 & 0.194 &  & 0.091 & 0.132 &  & 0.151 & 0.167\\
			&0.0 & 0.3 &0.733 & 0.993 &  & 0.366 & 0.837 &  & 0.338 & 0.630\\
			\hline
		\end{tabular}
	\end{center}
\end{table}			

\begin{figure}[htp]
	\centering
	\includegraphics[width=5.5in]{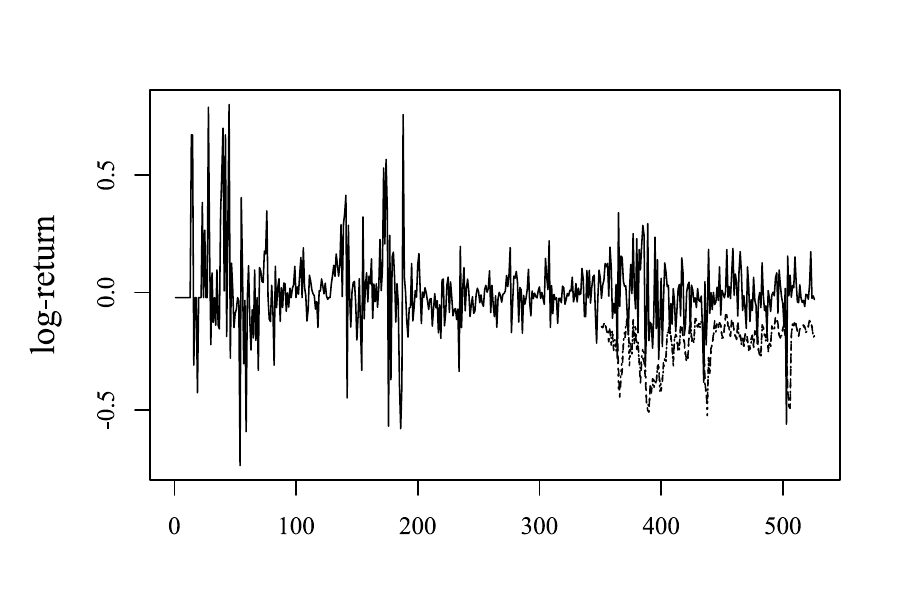}
	\caption{\label{pred}  Time plot for centered weekly log returns in percentage (black line) of BTC from July 18, 2010, to August 16, 2020, with one-week negative VaR forecasts at the level of 5\% based on the G-QMLE (dotted line) and the E-QMLE (dashed line) from March 26, 2017, to August 16, 2020.}
\end{figure}

\begin{figure}[htp]
	\centering
	\includegraphics[width=5.5in]{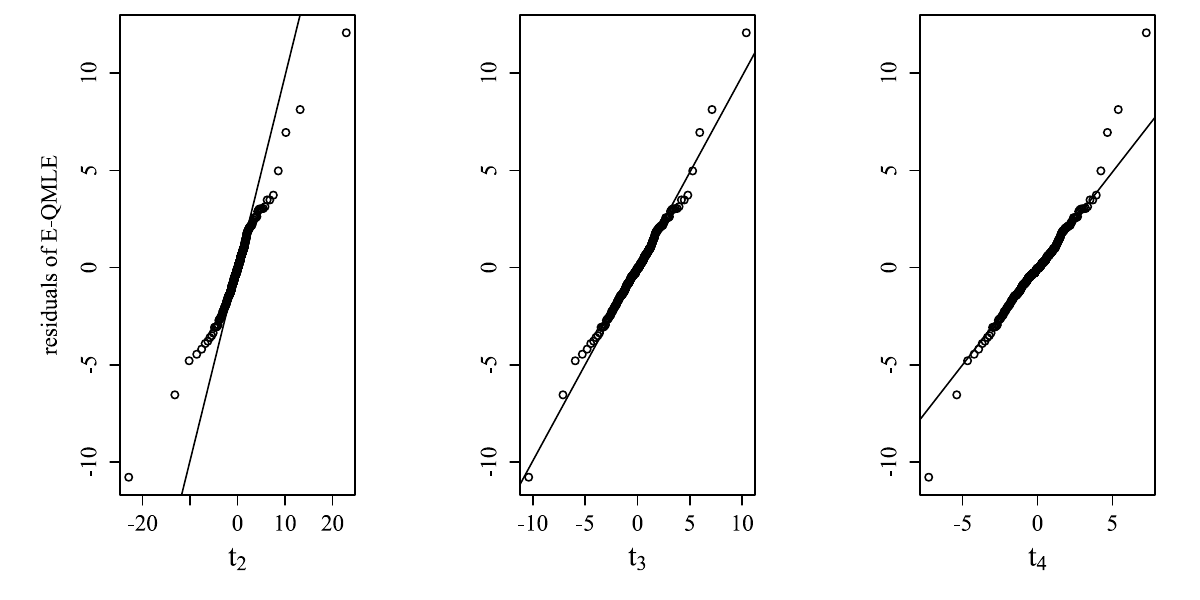}
	\caption{\label{QQplots-E-QMLE}  QQ plots of the residuals $\{\widehat{\eta}_t\}$ against the Student's $t_{2}$ (left panel), $t_{3}$ (middle panel), and $t_{4}$ (right panel) distributions.}
\end{figure}

\begin{figure}[htp]
	\centering
	\includegraphics[width=5.5in]{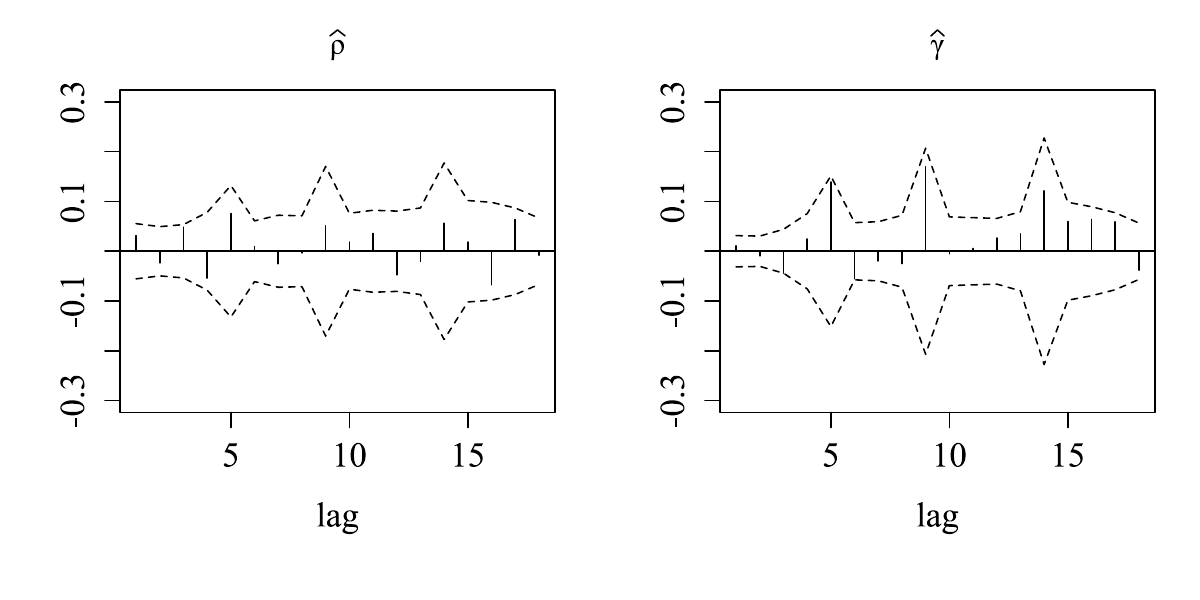} 
	\caption{\label{acf-E-QMLE}  Residual ACF plots for $\widehat{\rho}_l$ (left panel) and $\widehat{\gamma}_l$ (right panel), where the dashed lines are the corresponding 95\% pointwise confidence intervals.}
\end{figure}

\begin{table}
	\caption{\label{forcast} Empirical coverage rates (\%) and $p$-values of three VaR backtests  at the 5\%, 10\%, 90\%, and 95\% conditional quantiles. M1, M2, and M3 represent the linear DAR model fitted using the E-QMLE, G-QMLE, and DWQRE, respectively; M4 and M5 represent the DAR model fitted using the E-QMLE and G-QMLE, respectively; and M6 and M7 represent the AR-GARCH model fitted using the E-QMLE and G-QMLE, respectively. The ECRs closest to the nominal level are marked in bold.}
	\begin{center}
		\resizebox{\textwidth}{!}{
		\begin{tabular}{crrrrrrrrrrrrrrrrrrr}
			\hline\hline 
			& \multicolumn{4}{c}{$\tau=5\%$} && \multicolumn{4}{c}{$\tau=10\%$} && \multicolumn{4}{c}{$\tau=90\%$} && \multicolumn{4}{c}{$\tau=95\%$}\\
			\cline{2-5}\cline{7-10}\cline{12-15}\cline{17-20}
			&\multicolumn{1}{c}{ECR}&\multicolumn{1}{c}{UC}&\multicolumn{1}{c}{CC}&\multicolumn{1}{c}{DQ}&&\multicolumn{1}{c}{ECR}&\multicolumn{1}{c}{UC}&\multicolumn{1}{c}{CC}&\multicolumn{1}{c}{DQ}&&\multicolumn{1}{c}{ECR}&\multicolumn{1}{c}{UC}&\multicolumn{1}{c}{CC}&\multicolumn{1}{c}{DQ}&&\multicolumn{1}{c}{ECR}&\multicolumn{1}{c}{UC}&\multicolumn{1}{c}{CC}&\multicolumn{1}{c}{DQ}\\
			\hline			    
			M1  & 5.68 & 0.68 & 0.79   & 0.13
			&& \textbf{10.23} & 0.92    & 0.99    & 0.48
			&& 88.07 & 0.41    & 0.34    & 0.19
			&& \textbf{94.89} & 0.95    & 0.61     & 0.37\\
			M2  & 6.25  & 0.46   & 0.71    & 0.22
			&& 10.80 & 0.73    & 0.94    & 0.31
			&& 88.64 & 0.55    & 0.48    & 0.29
			&& 93.75 & 0.46    & 0.37     & 0.09\\ 
			M3  & 6.25 & 0.46    & 0.71      & 0.07       
			&& 9.09 & 0.68    & 0.82      & 0.45
			&& \textbf{89.77} &0.92   & 0.99      & 0.06
			&& 94.32 &0.68   & 0.50      & 0.07\\	
			M4    & 5.68 &0.68    &0.50       &0.71
			&& 11.93 &0.41  & 0.67      & 0.54
			&& 86.93 &0.20   & 0.33      & 0.67
			&& 94.32 &0.68   & 0.50     & 0.73\\
			M5    & 6.25 &0.46    &0.37       &0.48
			&& 9.65 &0.88    & 0.83      & 0.48
			&& 87.50&0.29    & 0.23      & 0.56 
			&& 93.75&0.46    & 0.37      & 0.50\\			
			M6    & 6.82 & 0.29    &0.56       &0.14
			&& 9.66  &0.88   & 0.83      & 0.50
			&& 89.20 &0.73   & 0.09      & 0.34
			&& 95.45 &0.79   & 0.66      & 0.71\\
			M7    & \textbf{5.11} &0.95    &0.76       &0.26
			&& 9.66 &0.87    & 0.83      & 0.54
			&& \textbf{89.77}&0.92    & 0.75      & 0.87
			&& 96.02&0.52    & 0.61      & 0.75\\
			\hline
		\end{tabular}}
	\end{center}
\end{table}

\begin{table}[htp]
	\caption{\label{table-QMLE-supp} Biases ($\times 10$), ESDs, and ASDs of the E-QMLE $\widehat{\bm \theta}_n$ and G-QMLE $\widetilde{\bm \theta}_n$ when the innovations follow the $t_2$, $t_5$, or $st_{3,-1.5}$ distribution.}
	\begin{center}
		\begin{tabular}{crrrrrrrrrrrr}
			\hline\hline
			&&\multicolumn{3}{c}{$t_2$}&&\multicolumn{3}{c}{$t_5$}&&\multicolumn{3}{c}{$st_{3,-1.5}$}\\
			\cline{3-5}\cline{7-9}\cline{11-13}
			&$n$&\multicolumn{1}{c}{Bias}&\multicolumn{1}{c}{ESD}&\multicolumn{1}{c}{ASD}&&\multicolumn{1}{c}{Bias}&\multicolumn{1}{c}{ESD}&\multicolumn{1}{c}{ASD}&&\multicolumn{1}{c}{Bias}&\multicolumn{1}{c}{ESD}&\multicolumn{1}{c}{ASD}\\
			\hline
			&&\multicolumn{11}{c}{E-QMLE}\\
			\hline
			$\alpha$	&
			$500$	&-0.034&0.045&0.047&&-0.040&0.059&0.061&&-0.004&0.053&0.056\\
			&	$1000$	&-0.022&0.031&0.033&&-0.011&0.042&0.043&&-0.004&0.037&0.039\\
			$\omega$	&	
			$500$		&0.112&0.138&0.144&&0.066&0.078&0.083&&-0.072&0.090&0.094\\
			&	$1000$	&0.057&0.098&0.108&&0.031&0.055&0.059&&-0.075&0.062&0.068\\
			$\beta$	
			&	$500$	&-0.096&0.115&0.100&&-0.060&0.053&0.053&&-0.043&0.070&0.068\\
			&	$1000$	&-0.051&0.079&0.076&&-0.027&0.038&0.038&&-0.038&0.050&0.049\\
			\hline
			&&\multicolumn{11}{c}{G-QMLE}\\
			\hline
			$\alpha$	&	
			$500$	&-0.056&0.073&0.067&&-0.037&0.054&0.054&&-0.008&0.061&0.058\\
			&	$1000$	&-0.023&0.053&0.049&&-0.012&0.039&0.038&&-0.005&0.043&0.042\\
			$\omega$	&
			$500$	&-0.365&0.249&0.222&&0.036&0.091&0.095&&-0.016&0.181&0.157\\
			&	$1000$	&-0.312&0.213&0.202&&0.020&0.068&0.071&&-0.008&0.145&0.135\\
			$\beta$	&
			$500$	&0.428&0.581&0.298&&-0.072&0.090&0.085&&0.000&0.239&0.165\\
			&	$1000$	&0.479&0.569&0.284&&-0.036&0.067&0.063&&0.002&0.217&0.145\\
			\hline
		\end{tabular}
	\end{center}
\end{table}

\begin{table}[htp]
	\caption{\label{tableBIC-supp} Percentages of underfitted, correctly selected, and overfitted models by the BIC and BIC$^G$ when the innovations follow the $t_2$, $t_5$, or $st_{3,-1.5}$ distribution.}
	\begin{center}
		\begin{tabular}{ccccccccccccc}
			\hline\hline
			&&\multicolumn{3}{c}{$t_2$}&&\multicolumn{3}{c}{$t_5$}&&\multicolumn{3}{c}{$st_{3,-1.5}$}\\
			\cline{3-5}\cline{7-9}\cline{11-13} 
			&$n$&Under&Exact&Over&&Under&Exact&Over&&Under&Exact&Over\\\hline
			BIC	
			&$300$ &7.2&89.7&3.1&&30.9&68.9&0.2&&14.1&85.0&0.9\\
			&$500$ &0.5&95.8&3.7&&7.2&92.7&0.1&&1.5&98.1&0.4\\
			&$1000$&0.0&97.0&3.0&&0.2&99.8&0.0&&0.0&99.6&0.4\\
			BIC$^G$
			&$300$ &29.2&48.6&22.2&&36.6&61.2&2.2&&36.6&54.4&9.0\\
			&$500$ &10.7&59.5&29.8&&11.2&87.5&1.3&&14.6&73.4&12.0\\
			&$1000$&1.3&65.3&33.4&&0.4&98.0&1.6&&0.7&83.2&16.1\\ 		 
			\hline
		\end{tabular}
	\end{center}
\end{table}

\begin{table}[htp]
	\caption{\label{tablechecking-supp} Rejection rates of the tests $Q(6)$ and $Q^G(6)$ at the 5\% significance level, when the innovations follow the $t_2$, $t_5$, or $st_{3,-1.5}$ distribution.}	
	\begin{center}	
		\begin{tabular}{lcccccccccc}
			\hline\hline
			&&&\multicolumn{2}{c}{$t_2$}&&\multicolumn{2}{c}{$t_5$}&&\multicolumn{2}{c}{$st_{3,-1.5}$}\\\cline{4-5}\cline{7-8}\cline{10-11}
			&$c_1$ & $c_2$ &500&1000&&500&1000&&500&1000\\
			\hline			
			Q	
			&0.0&0.0&0.077&0.090&&0.057&0.064&&0.056&0.066\\
			&0.1&0.0&0.253&0.478&&0.203&0.475&&0.278&0.534\\
			&0.3&0.0&0.910&0.985&&0.996&1.000&&0.991&1.000\\
			&0.0&0.1&0.090&0.152&&0.096&0.227&&0.088&0.164\\
			&0.0&0.3&0.284&0.593&&0.734&0.978&&0.409&0.792\\
			Q$^G$
			&0.0&0.0&0.151&0.168&&0.048&0.062&&0.062&0.081\\
			&0.1&0.0&0.367&0.607&&0.182&0.460&&0.298&0.553\\
			&0.3&0.0&0.937&0.994&&0.998&1.000&&0.994&1.000\\
			&0.0&0.1&0.143&0.177&&0.064&0.126&&0.077&0.132\\
			&0.0&0.3&0.231&0.367&&0.420&0.871&&0.270&0.582\\
			\hline
		\end{tabular}
	\end{center}
\end{table}		

\begin{figure}[htp]
	\centering
	\includegraphics[angle=0,width=5.5in]{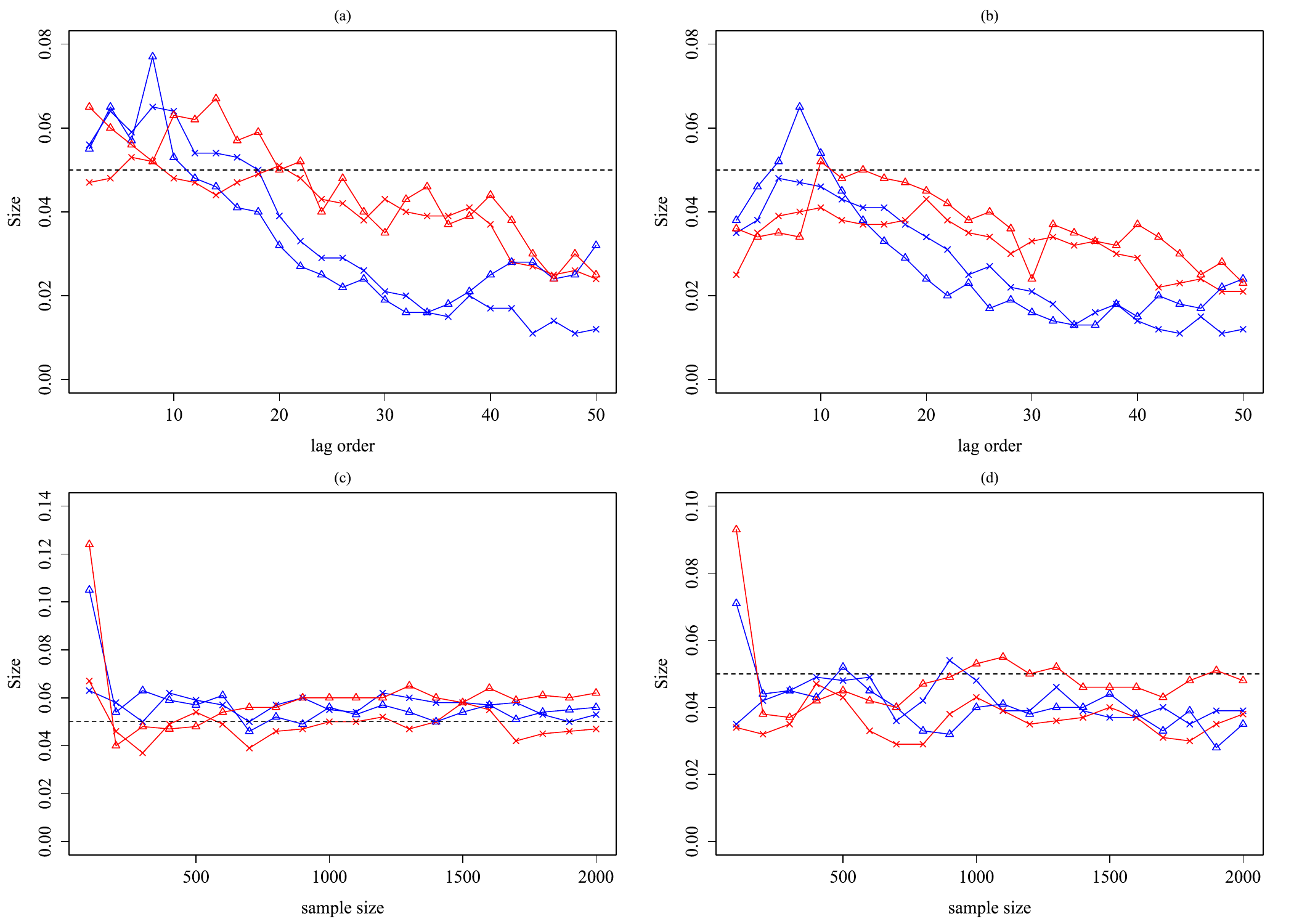}
	\caption{\label{fig_size} The empirical size of $Q(M)$ (a) and $Q^G(M)$ (b) with respect to the lag order $M$ at the significance level 0.05, where the line with color blue or red presents $N=500$ or 2000, the symbol $\times$ or $\triangle$ represents $\eta_t$ being a standard normal or Laplace random variable; 
		the empirical size of $Q(M)$ (c) and $Q^G(M)$ (d) with respect to the sample size $n$ at the significance level 0.05, where the line with color blue or red presents $M=6$ or 12, the symbol $\times$ or $\triangle$ represents $\eta_t$ being a standard normal or Laplace random variable.}
\end{figure}
\begin{figure}[htp]
	\centering
	\includegraphics[angle=0,width=5.5in]{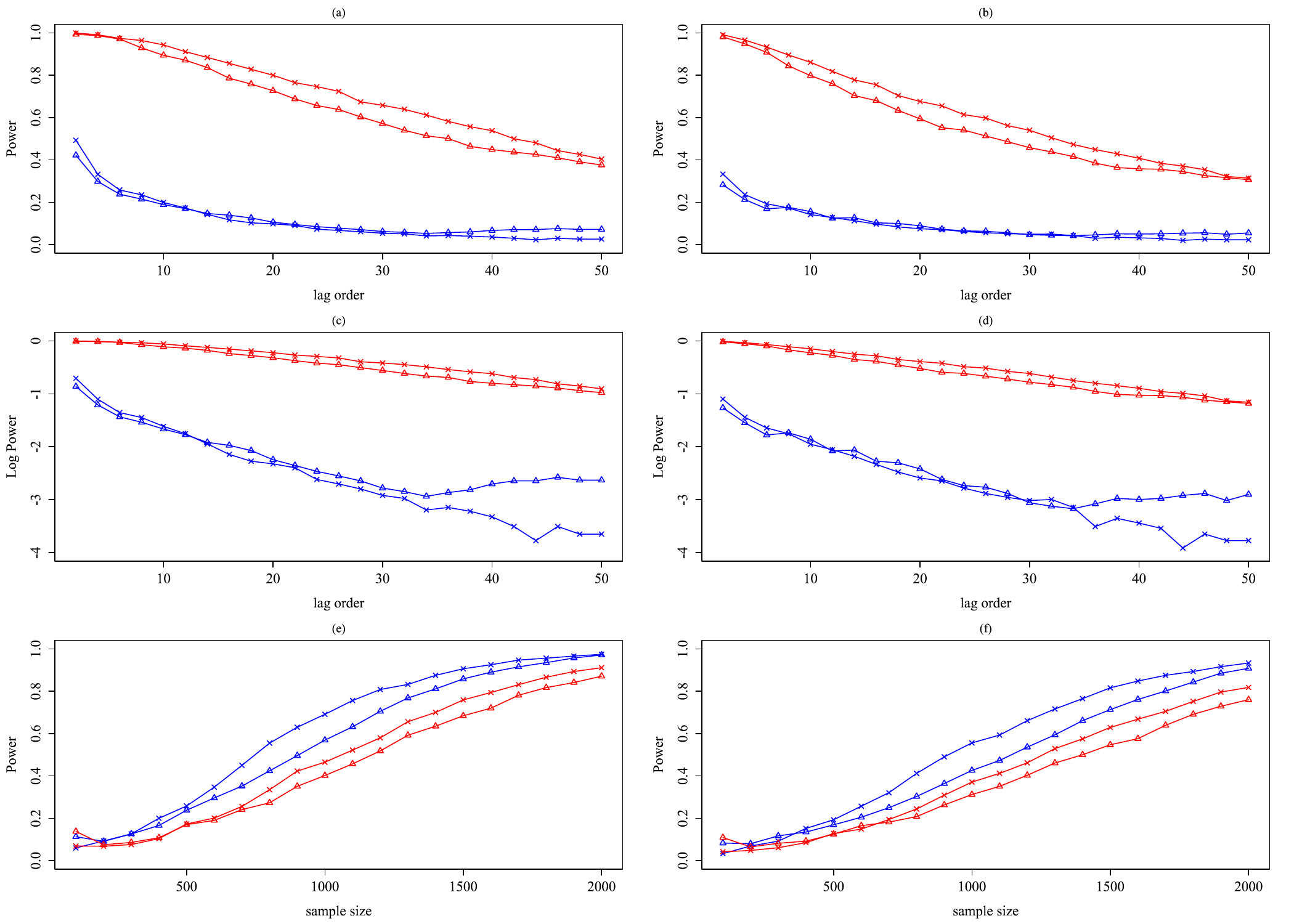}
	\caption{\label{fig_power} The empirical power of $Q(M)$ (a) and $Q^G(M)$ (b) with respect to the lag order $M$ at the significance level 0.05, where the line with color blue or red presents $N=500$ or 2000, the symbol $\times$ or $\triangle$ represents $\eta_t$ being a standard normal or Laplace random variable; (c) and (d) correspond to the logarithmic power version of (a) and (b), respectively; the empirical power of $Q(M)$ (e) and $Q^G(M)$ (f) with respect to the sample size $n$ at the significance level 0.05, where the line with color blue or red presents $M=6$ or 12, the symbol $\times$ or $\triangle$ represents $\eta_t$ being a standard normal or Laplace random variable.}
\end{figure}

\begin{table}[htp]
	\caption{\label{tablechecking-M} Rejection rates of the tests $Q(M)$ and $Q^G(M)$ at the 5\% significance level with $M$=12 or 18, where the innovations follow the normal, Laplace, or Student's $t_{3}$ distribution.}	
	\begin{center}	
		\begin{tabular}{lcccccccccc}
			\hline\hline
			&&&\multicolumn{2}{c}{Normal}&&\multicolumn{2}{c}{Laplace}&&\multicolumn{2}{c}{$t_3$}\\\cline{4-5}\cline{7-8}\cline{10-11}
			&$c_1$ & $c_2$ &500&1000&&500&1000&&500&1000\\
			\hline
			&&&\multicolumn{8}{c}{$M=12$}\\
			\hline		
			Q	
			&0.0 & 0.0 &0.031 & 0.038 &  & 0.042 & 0.055 &  & 0.048 & 0.043\\
			&0.1 & 0.0 &0.140 & 0.334 &  & 0.154 & 0.337 &  & 0.144 & 0.361\\
			&0.3 & 0.0 &0.964 & 1.000 &  & 0.973 & 1.000 &  & 0.946 & 0.999\\
			&0.0 & 0.1 &0.092 & 0.219 &  & 0.073 & 0.129 &  & 0.062 & 0.125\\
			&0.0 & 0.3 &0.712 & 0.997 &  & 0.426 & 0.900 &  & 0.301 & 0.742\\
			Q$^G$   
			&0.0 & 0.0 &0.041 & 0.032 &  & 0.064 & 0.060 &  & 0.130 & 0.106\\
			&0.1 & 0.0 &0.143 & 0.299 &  & 0.179 & 0.383 &  & 0.275 & 0.422\\
			&0.3 & 0.0 &0.981 & 1.000 &  & 0.986 & 1.000 &  & 0.951 & 1.000\\
			&0.0 & 0.1 &0.078 & 0.119 &  & 0.082 & 0.105 &  & 0.130 & 0.143\\
			&0.0 & 0.3 &0.506 & 0.946 &  & 0.230 & 0.626 &  & 0.241 & 0.440\\
			\hline
			&&&\multicolumn{8}{c}{$M=18$}\\
			\hline
			Q
			&0.0 & 0.0 &0.028 & 0.040 &  & 0.031 & 0.043 &  & 0.041 & 0.047\\
			&0.1 & 0.0 &0.096 & 0.244 &  & 0.117 & 0.254 &  & 0.113 & 0.283\\
			&0.3 & 0.0 &0.914 & 1.000 &  & 0.907 & 1.000 &  & 0.855 & 0.999\\
			&0.0 & 0.1 &0.060 & 0.155 &  & 0.063 & 0.108 &  & 0.057 & 0.120\\
			&0.0 & 0.3 &0.510 & 0.988 &  & 0.268 & 0.793 &  & 0.232 & 0.634\\
			Q$^G$  
			&0.0 & 0.0 &0.043 & 0.029 &  & 0.055 & 0.043 &  & 0.124 & 0.091\\
			&0.1 & 0.0 &0.097 & 0.211 &  & 0.136 & 0.282 &  & 0.213 & 0.350\\
			&0.3 & 0.0 &0.924 & 1.000 &  & 0.941 & 1.000 &  & 0.891 & 0.999\\
			&0.0 & 0.1 &0.044 & 0.084 &  & 0.062 & 0.092 &  & 0.129 & 0.121\\
			&0.0 & 0.3 &0.321 & 0.874 &  & 0.158 & 0.493 &  & 0.185 & 0.339\\
			\hline
		\end{tabular}
	\end{center}
\end{table}	

\begin{figure}[htp]
	\centering
	\includegraphics[width=5.5in]{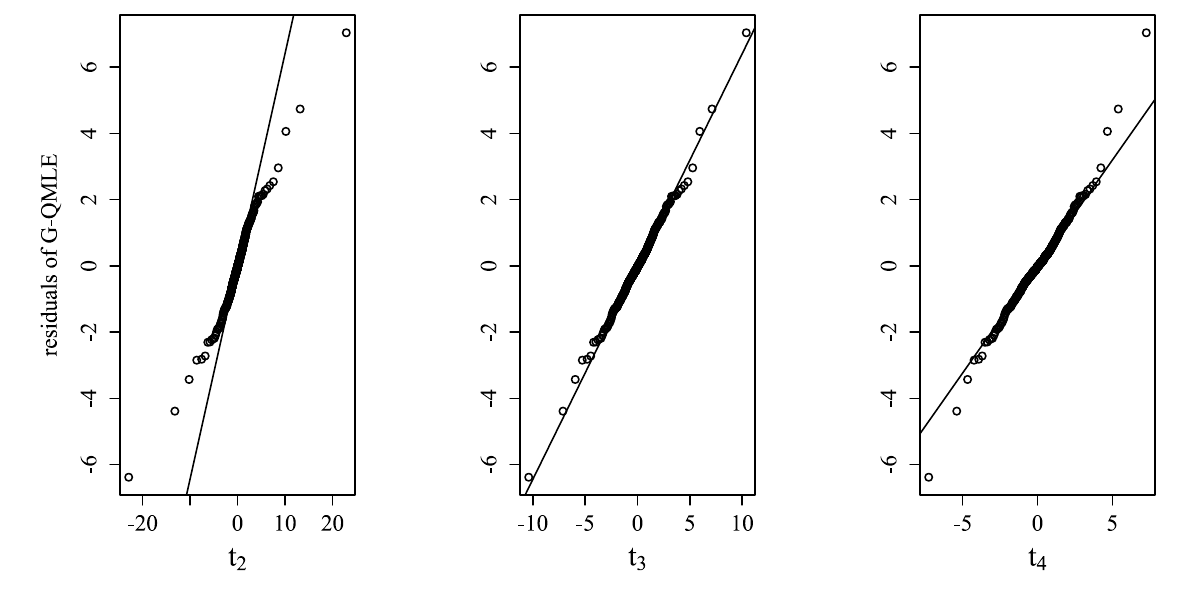}
	\caption{\label{QQplots-G-QMLE}  QQ plots of the residuals $\{\widetilde{\eta}_t\}$ against the Student's $t_{2}$ (left panel), $t_{3}$ (middle panel), and $t_{4}$ (right panel) distributions.}
\end{figure}

\begin{figure}[htp]
	\centering
	\includegraphics[width=5.5in]{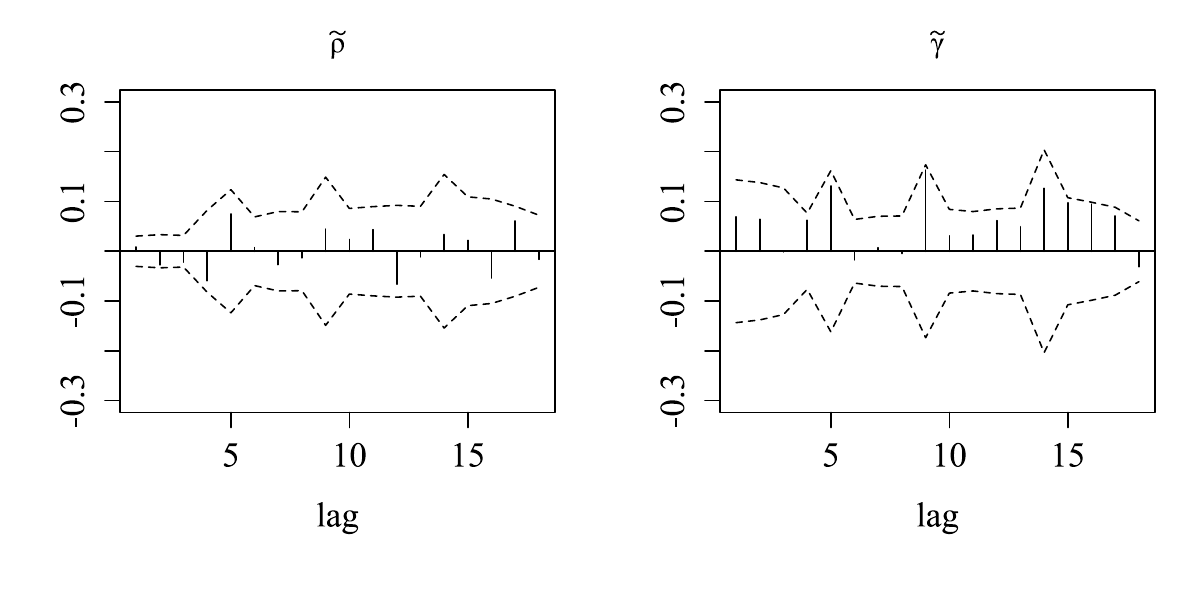}
	\caption{\label{acf-G-QMLE}  Residual ACF plots for $\widetilde{\rho}_l$ (left panel) and $\widetilde{\gamma}_l$ (right panel), where the dashed lines are the corresponding 95\% pointwise confidence intervals.}
\end{figure}

\end{document}